\documentclass[a4paper,10pt,numbers=noenddot,twoside=on,headinclude=true,footinclude=false,headsepline=true]{scrartcl}

\usepackage[utf8]{inputenc}
\usepackage[T1]{fontenc}
\usepackage{textcomp}
\usepackage[english]{babel}
\usepackage{color}
\usepackage{amssymb}
\usepackage{amsmath}
\usepackage{amsfonts}
\usepackage{amsopn}
\usepackage{amsbsy}
\usepackage{dsfont}
\usepackage[overload,ntheorem]{empheq}
\usepackage{enumerate}
\usepackage[plainpages=false,pdfpagelabels,breaklinks=true,pdfborderstyle={/S/U/W 1.2}]{hyperref}
\usepackage{graphics}
\usepackage{units}
\usepackage[cal=boondoxo,bb=boondox]{mathalfa}

\pagestyle{headings}
\setcounter{secnumdepth}{3}
\setcounter{tocdepth}{2}
\numberwithin{equation}{section} 
\numberwithin{table}{section} 
\numberwithin{figure}{section} 

\usepackage[amsmath,thmmarks,thref,hyperref]{ntheorem}

\theoremstyle{plain}
\newtheorem{theorem}{Theorem}[section]

\newtheorem{lemma}[theorem]{Lemma}
\newtheorem{conjecture}[theorem]{Conjecture}
\newtheorem{corollary}[theorem]{Corollary}
\newtheorem{proposition}[theorem]{Proposition}
\newtheorem{assumption}[theorem]{Assumption}

\theorembodyfont{\upshape}
\newtheorem{remark}[theorem]{Remark}

\theoremstyle{nonumberplain}

\theoremsymbol{\ensuremath{_\Box}}
\newtheorem{proof}{Proof}

\usepackage[style=alphabetic,
	citestyle=alphabetic,
	firstinits=true,
	backend=biber,
	hyperref=auto,
	sorting=nyt,
	maxcitenames=3,
	maxbibnames=99,
	minnames=3,
	arxiv=pdf,
	url=false,
	isbn=false,
	dateabbrev=true,
	datezeros=true,
	bibencoding=utf8]{biblatex}
\ExecuteBibliographyOptions{doi=false}
\newbibmacro{string+doi}[1]{%
	\iffieldundef{doi}{#1}{\href{http://dx.doi.org/\thefield{doi}}{#1}}}
\DeclareFieldFormat
	{title}{\usebibmacro{string+doi}{\mkbibemph{#1}}\isdot}
\DeclareFieldFormat
	[article,inbook,incollection,inproceedings,patent,thesis,unpublished]
	{title}{\usebibmacro{string+doi}{\mkbibemph{#1}}\isdot}
\DeclareFieldFormat
	[article,inbook,incollection,inproceedings,patent,thesis,unpublished]
	{pages}{#1}
\DeclareFieldFormat
	[article]
	{journaltitle}{#1\isdot}
\DeclareFieldFormat
	[article]
	{volume}{\textbf{#1}}
\DefineBibliographyStrings{english}{pages={}}
%
\DeclareBibliographyDriver{article}{%
	\usebibmacro{bibindex}%
	\usebibmacro{begentry}%
	\usebibmacro{author/translator+others}%
	\setunit{\labelnamepunct}\newblock
	\usebibmacro{title}%
	\usebibmacro{journal+issuetitle}%
	\setunit*{\addcomma\space}
	\printfield{pages}
	\setunit*{\addcomma\space}
	\printfield{year}
	\usebibmacro{finentry}%
}

\newbibmacro*{journal+issuetitle}{%
	\usebibmacro{journal}%
	\setunit*{\addspace}%
	\iffieldundef{series}
	{}
	{\newunit
		\printfield{series}%
		\setunit{\addspace}}%
	\iffieldundef{volume}
		{}
		{\printfield{volume}%
		\setunit{\addspace}}%
	\setunit*{\addcolon\space}%
	\usebibmacro{issue}%
	\newunit}

\usepackage[scaled]{beramono}
\usepackage{helvet}
\usepackage[charter]{mathdesign} 
\SetMathAlphabet{\mathcal}{normal}{OMS}{cmsy}{m}{n} 
\SetMathAlphabet{\mathcal}{bold}{OMS}{cmsy}{m}{n} 
\linespread{1.05}

\providecommand{\ie}{i.~e.~}
\providecommand{\eg}{e.~g.~}
\providecommand{\cf}{cf.~}


\providecommand{\R}{\mathbb{R}}

\providecommand{\C}{\mathbb{C}}
\renewcommand{\C}{\mathbb{C}}

\providecommand{\T}{\mathbb{T}}
\renewcommand{\T}{\mathbb{T}}
\providecommand{\N}{\mathbb{N}}
\providecommand{\Z}{\mathbb{Z}}
\providecommand{\ii}{\mathrm{i}}
\providecommand{\e}{\mathrm{e}}
\renewcommand{\Re}{\mathrm{Re} \,}

\providecommand{\Hil}{\mathcal{H}}

\providecommand{\eps}{\varepsilon}
\providecommand{\Cont}{\mathcal{C}}

\providecommand{\ker}{\mathrm{ker} \, }
\providecommand{\ran}{\mathrm{ran} \, }

\providecommand{\ker}{\mathrm{ker} \,}

\providecommand{\dd}{\mathrm{d}}
\providecommand{\id}{\mathds{1}}

\providecommand{\Fourier}{\mathcal{F}}


\providecommand{\abs}[1]{\left \lvert #1 \right \rvert}

\providecommand{\babs}[1]{\bigl \lvert #1 \bigr \rvert}

\providecommand{\norm}[1]{\left \lVert #1 \right \rVert}
\providecommand{\snorm}[1]{\lVert #1 \rVert}
\providecommand{\bnorm}[1]{\bigl \lVert #1 \bigr \rVert}

\providecommand{\scpro}[2]{\left \langle #1 , #2 \right \rangle}
\providecommand{\sscpro}[2]{\langle #1 , #2 \rangle}
\providecommand{\bscpro}[2]{\bigl \langle #1 , #2 \bigr \rangle}
\providecommand{\Bscpro}[2]{\Bigl \langle #1 , #2 \Bigr \rangle}

\providecommand{\sket}[1]{\vert #1 \rangle}
\providecommand{\bket}[1]{\bigl \vert #1 \bigr \rangle}

\providecommand{\bra}[1]{\left \langle #1 \right \vert}
\providecommand{\sbra}[1]{\langle #1 \vert}
\providecommand{\bbra}[1]{\bigl \langle #1 \bigr \vert}


\providecommand{\Maux}{M^{\mathrm{aux}}}
\providecommand{\Div}{\mathrm{Div}}
\providecommand{\Rot}{\mathrm{Rot}}

\providecommand{\domain}{\mathcal{D}}
\providecommand{\WS}{\mathbb{M}}
\providecommand{\BZ}{\WS^*}
\providecommand{\specrel}{\sigma_{\mathrm{rel}}}


\bibliography{/Users/max/Library/texmf/tex/latex/max/bibliography}

\title{Equivalence of Electric, Magnetic and Electromagnetic Chern Numbers for Topological Photonic Crystals}
\author{Giuseppe De Nittis${}^1$ \& Max Lein${}^2$}

\begin{document}

\maketitle
\vspace{-9mm}
\begin{center}
	${}^1$ Facultad de Matemáticas \& Instituto de Física, 
	Pontificia Universidad Católica de Chile \linebreak
	Avenida Vicuña Mackenna 4860, 
	Santiago, 
	Chile \linebreak
	{\footnotesize \href{mailto:gidenittis@mat.uc.cl}{\texttt{gidenittis@mat.uc.cl}}}
	\medskip
	\\
	${}^2$ Advanced Institute of Materials Research,  
	Tohoku University \linebreak
	2-1-1 Katahira, Aoba-ku, 
	Sendai, 980-8577, 
	Japan \linebreak
	{\footnotesize \href{mailto:max.lein@tohoku.ac.jp}{\texttt{max.lein@tohoku.ac.jp}}}
\end{center}
\begin{abstract}
	Haldane \cite{Raghu_Haldane:quantum_Hall_effect_photonic_crystals:2008} predicted an analog of the Integer Quantum Hall Effect in gyrotropic photonic crystals, where the net number of electromagnetic edge modes moving left-to-right is given by a bulk Chern number. His prediction — topological effects are \emph{bona fide} wave and not quantum phenomena — has been confirmed in a number of experiments \cite{Wang_et_al:unidirectional_backscattering_photonic_crystal:2009}. However, theoretical physicists have tacitly used three different definitions for the bulk Chern numbers that enter the bulk-edge correspondence — on the basis of \emph{electromagnetic} Bloch functions, \emph{electric} Bloch functions and \emph{magnetic} Bloch functions. We use vector bundle theoretic arguments to prove that in media such as those considered by Haldane these three potentially different Chern numbers necessarily agree with one another, and consequently, any one of them can be used in Haldane's photonic bulk-edge correspondence. 
\end{abstract}
\noindent{\scriptsize \textbf{Key words:} Maxwell equations, Maxwell operator, Schrödinger equation, quantum-wave analogies, topological insulators}
\\ 
{\scriptsize \textbf{MSC 2010:} 35P99, 35Q60, 35Q61, 78A48, 81Q10}
\\
{\scriptsize \textbf{PACS 2010:} 41.20.Jb, 42.70.Qs, 78.20.-e}

\newpage
\tableofcontents

\section{Introduction} 
\label{intro}
Photonic crystals are to electromagnetic waves what crystalline solids are to an electron, and therefore it is not surprising that periodic electromagnetic media exhibit analogs to various phenomena from condensed matter physics. One particularly intriguing example of such a quantum-wave analogy is the \emph{“Quantum Hall Effect for light”} in media with broken time-reversal symmetry. While nowadays whole sub communities are working on topological phenomena in classical waves, Raghu's and Haldane's seminal idea \cite{Raghu_Haldane:quantum_Hall_effect_photonic_crystals:2008} was initially met with a lot of skepticism. That is until researchers at MIT confirmed Haldane's prediction in experiment \cite{Wang_et_al:unidirectional_backscattering_photonic_crystal:2009,Ozawa_et_al:review_topological_photonics:2018}. 

Just like in case of the well-known Integer Quantum Hall Effect \cite{von_Klitzing_Dorda_Pepper:quantum_hall_effect:1980,von_Klitzing:quantum_Hall_effect:2004,Thouless_Kohmoto_Nightingale_Den_Nijs:quantized_hall_conductance:1982}, Haldane proposed to explain the existence and robustness of unidirectional, back-scattering-free edge modes that are at the heart of this phenomenon by means of a \emph{bulk-edge correspondence} \cite{Hatsugai:Chern_number_edge_states:1993,Hatsugai:edge_states_Riemann_surface:1993}: 
\begin{conjecture}[Raghu and Haldane's Photonic Bulk-Edge Correspondence \cite{Raghu_Haldane:quantum_Hall_effect_photonic_crystals:2008}]\label{intro:conjecture:photonic_bulk_edge_correspondence}
	In a two-dimensional photonic crystals with boundary the difference of the number of left- and right-moving boundary modes in bulk band gaps is a topologically protected quantity; it equals the bulk Chern number associated to the positive frequency bands below the bulk band gap. 
\end{conjecture}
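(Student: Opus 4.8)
\textbf{A strategy for proving Conjecture~\ref{intro:conjecture:photonic_bulk_edge_correspondence}.}
The plan is to deduce the photonic statement from the by-now standard bulk-edge correspondence for self-adjoint Schrödinger-type operators, via the quantum--wave dictionary in which the source-free Maxwell equations take the Schrödinger form $\ii \partial_t \psi = M \psi$ with $M = W^{-1} \Rot$ the first-order, self-adjoint Maxwell operator on the weighted space $L^2_W(\R^2, \C^6)$, the weight $W = \mathrm{diag}(\eps, \mu)$ carrying the (possibly gyrotropic, matrix-valued) material tensors. First I would fix the analytic framework: split $L^2_W$ into the longitudinal sector $\ker \Rot = \overline{\ran \Grad}$ and its orthogonal complement, and work with the positive-frequency part $M_+$ of $M$ on the transversal fields. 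Haldane's media are precisely those for which $M_+$ has a spectral gap $\Delta \subset (0,\infty)$ bounded away from the flat zero-frequency band of longitudinal modes; the ``positive frequency bands below the gap'' in the statement are then genuine Bloch bands over the Brillouin torus $\BZ$, and the spectral projection onto them defines a smooth vector bundle whose first Chern number is the $c_1$ of the conjecture.

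Next I would set up the edge problem: the half-space Maxwell operator $M^\sharp$ on $L^2_W(\R \times \R_{\geq 0}, \C^6)$ equipped with a physically correct, self-adjoint boundary condition (\eg perfect electric conductor, $n \times E = 0$ on the edge) chosen so that $M^\sharp$ has no essential spectrum inside $\Delta$. Bloch-decomposing along the direction tangent to the edge produces a family $k \mapsto M^\sharp(k)$, and the signed edge-mode count of the statement --- left-movers minus right-movers --- is exactly the spectral flow of this family through any fiducial frequency $\omega_0 \in \Delta$.

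The technical heart is the identity $\mathrm{sf} = c_1$, and here I see two routes. The $C^*$-algebraic route follows Kellendonk--Richter--Schulz-Baldes: construct the bulk algebra of Maxwell observables, its half-space (Toeplitz) extension and the associated edge algebra, note that the spectral projection $\chi_{(-\infty,\omega_0)}(M_+)$ and the edge data define $K$-theory classes exchanged by the exponential map of the six-term exact sequence, and identify the two pairings with $c_1$ and $\mathrm{sf}$ respectively. The analytic route instead compares the bulk transverse conductance --- a Chern--Kubo expression for $M_+$ which equals $c_1$ by a Thouless--Kohmoto--Nightingale--den Nijs-type argument adapted to the Maxwell bundle --- with the edge conductance read off from the boundary energy transport, in the spirit of Elgart--Graf--Schenker, and shows the two coincide.

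The step I expect to be the main obstacle is that $M$ is not elliptic: it is first order with an infinite-dimensional kernel, $0$ sits at the lower edge of the admissible spectral window, and bulk-versus-edge resolvent differences fail to be trace class under naive hypotheses. Controlling this requires (i) a uniform gap estimate separating $\Delta$ from the longitudinal modes, yielding an exponentially localized, differentiable projection onto the bands below $\Delta$ (a Combes--Thomas bound for $W^{-1}\Rot$); (ii) a clean excision of the longitudinal sector, which one must verify contributes neither to the edge count nor to $c_1$; and (iii) enough control of the boundary condition to rule out spurious edge spectrum and to make the half-space algebra a genuine Toeplitz extension of the bulk algebra. With those inputs in hand the index step runs in parallel with the electronic case. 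A proof in full generality is still open; the scheme above is expected to work for the gapped gyrotropic media in Haldane's setting, and --- by the main theorem of this paper --- the resulting $c_1$ is the same whether it is computed from electromagnetic, electric, or magnetic Bloch functions, so the correspondence itself does not hinge on that choice.
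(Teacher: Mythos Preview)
The paper does \emph{not} prove Conjecture~\ref{intro:conjecture:photonic_bulk_edge_correspondence}; it is stated precisely as a conjecture and remains open throughout. The authors say explicitly that their ``ultimate goal is to furnish a proof to Haldane's conjecture'' and that their earlier and present work ``systematically work towards this goal''; the bulk-edge correspondence itself is deferred to a future paper \cite{DeNittis_Lein:photonic_bulk_edge_correspondence:2018}. So there is no proof in the paper against which to compare your proposal.

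What the paper actually proves is Theorem~\ref{intro:thm:Chern_numbers_agree}: that the electromagnetic, electric and magnetic Chern numbers coincide. You correctly invoke this at the end of your sketch, and that is the right way to use the present paper --- as one ingredient ensuring the bulk invariant in the conjecture is well-defined independently of which Bloch functions one computes it from. Your outline (recast Maxwell in Schr\"odinger form, set up bulk and half-space operators, identify the edge count with spectral flow, and run a Kellendonk--Richter--Schulz-Baldes or Elgart--Graf--Schenker argument) is a reasonable programme and is in the spirit of what the authors announce they intend to do, but it is a strategy, not a proof, and you candidly say as much. The obstacles you flag --- non-ellipticity of $M$, the infinite-dimensional longitudinal kernel at $\omega=0$, and the need to include the ground-state bands --- are exactly the difficulties the paper itself highlights (see the Remark following Assumption~\ref{periodic:assumption:gap_condition}). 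In short: your proposal is not wrong, but it cannot be graded against a proof that the paper does not contain.
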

Chern numbers are topological invariants that are associated to families of frequency bands, the \emph{relevant bands}, that are separated from the others by a \emph{spectral gap} (the Gap Condition~\ref{periodic:assumption:gap_condition} makes this precise); they cannot change under continuous, gap-preserving transformations, which explains their robustness under rather strong perturbations. 

Our ultimate goal is to furnish a proof to Haldane's conjecture, and our earlier publications \cite{DeNittis_Lein:adiabatic_periodic_Maxwell_PsiDO:2013,DeNittis_Lein:sapt_photonic_crystals:2013,DeNittis_Lein:ray_optics_photonic_crystals:2014,DeNittis_Lein:symmetries_Maxwell:2014,DeNittis_Lein:Schroedinger_formalism_classical_waves:2017,DeNittis_Lein:symmetries_electromagnetism:2017} all systematically work towards this goal. 

This paper addresses one aspect of this endeavor: Maxwell's equations~\eqref{setup:eqn:Maxwell_equations} for the \emph{electromagnetic} field are \emph{first} order in time and space. When bi-anisotropic coupling between electric and magnetic fields is absent, the dynamical Maxwell equations~\eqref{setup:eqn:Maxwell_equations:dynamics} are block-\emph{off}diagonal in $(\mathbf{E},\mathbf{H})$, and it is possible and indeed, sometimes preferable to work with \emph{second}-order wave equations for the \emph{electric} or \emph{magnetic} fields instead (equations~\eqref{setup:eqn:wave_equation_electric} and \eqref{setup:eqn:wave_equation_magnetic}, respectively). That makes the second-order formalism appealing for numerical schemes, because it is inherently more efficient to work with $\C^3$-valued vector fields rather than $\C^6$-valued vector fields. Moreover, for certain geometries (\cf \eg \cite[Section~2.4]{DeNittis_Lein:symmetries_Maxwell:2014}) the second-order formalism allows us to describe all of the physics via two \emph{scalar} equations as opposed to $\C^6$-valued vector fields. In view of this, it is not surprising that in many applications the Chern numbers that enter Haldane's photonic bulk-boundary conjecture are computed on the basis of the electric or magnetic field alone (see \eg \cite{Joannopoulos_Johnson_Winn_Meade:photonic_crystals:2008,Wang_et_al:edge_modes_photonic_crystal:2008,Wang_et_al:unidirectional_backscattering_photonic_crystal:2009}); we will denote those electric and magnetic Chern numbers with $\mathrm{Ch}^E$ and $\mathrm{Ch}^H$, respectively. 

On the other hand, analogies to quantum mechanics are most readily apparent in the \emph{first}-order formalism. Indeed, Haldane's own work uses the \emph{electromagnetic} field to compute Chern numbers $\mathrm{Ch}^{EH}$. And we have argued in \cite[Section~5.2.1]{DeNittis_Lein:symmetries_electromagnetism:2017} that conceptually speaking, this is the right quantity to start with in a first-principles approach. In the same place, we also raised the question that we will answer in the affirmative here: 
\begin{theorem}[Electric, magnetic and electromagnetic bulk Chern numbers agree]\label{intro:thm:Chern_numbers_agree}~\\
	Suppose the material weights which describe the medium satisfy Assumption~\ref{periodic:assumption:periodic_material_weights}, \ie the medium is periodic, lossless, not bi-anisotropic and has positive index. 
	Then the electric, magnetic and electromagnetic Chern numbers associated to any family of frequency bands satisfying the Gap Condition~\ref{periodic:assumption:gap_condition} agree, 
	\begin{align*}
		\mathrm{Ch}^E = \mathrm{Ch}^H = \mathrm{Ch}^{EH}
		. 
	\end{align*}
\end{theorem}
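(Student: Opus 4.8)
The plan is to realize each of the three Chern numbers as the first Chern class of a vector bundle over the Brillouin torus $\BZ \cong \T^2$ assembled from spectral projections of a fibered operator, and then to write down explicit bundle isomorphisms identifying these three bundles. First I would pass to the Bloch--Floquet representation and restrict to the physical (transverse) fields, obtaining self-adjoint fibered operators $M(k)$, $M_E(k)$, $M_H(k)$; after the unitary ``square-root'' transformation of \cite{DeNittis_Lein:Schroedinger_formalism_classical_waves:2017} (which uses that the medium has positive index, so that $\eps^{1/2}$ and $\mu^{1/2}$ are bounded with bounded inverse) one may assume the first-order operator is in Schrödinger form, $\tilde M(k) = \begin{pmatrix} 0 & A(k)^* \\ A(k) & 0 \end{pmatrix}$ with $A(k) = \ii\,\mu^{-1/2}\,\Rot_k\,\eps^{-1/2}$, so that $\tilde M(k)^2 = \tilde M_E(k) \oplus \tilde M_H(k)$ with $\tilde M_E = A^*A \cong M_E$ and $\tilde M_H = AA^* \cong M_H$ unitarily and $k$-independently. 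In particular $\tilde M_E(k)$ and $\tilde M_H(k)$ are isospectral away from zero, and by the Gap Condition the relevant spectral projections $P^E(k)$, $P^H(k)$, $P^{EH}(k)$ --- the last one projecting onto the positive-frequency bands in the relevant family $\mathcal{S}$ --- are real-analytic in $k$, of constant rank, and have ranges inside $(\ker A(k))^\perp$, respectively $(\ker A(k)^*)^\perp$; hence they define Hermitian vector bundles whose first Chern numbers are $\mathrm{Ch}^E$, $\mathrm{Ch}^H$, $\mathrm{Ch}^{EH}$.

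The second step is the polar decomposition $A(k) = V(k)\,\tilde M_E(k)^{1/2}$. Since by the Gap Condition the relevant bands lie in a compact subinterval of $(0,\infty)$, the operator $\tilde M_E(k)^{1/2}$ is boundedly invertible on $\mathrm{ran}\,P^E(k)$ with real-analytic inverse, so on the relevant subspace $V(k) = A(k)\,\tilde M_E(k)^{-1/2}$ is real-analytic in $k$; there $V(k)$ acts as a unitary onto $\mathrm{ran}\,P^H(k)$, it intertwines $\tilde M_E(k)$ and $\tilde M_H(k)$, and therefore $V(k)\,P^E(k) = P^H(k)\,V(k)$. Writing $\mathrm{sgn}\,\tilde M = \begin{pmatrix} 0 & V^* \\ V & 0 \end{pmatrix}$ on the orthogonal complement of $\ker \tilde M$ and using $P^{EH} = \tfrac12(\id + \mathrm{sgn}\,\tilde M)\,(P^E \oplus P^H)$ together with the intertwining relation, a short computation yields
$$ P^{EH}(k) = \tfrac{1}{2} \begin{pmatrix} \id \\ V(k) \end{pmatrix} P^E(k) \begin{pmatrix} \id & V(k)^* \end{pmatrix} . $$

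It follows that $\psi \mapsto \tfrac{1}{\sqrt{2}}\bigl(\psi,\,V(k)\psi\bigr)$ is a fiberwise unitary, real-analytic in $k$, mapping $\mathrm{ran}\,P^E(k)$ bijectively onto $\mathrm{ran}\,P^{EH}(k)$; hence the electric and electromagnetic bundles are isomorphic and $\mathrm{Ch}^E = \mathrm{Ch}^{EH}$. Likewise $\chi \mapsto \tfrac{1}{\sqrt{2}}\bigl(V(k)^*\chi,\,\chi\bigr)$ identifies the magnetic bundle with the electromagnetic one, and $V(k)$ itself identifies the electric bundle with the magnetic bundle, so $\mathrm{Ch}^H = \mathrm{Ch}^{EH}$. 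Since isomorphic complex vector bundles have identical Chern classes, the three Chern numbers coincide. (One also notes that the $k$-independent unitaries $\eps^{1/2}$, $\mu^{1/2}$ and $W^{1/2}$ implementing the passage to Schrödinger form are bundle isomorphisms, so none of the three Chern numbers changes under that reduction.)

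The hard part will be the global regularity of the polar phase $V(k)$ on all of $\BZ$: the kernel of $A(k)$ jumps in dimension at the high-symmetry point $k = 0$, where the constant and longitudinal modes enter, so $V(k)$ is only continuous on the orthogonal complement of this jumping kernel. The Gap Condition is precisely what makes the argument go through --- it forces the relevant family $\mathcal{S}$ to stay a finite distance away from the zero-frequency modes, so that $P^E$, $P^H$, $P^{EH}$ have constant rank and the restriction of $V(k)$ to the relevant subspaces is real-analytic; one must check carefully that the Gap Condition indeed guarantees this separation from $\omega = 0$ and that the transverse (physical) subspaces for $M$, $M_E$, $M_H$ are chosen so that $\tilde M^2 = \tilde M_E \oplus \tilde M_H$ really holds on the relevant spectral subspaces without spurious kernel contributions. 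A secondary, more routine point is to confirm that the Bloch bundle data is unaffected by the reduction to transverse fields, so that the bundle-theoretic statements are about the genuinely physical degrees of freedom.
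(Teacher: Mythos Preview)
Your proposal is correct and follows essentially the same route as the paper: both construct vector bundle isomorphisms between the electric, magnetic, and electromagnetic Bloch bundles from the polar phase of the off-diagonal first-order Maxwell operator (your $V(k)$ is, after undoing the $W^{1/2}$-conjugation, exactly the paper's $\mathrm{pr}^H \circ \imath^E$, and your map $\psi \mapsto \tfrac{1}{\sqrt{2}}(\psi, V(k)\psi)$ is the normalized version of the paper's $\imath^E(k)$), and both use the Gap Condition to secure analyticity of these maps on the relevant subspace away from the singular ground-state bands at $\omega = 0$. The only differences are presentational --- you pass to the symmetrized Schr\"odinger form so that the $A^{\ast}A$ versus $AA^{\ast}$ structure is manifest and your bundle maps are unitary, whereas the paper works directly in the weighted $L^2$-spaces and uses merely bounded-with-bounded-inverse maps $\imath^{E,H}(k)$ and $\mathrm{pr}^{E,H}(k)$ --- but neither choice affects the argument.
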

While the validity of this result is often tacitly assumed in the literature, to the best of our knowledge there is no publication that actually tries to \emph{derive} this from first principles. Indeed, we rephrase the problem in the — for physicists — more abstract language of vector bundles and then the proof is straightforward. In contrast, a direct verification of this Theorem by computing the electric/magnetic/electromagnetic Chern number from the electric/magnetic/electromagnetic Berry curvature seems unfeasible — even in the simplest case of a single, non-degenerate band. 
\begin{remark}
	The assumption that the medium is not bi-anisotropic is crucial. For otherwise, the electric and magnetic Chern numbers are not well-defined, and the question we set out to answer in this paper does not make sense. Indeed, when the Maxwell equations contain a non-zero bi-anisotropic coupling term, then electric and magnetic fields do \emph{not} decouple in the second-order formalism. Even block-diagonalizing the first-order equations in the $(\mathbf{E},\mathbf{H})$ splitting is usually problematic as that necessarily mixes positive and negative frequency states; for a more in-depth discussion we refer to \cite[Section~5.2.2]{DeNittis_Lein:symmetries_electromagnetism:2017}. 
\end{remark}
Strictly speaking, we only provide a proof of this statement for a three-dimensional photonic crystal whereas the Quantum Hall Effect for light occurs in quasi-two-dimensional media. However, none of the arguments depend on the dimensionality and can be readily adapted to lower- or higher-dimensional photonic crystals. Moreover, it should be possible to go from periodic to include randomness, but we shall not do this here. 

Our proof will consist of two steps: first, we will show the equivalence of the first- and second-order equations. While this is \emph{in principle} completely standard and is spelled out in the literature (including \cite{Wilcox:scattering_theory_classical_physics:1966,Reed_Simon:scattering_theory_wave_equations:1977,Figotin_Klein:localization_classical_waves_II:1997}), existing results do not cover the case we are interested in, media with broken time-reversal symmetry. Here, our only original contribution is to start with the correct, physically meaningful \emph{first}-order Maxwell equations that we have derived in  \cite[Section~2]{DeNittis_Lein:Schroedinger_formalism_classical_waves:2017}. 

The second step is to use the language of vector bundles, and interpret the maps $\imath^{E,H}(k) : \varphi_n^{E,H}(k) \mapsto \bigl ( \varphi_n^E(k) , \varphi_n^H(k) \bigr )$, defined in Section~\ref{periodic:frequency_bands:equivalence_spectra} below, which reconstruct electromagnetic Bloch functions from the electric or magnetic field component alone. These maps can be interpreted as vector bundle isomorphisms between an electric, magnetic and electromagnetic Bloch bundle. Up to isomorphism these are characterized by Chern numbers, and since these three bundles are isomorphic, their Chern numbers necessarily agree. 
\section{Two equivalent mathematical descriptions of electromagnetism} 
\label{setup}
Electromagnetic, electric and magnetic Chern numbers arise naturally, depending on the equation of motion one starts with. The purpose of this section is to introduce three equivalent equations which govern electromagnetic waves propagating in certain linear media. The properties of the medium are phenomenologically described by the electric permittivity $\eps \in L^{\infty} \bigl ( \R^3 \, , \, \mathrm{Mat}_{\C}(3) \bigr )$ and the magnetic permeability $\mu \in L^{\infty} \bigl ( \R^3 \, , \, \mathrm{Mat}_{\C}(3) \bigr )$; collectively, we will refer to 
\begin{align*}
	W = \left ( 
	\begin{matrix}
		\eps & 0 \\
		0 & \mu \\
	\end{matrix}
	\right ) 
	\in L^{\infty} \bigl ( \R^3 \, , \, \mathrm{Mat}_{\C}(6) \bigr )
\end{align*}
as the \emph{material weights}, and throughout this article, we will impose the following 
\begin{assumption}[Material weights]\label{setup:assumption:material_weights}
	\begin{enumerate}[(a)]
		\item The medium is \emph{lossless}, \ie $W(x) = W(x)^*$ takes values in the \emph{hermitian} matrices. 
		\item The medium is \emph{not a negative index material}, \ie there exist positive constants $C \geq c > 0$ so that $c \, \id \leq W \leq C \, \id$ holds. 
		\item The medium has no bianisotropy, \ie the block-offdiagonal terms of $W$ vanish. 
	\end{enumerate}
\end{assumption}
Under these conditions (first-order) Maxwell's equations and the two (second-order) wave equations for electric and magnetic fields admit an $L^2$-theory. In particular, they give rise to selfadjoint operators acting on \emph{complex} electromagnetic, electric and magnetic fields, respectively. In the end, though, all three descriptions are equivalent (\cf Theorem~\ref{setup:thm:equivalence_frameworks}). 

While all of this is standard, we would like to emphasize two important points before experts skip the remainder of this section: 
\begin{enumerate}[(1)]
	\item To break time-reversal symmetry in a non-bianisotroptic medium — a prerequisite to have topological phenomena — the material weights $W \neq \overline{W}$ have to be complex (\cf \cite[Proposition~3.2 and Theorem~3.3]{DeNittis_Lein:symmetries_electromagnetism:2017}). In the context of this article, $W \neq \overline{W}$ is a necessary condition to have non-zero Chern numbers. For such media physically meaningful Maxwell equations have been derived only recently in \cite[Section~2]{DeNittis_Lein:Schroedinger_formalism_classical_waves:2017}. Indeed, only the physically meaningful Maxwell equations are equivalent to either of the wave equations. 
	\item The maps $\imath^{E,H}$ defined in equations~\eqref{setup:eqn:inclusion} below that enter as an auxiliary quantity in this section will play a pivotal rôle in the proof of our main result, Theorem~\ref{intro:thm:Chern_numbers_agree}. 
\end{enumerate}

\subsection{First-order formalism: Maxwell's equations in matter} 
\label{setup:first_order}
Under the above assumptions, we can give rigorous meaning to \emph{Maxwell's equations in matter}
\begin{subequations}\label{setup:eqn:Maxwell_equations}
	\begin{align}
		\left (
		\begin{matrix}
			\eps & 0 \\
			0 & \mu \\
		\end{matrix}
		\right ) \, \frac{\partial}{\partial t} \left (
		\begin{matrix}
			\psi^E(t) \\
			\psi^H(t) \\
		\end{matrix}
		\right ) &= \left (
		\begin{matrix}
			+ \nabla \times \psi^H(t) \\
			- \nabla \times \psi^E(t) \\
		\end{matrix}
		\right )
		&&
		\mbox{(dynamical equation)}
		\label{setup:eqn:Maxwell_equations:dynamics}
		\\
		\left (
		\begin{matrix}
			\nabla \cdot \eps \psi^E(t) \\
			\nabla \cdot \mu \psi^H(t) \\
		\end{matrix}
		\right ) &= \left (
		\begin{matrix}
			0 \\
			0 \\
		\end{matrix}
		\right )
		&&
		\mbox{(constraint equation)}
		\label{setup:eqn:Maxwell_equations:constraint}
		\\
		\left (
		\begin{matrix}
			\psi^E(t_0) \\
			\psi^H(t_0) \\
		\end{matrix}
		\right ) &= \left (
		\begin{matrix}
			\phi^E \\
			\phi^H \\
		\end{matrix}
		\right )
		&&
		\mbox{(initial condition)}
	\end{align}
\end{subequations}
defined on the vector space 
\begin{align}
	\Hil = \Bigl \{ \Psi \in L^2(\R^3,\C^6) \; \; \big \vert \; \; \Psi \mbox{ is a $\omega \geq 0$ frequency wave} \Bigr \} 
\end{align}
composed of \emph{complex} non-negative frequency waves; we will give a mathematically precise definition of $\Hil$ in equation~\eqref{setup:eqn:first_order_Helmholtz_splitting} below. For the benefit of the reader, we present enough details here to give rigorous meaning to equations~\eqref{setup:eqn:Maxwell_equations}; a detailed derivation with additional explanations can be found in \cite[Sections~2–3]{DeNittis_Lein:Schroedinger_formalism_classical_waves:2017}.

\subsubsection{Representing real-valued electromagnetic waves as complex $\omega \geq 0$ waves} 
\label{setup:first_order:representation_complex_waves}
The key idea of \cite{DeNittis_Lein:Schroedinger_formalism_classical_waves:2017} is to write a real electromagnetic field $(\mathbf{E},\mathbf{H}) = \Psi_+ + \Psi_-$ as the sum of two \emph{complex} waves $\Psi_{\pm}$ composed solely of non-negative ($+$) and non-positive ($-$) frequencies. As sources are absent and electromagnetic fields must be transversal, there are no zero frequency fields contributing to $\Psi_{\pm}$. Hence, we will call $\Psi_{\pm}$ the positive/negative frequency contribution. To ensure that their sum is real, $\Psi_+$ and 
\begin{align}
	\Psi_- = \overline{\Psi_+} 
	\label{setup:eqn:phase_locking_constraint}
\end{align}
are phase locked. Put another way, $\Psi_+$ and $\Psi_-$ are \emph{not} independent degrees of freedom, and we may pick one of the two — typically $\Psi_+$ — to describe the real wave $(\mathbf{E},\mathbf{H}) = 2 \Re \Psi_+$. In this sense, $\Psi_+$ is a complex wave \emph{representing} the real electromagnetic field $(\mathbf{E},\mathbf{H})$. As we shall see below, this correspondence $(\mathbf{E},\mathbf{H}) \leftrightarrow \Psi_+$ is one-to-one (\cf Proposition~\ref{setup:prop:complex_representation_real_fields}). 

When the weights $W \neq \overline{W}$ are complex, then $\Psi_+$ and $\Psi_-$ evolve according to \emph{different} Maxwell equations: in order to ensure the phase locking condition~\eqref{setup:eqn:phase_locking_constraint}, the Maxwell equations for $\Psi_-$ involve the complex conjugate weights. Thus, the restriction of \eqref{setup:eqn:Maxwell_equations} to non-negative frequencies is crucial as the negative frequency solutions to \eqref{setup:eqn:Maxwell_equations} \emph{sans} frequency restriction are unphysical. 

One of the main points was to show that Maxwell's equations~\eqref{setup:eqn:Maxwell_equations} of lossless positive index media can be recast in the form of a Schrödinger equation
\begin{align}
	\ii \partial_t \Psi(t) = M \Psi(t) 
	, 
	&&
	\Psi(t_0) = \Phi
	. 
	\label{setup:eqn:Schroedinger_form_Maxwell_equations}
\end{align}
To rigorously define the \emph{Maxwell operator} $M$, we first multiply both sides of \eqref{setup:eqn:Maxwell_equations:dynamics} with $\ii \, W^{-1}$, which yields $\ii \partial_t \Psi(t) = \Maux \, \Psi(t)$ where the \emph{auxiliary Maxwell operator}
\begin{align}
	\Maux := W^{-1} \, \Rot := \left (
	\begin{matrix}
		\eps^{-1} & 0 \\
		0 & \mu^{-1} \\
	\end{matrix}
	\right ) \, \left (
	\begin{matrix}
		0 & + \ii \nabla^{\times} \\
		- \ii \nabla^{\times} & 0 \\
	\end{matrix}
	\right )
	\label{setup:eqn:auxiliary_Maxwell_operator}
\end{align}
lacks the restriction to non-negative frequencies. Endowed with the domain $\domain(\Maux) := \domain(\Rot)$ of the free Maxwell operator (made explicit in \cite[equation~(15)]{DeNittis_Lein:adiabatic_periodic_Maxwell_PsiDO:2013}), $\Maux$ defines a closed operator on the \emph{Banach} space $L^2(\R^3,\C^6)$. Once we equip this $L^2$-space with the weighted \emph{energy scalar product} 
\begin{align}
	\scpro{\Phi}{\Psi}_W := \bscpro{\Phi}{W \, \Psi}_{L^2(\R^3,\C^6)}
	, 
	\label{setup:eqn:energy_scalar_product}
\end{align}
we obtain the Hilbert space $L^2_W(\R^3,\C^6)$; note that due to our assumptions on the material weights, $L^2_W(\R^3,\C^6)$ agrees with the ordinary, unweighted $L^2(\R^3,\C^6)$ as Banach spaces. 

On this weighted $L^2$-space $\Maux = \bigl ( \Maux \bigr )^{\ast_W}$ is selfadjoint \cite[Proposition~6.2]{DeNittis_Lein:Schroedinger_formalism_classical_waves:2017}, and the spectral projections 
\begin{subequations}
	\begin{align}
		P_+ :& \negmedspace = 1_{(0,\infty)}(\Maux) 
		\label{setup:eqn:positive_frequency_projection}
		\\
		P_0 :& \negmedspace = 1_{\{ 0 \}}(\Maux) 
		\label{setup:eqn:zero_frequency_projection}
	\end{align}
\end{subequations}
onto the positive and zero frequency contributions can be defined via functional calculus. The Hilbert space of complex $\omega \geq 0$ waves is the corresponding spectral subspace, 
\begin{align}
	\Hil :& \negmedspace = 1_{[0,\infty)}(\Maux) \bigl [ L^2_W(\R^3,\C^6) \bigr ]
	\notag \\
	&= \mathcal{J}_+ \oplus \mathcal{G} 
	:= P_+ \bigl [ L^2_W(\R^3,\C^6) \bigr ] \oplus P_0 \bigl [ L^2_W(\R^3,\C^6) \bigr ] 
	, 
	\label{setup:eqn:first_order_Helmholtz_splitting}
\end{align}
that we further divide into positive and zero frequency components. This \emph{Helmholtz splitting} (\cf \cite[Section~3.2.1]{DeNittis_Lein:Schroedinger_formalism_classical_waves:2017}) is conceptually important since longitudinal gradient fields that make up $\mathcal{G}$ are $\scpro{\, \cdot \,}{\, \cdot \,}_W$-orthogonal to the positive frequency subspace $\mathcal{J}_+$ — so that its elements therefore automatically satisfy the transversality constraint~\eqref{setup:eqn:Maxwell_equations:constraint} in the weak sense. 

Any real electromagnetic field $(\mathbf{E},\mathbf{H}) \in L^2(\R^3,\R^6)$ has a unique representative $\Phi := Q (\mathbf{E},\mathbf{H})$ via the map 
\begin{align}
	Q := \bigl ( P_+ + \tfrac{1}{2} P_0 \bigr ) \big \vert_{L^2(\R^3,\R^6)} : L^2(\R^3,\R^6) \longrightarrow \Hil
	. 
	\label{setup:eqn:Q_complex_representative_map}
\end{align}
\begin{proposition}[{{\cite[Corollary~A.2]{DeNittis_Lein:Schroedinger_formalism_classical_waves:2017}}}]\label{setup:prop:complex_representation_real_fields}
	Suppose the material weights satisfy Assumption~\ref{setup:assumption:material_weights}. Then the maps 
	\begin{align*}
		Q &: L^2(\R^3,\R^6) \longrightarrow \Hil
		\\
		2 \Re \, Q &: L^2(\R^3,\R^6) \longrightarrow \Hil
	\end{align*}
	are injective. Hence, any real electromagnetic field $(\mathbf{E},\mathbf{H}) \in L^2(\R^3,\R^6)$ can be uniquely represented as a complex wave. 
\end{proposition}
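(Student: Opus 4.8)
The plan is to recast ``$\Psi$ is the (unique) complex representative'' as a statement about the spectral subspaces of $\Maux$ and to reduce the injectivity of \emph{both} maps to a single lemma. Write $P_- := 1_{(-\infty,0)}(\Maux)$, so that $\id = P_+ + P_0 + P_-$ on $L^2_W(\R^3,\C^6)$ with the three ranges mutually $\scpro{\cdot}{\cdot}_W$-orthogonal and $\ran P_0 = \ker \Maux = \ker \Rot$. The lemma I want to isolate — call it the \emph{energy lemma} — has two halves. First, if $v \in \ran P_+$ or $v \in \ran P_-$ and $\scpro{v}{\Rot \, v}_{L^2} = 0$, then $v = 0$: indeed $\scpro{v}{\Maux v}_W = \bscpro{v}{W \Maux v}_{L^2} = \scpro{v}{\Rot v}_{L^2}$, and for $v \in \ran P_\pm$ the energy $\scpro{v}{\Maux v}_W$ is $\geq 0$ (resp. $\leq 0$) and vanishes only at $v = 0$, since no zero-frequency component is present, so the spectral measure of $v$ sits in $(0,\infty)$, resp. $(-\infty,0)$. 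Second, $\scpro{v}{\Rot v}_{L^2} = 0$ \emph{whenever} $\Im \, v \in \ker \Rot$ — in particular whenever $v$ is real: because $\Rot$ is $\ii$ times a real antisymmetric first-order operator (equivalently $\overline{\Rot \, v} = - \Rot \, \overline v$), writing $v = \Re v + \ii \, \Im v$ gives $\Rot v = \Rot \Re v$, and $\scpro{v}{\Rot \Re v}_{L^2} = \scpro{\Re v}{\Rot \Re v}_{L^2}$ by selfadjointness of $\Rot$ together with $\Im v \in \ker \Rot$; the right-hand side is real by selfadjointness and equals $\int (\text{real}) \cdot (\text{purely imaginary})$, hence $0$.

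Injectivity of $Q$ is then immediate: if $\Psi \in L^2(\R^3,\R^6)$ and $Q \Psi = P_+ \Psi + \tfrac12 P_0 \Psi = 0$, then $\scpro{\cdot}{\cdot}_W$-orthogonality of $\ran P_+$ and $\ran P_0$ kills each summand, so $\Psi = P_- \Psi$ — a real element of $\ran P_-$, hence $\Psi = 0$ by the energy lemma.

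For $2 \Re \, Q$ one unwinds a little more. Let $\Psi \in L^2(\R^3,\R^6)$ with $2 \Re \, Q \, \Psi = Q \Psi + \overline{Q \Psi} = 0$; then $Q \Psi = \ii \, \tau$ with $\tau$ real, and $\tau \in \Hil$ because $Q \Psi \in \Hil$ and $\Hil$ is a complex subspace. Decomposing $\tau = \tau_+ + \tau_0$ in $\Hil = \ran P_+ \oplus \ker \Rot$ and using $\overline \tau = \tau$ together with the $C$-invariance of $\ker \Rot$, one gets $\overline{\tau_+} - \tau_+ \in \ker \Rot$, hence $\Rot \tau_+$ is purely imaginary, i.e.\ $\Im \tau_+ \in \ker \Rot$, so the energy lemma forces $\tau_+ = 0$ and $\tau \in \ker \Rot$. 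Now $Q \Psi = \ii \tau \in \ker \Rot \subseteq \Hil$, so comparing components in $\Hil = \ran P_+ \oplus \ker \Rot$ yields $P_+ \Psi = 0$ and $\tfrac12 P_0 \Psi = \ii \tau$; thus $\Psi = 2 \ii \, \tau + u$ with $u := P_- \Psi$, and reality of $\Psi$ forces $\Im u = - 2 \tau \in \ker \Rot$. The energy lemma applied to $u \in \ran P_-$ now gives $u = 0$, hence $\tau = 0$ and $\Psi = 0$. The ``hence'' of the statement — unique representability — then follows.

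The step I expect to be the real obstacle is making the energy lemma rigorous rather than formal. A generic element of a spectral subspace such as $\ran P_-$ need not lie in $\domain(\Rot)$, nor even in the form domain $\domain \bigl ( \sabs{\Rot}^{1/2} \bigr )$, so neither the identity $\scpro{v}{\Maux v}_W = \scpro{v}{\Rot v}_{L^2}$ nor the manipulations in its second half can be taken at face value; and the obvious fix of truncating the spectral parameter of $\Maux$ is awkward here because such a cutoff need not preserve reality. The clean way through is to observe that $\domain \bigl ( \sabs{\Maux}^{1/2} \bigr ) = \domain \bigl ( \sabs{\Rot}^{1/2} \bigr )$ — the same space regardless of the weights, precisely because $c \, \id \le W \le C \, \id$ — that the quadratic forms above literally agree there, and that the vectors produced by the two reductions above in fact lie in this common domain. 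That is exactly the technical content behind \cite[Corollary~A.2]{DeNittis_Lein:Schroedinger_formalism_classical_waves:2017}, which one may of course also just invoke.
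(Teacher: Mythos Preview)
The paper does not actually prove this proposition; it is imported wholesale from \cite[Corollary~A.2]{DeNittis_Lein:Schroedinger_formalism_classical_waves:2017}, so there is nothing in the present text to compare against. Your sketch is therefore more than the paper itself offers, and the organizing device --- your \emph{energy lemma} --- is clean and correct at the formal level: the identity $\scpro{v}{\Maux v}_W = \scpro{v}{\Rot \, v}_{L^2}$ together with the fact that $\Rot$ is $\ii$ times a real operator (so that $\scpro{a}{\Rot a}_{L^2}$ is simultaneously real and purely imaginary for real $a$) does exactly the work you want, and your two reductions to the lemma are carried out carefully.

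The one place I would push back is your final paragraph. You correctly flag that generic elements of $\ran P_\pm$ need not lie in the form domain, and that spectral truncation of $\Maux$ fails to preserve reality. But your proposed fix --- that ``the vectors produced by the two reductions above in fact lie in this common domain'' --- is not justified as stated: in the $Q$-reduction the vector in question is the \emph{original} $\Psi \in L^2(\R^3,\R^6)$ itself, with no regularity assumed, so there is no a~priori reason it sits in $\domain\bigl(\sabs{\Rot}^{1/2}\bigr)$. The equality of form domains $\domain\bigl(\sabs{\Maux}^{1/2}\bigr) = \domain\bigl(\sabs{\Rot}^{1/2}\bigr)$ is true and useful, but by itself it does not place your $\Psi$ (or $\tau_+$, or $u$) inside either one. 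What actually closes the gap is an approximation argument that respects the reality constraint at the level of the \emph{$\Rot$-form} rather than via spectral cutoffs of $\Maux$; this is precisely the content of the cited appendix, which you rightly invoke at the end. So the sketch is sound provided that last sentence is read as ``defer to the reference'' rather than as a self-contained resolution.
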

Morally speaking, this one-to-one correspondence can be understood as follows: complexifying the \emph{real} vector space $L^2(\R^3,\C^6)$ “doubles” the degrees of freedom. One way to eliminate the superfluous elements is to restrict to complex waves of non-negative frequencies. A more careful analysis shows \cite[Lemma~2.5]{DeNittis_Lein:ray_optics_photonic_crystals:2014} that for real transversal fields, the map $P_+$ restricted to the subspace of real transversal fields is really a bijection onto $\mathcal{J}_+$. 
\begin{remark}
	When $W = \overline{W}$ is real, then a quick computation yields $Q \vert_{\ran Q}^{-1} = 2 \Re$ is just twice the real part. This computation also explains the presence of the factor $\nicefrac{1}{2}$ in euqation~\eqref{setup:eqn:Q_complex_representative_map}, it avoids $\omega = 0$ fields being counted twice. Even though we suspect this is also true for media with complex weights $W \neq \overline{W}$, we are presently not aware of a proof. For details we refer the interested reader to Section~3.2.2 and Appendix~A of \cite{DeNittis_Lein:Schroedinger_formalism_classical_waves:2017}. 
\end{remark}
%

\subsubsection{The Schrödinger formalism of electromagnetism} 
\label{setup:first_order:Schroedinger_formalism}
The analog of the quantum Hamiltonian which enters the Schrödinger-type equation~\eqref{setup:eqn:Schroedinger_form_Maxwell_equations} is the Maxwell operator $M := \Maux \vert_{\omega \geq 0}$, that is obtained by restricting the \emph{auxiliary} Maxwell operator to the non-negative frequency Hilbert space $\Hil$. Hence, endowed with the obvious domain $\domain(M) := \domain(\Maux) \cap \Hil$ the Maxwell operator $M = M^{\ast_W}$ inherits the selfadjointness from its parent \cite[Lemma~B.2]{DeNittis_Lein:Schroedinger_formalism_classical_waves:2017}. 

Then a straightforward analysis shows that Maxwell's equations~\eqref{setup:eqn:Maxwell_equations} are equivalent to the Schrödinger equation~\eqref{setup:eqn:Schroedinger_form_Maxwell_equations}. 
\begin{theorem}[Equivalence of \eqref{setup:eqn:Maxwell_equations} and \eqref{setup:eqn:Schroedinger_form_Maxwell_equations} {{\cite[Theorem~3.4]{DeNittis_Lein:Schroedinger_formalism_classical_waves:2017}}}]\label{setup:thm:equivalence_frameworks}
	Suppose the material weights $W$ which describe the medium satisfy Assumption~\ref{setup:assumption:material_weights}. 
	Then the Maxwell equations~\eqref{setup:eqn:Maxwell_equations} are equivalent to the Schrödinger-type equation~\eqref{setup:eqn:Schroedinger_form_Maxwell_equations}. The real electromagnetic field $\bigl ( \mathbf{E}(t) , \mathbf{H}(t) \bigr ) = Q^{-1} \, \Psi(t)$ can be recovered from the inverse of the map $Q$. 
\end{theorem}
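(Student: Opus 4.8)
The plan is to split the equivalence into three pieces — the dynamical equation, the transversality constraint, and the passage between the complex wave $\Psi$ and the real field $(\mathbf{E},\mathbf{H})$ — and then to glue together the results quoted above.

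For the dynamical part I would argue algebraically. Multiplying the dynamical Maxwell equation~\eqref{setup:eqn:Maxwell_equations:dynamics} by $\ii \, W^{-1}$ produces $\ii \partial_t \Psi(t) = \Maux \, \Psi(t)$ with $\Maux = W^{-1} \, \Rot$ as in~\eqref{setup:eqn:auxiliary_Maxwell_operator}; conversely, multiplying $\ii \partial_t \Psi = \Maux \Psi$ by $- \ii \, W$ recovers~\eqref{setup:eqn:Maxwell_equations:dynamics}. The one genuine input is the restriction to non-negative frequencies built into the ambient space $\Hil = 1_{[0,\infty)}(\Maux) \bigl [ L^2_W(\R^3,\C^6) \bigr ]$: since $\Maux$ commutes with its own spectral projections, $\Hil$ is invariant under the unitary flow generated by $\Maux$, and on the domain $\domain(M) = \domain(\Maux) \cap \Hil$ the operator $\Maux$ acts as the Maxwell operator $M$. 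Because $M = M^{\ast_W}$ is selfadjoint on $\Hil$ \cite[Lemma~B.2]{DeNittis_Lein:Schroedinger_formalism_classical_waves:2017}, Stone's theorem provides a unique solution $\Psi(t) = \expo{-\ii (t - t_0) M} \, \Phi$ of~\eqref{setup:eqn:Schroedinger_form_Maxwell_equations} for every $\Phi \in \domain(M)$, and this $\Psi$ solves~\eqref{setup:eqn:Maxwell_equations:dynamics} subject to the $\omega \geq 0$ restriction — and vice versa. The initial condition transcribes verbatim.

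Next I would dispose of the constraint equation~\eqref{setup:eqn:Maxwell_equations:constraint}, showing it carries no information beyond $\Psi(t) \in \Hil$. It is dynamically preserved: applying $\nabla \cdot$ to $\eps \partial_t \psi^E = \nabla \times \psi^H$ and to $\mu \partial_t \psi^H = - \nabla \times \psi^E$ gives $\partial_t \bigl ( \nabla \cdot \eps \psi^E \bigr ) = 0 = \partial_t \bigl ( \nabla \cdot \mu \psi^H \bigr )$ because $\nabla \cdot ( \nabla \times \, \cdot \, ) = 0$, so it suffices to check it at $t_0$. And at any fixed time it holds for every element of $\Hil$: in the Helmholtz splitting $\Hil = \mathcal{J}_+ \oplus \mathcal{G}$ the longitudinal part $\mathcal{G}$ consists of gradient fields that are $\scpro{\, \cdot \,}{\, \cdot \,}_W$-orthogonal to $\mathcal{J}_+$, and this orthogonality is precisely the weak form of $\nabla \cdot \eps \psi^E = 0 = \nabla \cdot \mu \psi^H$. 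Hence~\eqref{setup:eqn:Maxwell_equations:constraint} can be dropped.

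Finally, the real field. By Proposition~\ref{setup:prop:complex_representation_real_fields} the representative map $Q$ is injective, so $Q^{-1}$ is defined on $\ran Q$; the point to verify is that $\ran Q$ is invariant under the flow $\expo{-\ii (t - t_0) M}$, so that $\bigl ( \mathbf{E}(t) , \mathbf{H}(t) \bigr ) := Q^{-1} \, \Psi(t)$ is a well-defined real electromagnetic field for every $t$ and reproduces the Maxwell equations in their original real formulation. Equivalently, one checks that the positive-frequency part $P_+ \Psi$ together with the zero-frequency part $P_0 \Psi$, recombined with their complex conjugates under the phase-locking $\Psi_- = \overline{\Psi_+}$, yield a real solution — keeping track of the factor $\tfrac{1}{2}$ in front of $P_0$ in~\eqref{setup:eqn:Q_complex_representative_map} so that $\omega = 0$ fields are not counted twice. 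This is the step I expect to be the main obstacle, and the one that genuinely uses that we started from the physically meaningful first-order Maxwell equations of \cite[Section~2]{DeNittis_Lein:Schroedinger_formalism_classical_waves:2017}: when $W \neq \overline{W}$, the two contributions $\Psi_\pm$ obey Maxwell equations with weights $W$ and $\overline{W}$ respectively, so the interplay of $P_+$, $P_0$ and complex conjugation has to be tracked carefully, whereas for real weights the whole identification collapses to $Q^{-1} = 2 \Re$. Everything else — matching the domains, the algebra of multiplying through by $W^{\pm 1}$, and the initial data — is routine bookkeeping.
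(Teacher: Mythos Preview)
The paper does not give its own proof of this theorem: it is imported wholesale from \cite[Theorem~3.4]{DeNittis_Lein:Schroedinger_formalism_classical_waves:2017}, and nothing beyond the statement appears here. So there is no in-paper argument to compare your sketch against, only the cited reference.

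On the merits of your sketch: the dynamical and initial-condition steps are fine, but your treatment of the constraint contains an error. You claim that~\eqref{setup:eqn:Maxwell_equations:constraint} ``holds for every element of $\Hil$'' and justify this via the $\scpro{\, \cdot \,}{\, \cdot \,}_W$-orthogonality of $\mathcal{J}_+$ to the gradient fields $\mathcal{G}$. That orthogonality does encode the weak divergence-free condition --- but only for elements of $\mathcal{J}_+$, not for all of $\Hil = \mathcal{J}_+ \oplus \mathcal{G}$. Elements of $\mathcal{G}$ are gradient fields and are generically \emph{not} divergence-free, so the constraint does carry information beyond membership in $\Hil$. What is true, and what suffices, is that the $\mathcal{G}$-component of any $\Psi \in \Hil$ is frozen under the flow (since $\mathcal{G} = \ker \Maux$), the $\mathcal{J}_+$-component is automatically transversal, and the constraint is dynamically preserved by your divergence-of-curl argument; hence if the initial datum satisfies~\eqref{setup:eqn:Maxwell_equations:constraint}, so does the solution. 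This is a repairable slip, not a structural gap. You also correctly flag the recovery $\bigl ( \mathbf{E}(t) , \mathbf{H}(t) \bigr ) = Q^{-1} \, \Psi(t)$ as the substantive step and leave it essentially open; that invariance of $\ran Q$ under the flow, and the interplay of $P_+$, $P_0$ and complex conjugation when $W \neq \overline{W}$, is exactly where the content of \cite[Theorem~3.4]{DeNittis_Lein:Schroedinger_formalism_classical_waves:2017} lies, and your proposal does not supply it.
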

%

\subsection{Second-order formalism: wave equations for electric and magnetic fields} 
\label{setup:second_order}
Alternatively, we can square the Schrödinger-type equation~\eqref{setup:eqn:Schroedinger_form_Maxwell_equations} to obtain second-order wave equations; this has the advantage of yielding separate equations for electric 
\begin{subequations}\label{setup:eqn:wave_equation_electric}
	\begin{align}
		&\partial_t^2 \psi^E(t) + \eps^{-1} \, \nabla^{\times} \, \mu^{-1} \, \nabla^{\times} \psi^E(t) = 0 
		, 
		&&
		\mbox{(dynamical equation)}
		\label{setup:eqn:wave_equation_electric:dynamics}
		\\
		&\nabla \cdot \eps \psi(t) = 0 = \nabla \cdot \eps \, \partial_t \psi(t) 
		, 
		&&
		\mbox{(constraint equation)}
		\label{setup:eqn:wave_equation_electric:constraint}
		\\
		&\psi^E(t_0) = \phi^E
		, \; \, 
		\partial_t \psi^E(t_0) = + \eps^{-1} \, \nabla \times \phi^H 
		, 
		&&
		\mbox{(initial conditions)}
		\label{setup:eqn:wave_equation_electric:initial_conditions}
	\end{align}
\end{subequations}
and magnetic components, 
\begin{subequations}\label{setup:eqn:wave_equation_magnetic}
	\begin{align}
		&\partial_t^2 \psi^H(t) + \mu^{-1} \, \nabla^{\times} \, \eps^{-1} \, \nabla^{\times} \psi^H(t) = 0 
		, 
		&&
		\mbox{(dynamical equation)}
		\label{setup:eqn:wave_equation_magnetic:dynamics}
		\\
		&\nabla \cdot \mu \psi(t) = 0 = \nabla \cdot \mu \, \partial_t \psi(t) 
		, 
		&&
		\mbox{(constraint equation)}
		\label{setup:eqn:wave_equation_magnetic:constraint}
		\\
		&\psi^H(t_0) = \phi^H
		, \; \, 
		\partial_t \psi^H(t_0) = + \mu^{-1} \, \nabla \times \phi^H 
		. 
		&&
		\mbox{(initial conditions)}
		\label{setup:eqn:wave_equation_magnetic:initial_conditions}
	\end{align}
\end{subequations}
This is because the auxiliary Maxwell operator $\Maux = \left ( 
\begin{smallmatrix}
	0 & + \ii \eps^{-1} \nabla^{\times} \\
	- \ii \mu^{-1} \nabla^{\times} & 0 \\
\end{smallmatrix}
\right )$ is completely block-\emph{off}diagonal. Therefore, its square 
\begin{align}
	(\Maux)^2 = \left (
	\begin{matrix}
		M_{EE}^2 & 0 \\
		0 & M_{HH}^2 \\
	\end{matrix}
	\right )
	:= \left (
	\begin{matrix}
		\eps^{-1} \, \nabla^{\times} \, \mu^{-1} \, \nabla^{\times} & 0 \\
		0 & \mu^{-1} \, \nabla^{\times} \, \eps^{-1} \, \nabla^{\times} \\
	\end{matrix}
	\right )
	\label{setup:eqn:squared_auxiliary_Maxwell_operator}
\end{align}
is block-\emph{diagonal}, and the operators in the block-diagonals are those we are interested in. In fact, this equation encapsulates their precise mathematical definitions: the domains of $M_{EE}^2$ and $M_{HH}^2$ are electric and magnetic part of the domain of $(\Maux)^2 = M_{EE}^2 \oplus M_{HH}^2$, 
\begin{align*}
	\domain \bigl ( M_{EE}^2 \bigr ) :& \negmedspace = \domain \bigl ( (\Maux)^2 \bigr )^E 
	:= \Bigl \{ \psi^E \in L^2_{\eps}(\R^3,\C^3) \; \; \big \vert \; \; 
	\Psi = \left ( 
	\begin{smallmatrix}
		\psi^E \\
		\psi^H \\
	\end{smallmatrix}
	\right ) \in \domain \bigl ( (\Maux)^2 \bigr ) 
	\Bigr \} 
	, 
	\\
	\domain \bigl ( M_{HH}^2 \bigr ) :& \negmedspace = \domain \bigl ( (\Maux)^2 \bigr )^H 
	:= \Bigl \{ \psi^H \in L^2_{\mu}(\R^3,\C^3) \; \; \big \vert \; \; 
	\Psi = \left ( 
	\begin{smallmatrix}
		\psi^E \\
		\psi^H \\
	\end{smallmatrix}
	\right ) \in \domain \bigl ( (\Maux)^2 \bigr ) 
	\Bigr \} 
	. 
\end{align*}
\begin{remark}
	This definition via $\Maux$ imposes only the bare minimum of conditions on the domain; in principle, we \emph{could} attempt to define $M_{EE}^2$ and $M_{HH}^2$ directly without making any reference to the first-order operators, but that is actually more delicate if $\eps$ and $\mu$ are not $\Cont^1$-regular with bounded first-order derivatives. 
\end{remark}
Consequently, $M_{EE}^2$ and $M_{HH}^2$, seen as operators on the electric Hilbert space $L^2_{\eps}(\R^3,\C^3)$ and the magnetic Hilbert space $L^2_{\mu}(\R^3,\C^3)$, are selfadjoint. These Hilbert spaces are defined just as $L^2_W(\R^3,\C^6)$ by endowing the \emph{Banach space} $L^2(\R^3,\C^3)$ with the weighted scalar products 
\begin{align*}
	\bscpro{\phi^E}{\psi^E}_{\eps} :& \negmedspace = \bscpro{\phi^E}{\eps \psi^E}_{L^2(\R^3,\C^3)} 
	, 
	\\ 
	\bscpro{\phi^H}{\psi^H}_{\mu} :& \negmedspace = \bscpro{\phi^H}{\mu \psi^H}_{L^2(\R^3,\C^3)} 
	. 
\end{align*}
\begin{lemma}
	$M_{EE}^2 = (M_{EE}^2)^{\ast_{\eps}}$ and $M_{HH}^2 = (M_{HH}^2)^{\ast_{\mu}}$ are selfadjoint. 
\end{lemma}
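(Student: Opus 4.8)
The plan is to deduce the selfadjointness of $M_{EE}^2$ and $M_{HH}^2$ directly from the selfadjointness of the auxiliary Maxwell operator $\Maux$ on $L^2_W(\R^3,\C^6)$, which we may invoke from \cite[Proposition~6.2]{DeNittis_Lein:Schroedinger_formalism_classical_waves:2017}. The key structural observation is that the energy scalar product splits as an orthogonal direct sum, $\scpro{\,\cdot\,}{\,\cdot\,}_W = \scpro{\,\cdot\,}{\,\cdot\,}_{\eps} \oplus \scpro{\,\cdot\,}{\,\cdot\,}_{\mu}$, consistently with the block structure $L^2_W(\R^3,\C^6) = L^2_{\eps}(\R^3,\C^3) \oplus L^2_{\mu}(\R^3,\C^3)$; and that by \eqref{setup:eqn:squared_auxiliary_Maxwell_operator} the operator $(\Maux)^2$ respects this splitting, $(\Maux)^2 = M_{EE}^2 \oplus M_{HH}^2$, with the domain correspondingly splitting as $\domain\bigl((\Maux)^2\bigr) = \domain(M_{EE}^2) \oplus \domain(M_{HH}^2)$ by the very definitions given above. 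So the statement reduces to two general facts: first, that the square of a selfadjoint operator is selfadjoint; second, that a selfadjoint operator which is block-diagonal with respect to an orthogonal Hilbert space decomposition restricts to a selfadjoint operator on each block.

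Concretely, I would proceed as follows. First, since $\Maux = (\Maux)^{\ast_W}$ is selfadjoint on $L^2_W(\R^3,\C^6)$, the operator $(\Maux)^2$ with its natural domain $\domain\bigl((\Maux)^2\bigr) = \{\Psi \in \domain(\Maux) : \Maux\Psi \in \domain(\Maux)\}$ is selfadjoint — this is a standard consequence of the spectral theorem (or of the fact that $(\Maux \pm \ii)$ are bijective with bounded inverse, hence so is $(\Maux)^2 + 1 = (\Maux - \ii)(\Maux + \ii)$ up to the usual argument). Second, I would verify that $(\Maux)^2$ is reduced by the orthogonal projection $\Pi_E$ onto $L^2_{\eps}(\R^3,\C^3)$: indeed $\Pi_E$ commutes with $\Maux$ only up to the off-diagonal structure, but it manifestly commutes with the block-\emph{diagonal} operator $(\Maux)^2$ in the sense that $\Pi_E \domain\bigl((\Maux)^2\bigr) \subseteq \domain\bigl((\Maux)^2\bigr)$ and $\Pi_E (\Maux)^2 \subseteq (\Maux)^2 \Pi_E$, which is exactly \eqref{setup:eqn:squared_auxiliary_Maxwell_operator} together with the product form of the domain. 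A projection reducing a selfadjoint operator yields a selfadjoint restriction on its range; applied to $\Pi_E$ this gives $M_{EE}^2 = (M_{EE}^2)^{\ast_{\eps}}$, and applied to $\Pi_H = \id - \Pi_E$ it gives $M_{HH}^2 = (M_{HH}^2)^{\ast_{\mu}}$.

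The one point that needs a little care — and which I expect to be the main (minor) obstacle — is confirming that the domain splitting $\domain\bigl((\Maux)^2\bigr) = \domain(M_{EE}^2) \oplus \domain(M_{HH}^2)$ really holds as an orthogonal direct sum of the two component spaces, i.e. that $\Psi = (\psi^E,\psi^H) \in \domain\bigl((\Maux)^2\bigr)$ if and only if \emph{both} $\psi^E \in \domain(M_{EE}^2)$ and $\psi^H \in \domain(M_{HH}^2)$ separately. This is immediate from the definitions as written above — those definitions are phrased precisely so that membership in $\domain\bigl((\Maux)^2\bigr)$ is a conjunction of a condition on $\psi^E$ and a condition on $\psi^H$, because $(\Maux)^2$ acts diagonally and $\domain(\Rot)$, hence $\domain\bigl((\Maux)^2\bigr)$, is itself a product of an electric and a magnetic component. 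Once this bookkeeping is in place, the selfadjointness of $M_{EE}^2$ and $M_{HH}^2$ is a formal consequence and the proof is complete; no analysis beyond what is already contained in \cite[Proposition~6.2]{DeNittis_Lein:Schroedinger_formalism_classical_waves:2017} is required.
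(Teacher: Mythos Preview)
Your proposal is correct and follows essentially the same route as the paper: the paper does not give a formal proof of this lemma but derives it as an immediate consequence of the block-diagonal decomposition $(\Maux)^2 = M_{EE}^2 \oplus M_{HH}^2$ together with the selfadjointness of $\Maux$ on $L^2_W(\R^3,\C^6)$, which is exactly your argument spelled out in more detail. Your added care about the domain splitting and the reduction by the projection $\Pi_E$ is appropriate but not elaborated in the paper, which simply takes these points for granted given how the domains were defined.
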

While equations~\eqref{setup:eqn:wave_equation_electric} and \eqref{setup:eqn:wave_equation_magnetic} makes it seem as if electric and magnetic fields decouple, this is of course not the case: as second-order equations, we not only need to specify $\psi^E(t_0)$ or $\psi^H(t_0)$ but also the time-derivative 
\begin{subequations}\label{setup:eqn:time_derivative_intial_conditions_second_order}
	\begin{align}
		\partial_t \psi^E(t_0) &= + \eps^{-1} \, \nabla \times \psi^H(t_0)
		\label{setup:eqn:time_derivative_intial_conditions_second_order:electric}
		\\
		\partial_t \psi^H(t_0) &= - \mu^{-1} \, \nabla \times \psi^E(t_0)
		\label{setup:eqn:time_derivative_intial_conditions_second_order:magnetic}
	\end{align}
\end{subequations}
that evidently has to satisfy Maxwell's equations~\eqref{setup:eqn:Maxwell_equations}. Part and parcel is the assumption that $\Psi(t_0) = \bigl ( \psi^E(t_0) , \psi^H(t_0) \bigr ) \in \Hil$ is composed of positive frequencies. 

Lastly, akin to \eqref{setup:eqn:first_order_Helmholtz_splitting} let us introduce the Helmholtz decomposition for the electric and magnetic Hilbert spaces, 
\begin{subequations}\label{setup:eqn:second_order_Helmholtz_splitting}
	\begin{align}
		L^2_{\eps}(\R^3,\C^3) &= \bigl ( \ran \nabla \bigr )^{\perp_{\eps}} \oplus \bigl ( \ran \nabla \bigr )
		= \ker \, ( \nabla \cdot \eps ) \oplus \bigl ( \ran \nabla \bigr )
		, 
		\label{setup:eqn:second_order_Helmholtz_splitting:electric}
		\\
		L^2_{\mu}(\R^3,\C^3) &= \bigl ( \ran \nabla \bigr )^{\perp_{\mu}} \oplus \bigl ( \ran \nabla \bigr )
		= \ker \, ( \nabla \cdot \mu ) \oplus \bigl ( \ran \nabla \bigr )
		. 
		\label{setup:eqn:second_order_Helmholtz_splitting:magnetic}
	\end{align}
\end{subequations}
Here, the transversal fields are those that are $\scpro{\, \cdot \,}{\, \cdot \,}_{\eps}$- and $\scpro{\, \cdot \,}{\, \cdot \,}_{\mu}$-orthogonal to the gradient fields. For a rigorous definition of the gradient $\nabla$, the curl $\nabla^{\times}$ and the divergence $\nabla \cdot$ we refer to \cite[Appendix~A]{DeNittis_Lein:adiabatic_periodic_Maxwell_PsiDO:2013}. 

\subsection{Equivalence of first- and second-order equations} 
\label{setup:equivalence}
Put as a mathematical statement, the equivalence reads: 
\begin{theorem}[Equivalence of first- and second-order formalism]\label{setup:thm:equivalence}
	Suppose the medium is described by material weights that satisfy Assumption~\ref{setup:assumption:material_weights}, and the (complex) initial state $\Phi = (\phi^E , \phi^H) \in \Hil$ satisfies the constraint equation~\eqref{setup:eqn:Maxwell_equations:constraint}. 
	\begin{enumerate}[(1)]
		\item Equations~\eqref{setup:eqn:Maxwell_equations}, \eqref{setup:eqn:wave_equation_electric} and \eqref{setup:eqn:wave_equation_magnetic} are all equivalent in the following sense: electric and magnetic part of the solution $\Psi(t) = \bigl ( \psi^E(t) , \psi^H(t) \bigr )$ to the first-order Maxwell equations~\eqref{setup:eqn:Maxwell_equations} satisfy the wave equations~\eqref{setup:eqn:wave_equation_electric} and \eqref{setup:eqn:wave_equation_magnetic}, respectively. Conversely, we can reconstruct \emph{the} solution to Maxwell's equations~\eqref{setup:eqn:Maxwell_equations} from solution to the electric or magnetic wave equation alone. And hence, we can obtain \emph{the} solution to the magnetic wave equation~\eqref{setup:eqn:wave_equation_magnetic} from that of the electric wave equation~\eqref{setup:eqn:wave_equation_electric} and vice versa. 
		%
		%
		\item The real-valued electromagnetic field $\bigl ( \mathbf{E}(t) , \mathbf{H}(t) \bigr )$ is represented by the complex solution $\Psi(t) = Q \bigl ( \mathbf{E}(t) , \mathbf{H}(t) \bigr )$ with the help of the injective map $Q : L^2(\R^3,\R^6) \longrightarrow \Hil$ defined through equation~\eqref{setup:eqn:Q_complex_representative_map}. 
	\end{enumerate}
\end{theorem}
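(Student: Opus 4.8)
The plan is to route everything through the Schrödinger picture of Theorem~\ref{setup:thm:equivalence_frameworks}, so that the only genuinely new work concerns the non-negative-frequency constraint and the bookkeeping of operator domains. Part~(2) of the statement is a mere restatement of Theorem~\ref{setup:thm:equivalence_frameworks} combined with the injectivity of $Q$ from Proposition~\ref{setup:prop:complex_representation_real_fields}, so I concentrate on part~(1). By Theorem~\ref{setup:thm:equivalence_frameworks} a function $\Psi(t) = \bigl(\psi^E(t),\psi^H(t)\bigr)$ solves Maxwell's equations~\eqref{setup:eqn:Maxwell_equations} with initial datum $\Phi = (\phi^E,\phi^H) \in \Hil$ if and only if it solves the Schrödinger-type equation~\eqref{setup:eqn:Schroedinger_form_Maxwell_equations}; since $M = M^{\ast_W}$ is selfadjoint, Stone's theorem furnishes the unique solution $\Psi(t) = \e^{-\ii(t-t_0)M}\Phi$.

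\emph{From first to second order.} Differentiating once more gives $\partial_t^2 \Psi(t) = -M^2\,\Psi(t)$, first for $\Phi \in \domain(M^2)$ and then, by approximation, in the appropriate weaker sense in general. The algebraic core is the identity $M^2 = (\Maux)^2\big\vert_{\Hil}$: it holds because $M = \Maux\big\vert_{\Hil}$ and $\Hil$ is a spectral subspace of $\Maux$, hence reduces $\Maux$ and its functions. By~\eqref{setup:eqn:squared_auxiliary_Maxwell_operator} the operator $(\Maux)^2 = M_{EE}^2 \oplus M_{HH}^2$ is block-diagonal, so for $\Psi = (\psi^E,\psi^H) \in \Hil$ the two components separately obey $\partial_t^2\psi^E + M_{EE}^2\psi^E = 0$ and $\partial_t^2\psi^H + M_{HH}^2\psi^H = 0$, which are~\eqref{setup:eqn:wave_equation_electric:dynamics} and~\eqref{setup:eqn:wave_equation_magnetic:dynamics}. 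The constraint equations~\eqref{setup:eqn:wave_equation_electric:constraint} and~\eqref{setup:eqn:wave_equation_magnetic:constraint} follow from $\nabla\cdot\nabla\times = 0$: the dynamical equation~\eqref{setup:eqn:Maxwell_equations:dynamics} gives $\nabla\cdot\eps\,\partial_t\psi^E(t) = \nabla\cdot\nabla\times\psi^H(t) = 0$, whereas $\nabla\cdot\eps\,\psi^E(t) = 0$ is the propagation in time of the constraint~\eqref{setup:eqn:Maxwell_equations:constraint} assumed at $t_0$. The initial conditions~\eqref{setup:eqn:wave_equation_electric:initial_conditions} are immediate: $\psi^E(t_0) = \phi^E$ by hypothesis and $\partial_t\psi^E(t_0) = \eps^{-1}\,\nabla\times\phi^H$ by evaluating~\eqref{setup:eqn:Maxwell_equations:dynamics} at $t_0$; likewise for $\psi^H$.

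\emph{From second to first order.} Conversely, let $\psi^E(t)$ solve the electric wave equation~\eqref{setup:eqn:wave_equation_electric}; here $\phi^H$ enters as part of the data through $\partial_t\psi^E(t_0) = \eps^{-1}\,\nabla\times\phi^H$. Define $\psi^H(t)$ as the unique solution of the first-order problem $\partial_t\psi^H(t) = -\mu^{-1}\,\nabla\times\psi^E(t)$ with $\psi^H(t_0) = \phi^H$, that is $\psi^H(t) = \phi^H - \int_{t_0}^t \mu^{-1}\,\nabla\times\psi^E(s)\,\dd s$. Then $\Psi(t) = \bigl(\psi^E(t),\psi^H(t)\bigr)$ solves~\eqref{setup:eqn:Maxwell_equations}: the lower row of~\eqref{setup:eqn:Maxwell_equations:dynamics} holds by construction, while the upper row $\eps\,\partial_t\psi^E(t) = \nabla\times\psi^H(t)$ follows because both sides coincide at $t_0$ — by the initial condition $\eps\,\partial_t\psi^E(t_0) = \nabla\times\phi^H = \nabla\times\psi^H(t_0)$ — and have equal time derivatives, $\partial_t\,\nabla\times\psi^H(t) = -\nabla\times\mu^{-1}\,\nabla\times\psi^E(t) = \eps\,\partial_t^2\psi^E(t)$, by~\eqref{setup:eqn:wave_equation_electric:dynamics}. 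The magnetic constraint is preserved, $\nabla\cdot\mu\,\psi^H(t) = \nabla\cdot\mu\,\phi^H - \int_{t_0}^t \nabla\cdot\nabla\times\psi^E(s)\,\dd s = 0$. Reconstructing Maxwell's solution from the magnetic wave equation~\eqref{setup:eqn:wave_equation_magnetic} is the same argument with $E$ and $H$ interchanged, and composing the two reconstructions passes between the electric and magnetic wave equations.

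\emph{The main obstacle.} What needs genuine care — and what is not covered by the classical treatments restricted to real, time-reversal-symmetric weights — is that the reconstructed wave stays inside $\Hil$, i.e. remains a non-negative-frequency field; this is precisely where the physically meaningful first-order Maxwell equations are indispensable. I would argue this by uniqueness: the reconstructed $\Psi(t)$ solves $\ii\partial_t\Psi = \Maux\,\Psi$ with $\Psi(t_0) = \Phi \in \Hil$, and since $\Hil$ reduces the selfadjoint operator $\Maux$, necessarily $\Psi(t) = \e^{-\ii(t-t_0)M}\Phi \in \Hil$ for all $t$. The remaining technical nuisance is domain bookkeeping: one assumes $\phi^E$, $\phi^H$ to lie in the appropriate domains of $M_{EE}^2$, $M_{HH}^2$ intersected with $\Hil$, represents the wave-equation solution through the functional calculus as $\cos\bigl((t-t_0)\sqrt{M_{EE}^2}\bigr)\phi^E + \tfrac{\sin\bigl((t-t_0)\sqrt{M_{EE}^2}\bigr)}{\sqrt{M_{EE}^2}}\,\eps^{-1}\,\nabla\times\phi^H$, and notes that the constraint hypothesis pins $\Phi$ inside the transversal subspace $\mathcal{J}_+$ — the gradient part $\mathcal{G}$ is annihilated by the constraint, since $\nabla\cdot\eps\nabla$ has trivial $L^2(\R^3)$-kernel — where $M$, $M_{EE}$ and $M_{HH}$ are injective, so that the bounded operator $\tfrac{\sin(t\sqrt{M_{EE}^2})}{\sqrt{M_{EE}^2}}$ is unambiguously defined.
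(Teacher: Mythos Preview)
Your proof is correct and covers the same essential ground as the paper's, but the reconstruction step (second to first order) is handled differently. The paper recovers the magnetic component algebraically by \emph{inverting the curl}: it sets $\psi^H(t) := (\nabla^{\times})^{-1}\,\eps\,\partial_t\psi^E(t)$, using that $\partial_t\psi^E(t) \in \ker(\nabla\cdot\eps)$ together with the boundedness of $(\nabla^{\times})^{-1}\,\eps$ on the transversal subspace. You instead \emph{integrate in time}, defining $\psi^H(t) = \phi^H - \int_{t_0}^t \mu^{-1}\,\nabla\times\psi^E(s)\,\dd s$, and then verify both rows of~\eqref{setup:eqn:Maxwell_equations:dynamics} by matching initial values and time derivatives. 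Your route is more elementary --- it avoids appealing to the existence of a bounded curl-inverse on divergence-free fields --- at the cost of a short ODE-style argument showing that two functions with equal value at $t_0$ and equal derivative everywhere coincide. The paper's route is more algebraic and makes the change of variables $(\psi^E,\eta^E) \mapsto (\psi^E,\psi^H)$ explicit in matrix form, which fits the later vector-bundle viewpoint; but both are sound.

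Your handling of the non-negative-frequency constraint is also slightly cleaner than the paper's: you argue directly via uniqueness of the $\Maux$-flow and the fact that $\Hil$ is a reducing spectral subspace, whereas the paper simply asserts that the reconstructed field is positive-frequency ``with our specific choice of initial condition''. The functional-calculus formula you give for the wave-equation solution is a nice addition not present in the paper, though it is not strictly needed for the argument.
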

When the material weights $W = \overline{W}$ are \emph{real}, this has long been known (see \eg \cite{Wilcox:scattering_theory_classical_physics:1966,Reed_Simon:scattering_theory_wave_equations:1977}). However, for \emph{complex} weights $W \neq \overline{W}$ this is new. Interestingly, the hurdle for an extension to complex material weights was to find \emph{physically meaningful first-order} Maxwell equations~\eqref{setup:eqn:Maxwell_equations} that are compatible with the real-valuedness of the physical fields $(\mathbf{E},\mathbf{H})$. Their derivation was one of the main aims of a recent work of ours \cite[Section~2]{DeNittis_Lein:Schroedinger_formalism_classical_waves:2017}, and this work is as an addendum. 

We emphasize that Theorem~\ref{setup:thm:equivalence} applies to media with real material weights as well: while we could equivalently work with real electric fields $\mathbf{E}$ \emph{directly} in equations~\eqref{setup:eqn:wave_equation_electric} and \eqref{setup:eqn:Maxwell_equations}, the alternate strategy to represent $\mathbf{E} = 2 \Re \psi^E$ as a complex $\omega \geq 0$ wave works just as well and has the added advantage of extending to complex $\eps$ and $\mu$. 

The proof consists of two parts, and only the first step requires a bit of work: firstly, we need to establish that our definition of $M_{EE}^2$ and $M_{HH}^2$ via the \emph{auxiliary} Maxwell operator (that lacks any frequency restriction) is compatible with the frequency constraint. The second step consists of showing a connection between the second initial condition $\partial_t \psi^E(t_0)$ for the second-order equation and $\psi^H(t_0)$ for the electric field wave equation~\eqref{setup:eqn:wave_equation_electric}, and an analogous statement for \eqref{setup:eqn:wave_equation_magnetic}.

\subsubsection{Equations~\eqref{setup:eqn:wave_equation_electric} and \eqref{setup:eqn:wave_equation_magnetic}, and the $\omega \geq 0$ frequency constraint} 
\label{setup:equivalence:spaces}
The main ingredient in the proof of the first step is the map
\begin{subequations}\label{setup:eqn:inclusion}
	\begin{align}
		\imath^E &: \ker \, ( \nabla \cdot \eps ) \longrightarrow \mathcal{J}_+
		, 
		\qquad 
		\psi^E \mapsto \left (
		\begin{matrix}
			\psi^E \\
			- \ii \mu^{-1} \, \nabla^{\times} \, \bigl ( M_{EE}^2 \bigr )^{- \nicefrac{1}{2}} \psi^E \\
		\end{matrix}
		\right )
		, 
		\label{setup:eqn:inclusion:electric}
		\\
		\imath^H &: \ker \, ( \nabla \cdot \mu ) \longrightarrow \mathcal{J}_+
		, 
		\qquad 
		\psi^H \mapsto \left (
		\begin{matrix}
			+ \ii \eps^{-1} \, \nabla^{\times} \bigl ( M_{HH}^2 \bigr )^{- \nicefrac{1}{2}} \psi^H \\
			\psi^H \\
		\end{matrix}
		\right )
		, 
		\label{setup:eqn:inclusion:magnetic}
	\end{align}
\end{subequations}
for the electric field and for the magnetic field. Figotin and Klein gave the explicit map to reconstruct the electric part from the magnetic part as \cite[equations~(10)–(11)]{Figotin_Klein:localization_classical_waves_II:1997}, although they assumed that $\eps$ and $\mu$ are real and scalar-valued; the difference in sign in \cite[equations~(10)]{Figotin_Klein:localization_classical_waves_II:1997} stems from their choice to use $(\mathbf{H},\mathbf{E})$ rather than $(\mathbf{E},\mathbf{H})$ as the electromagnetic field (\cf \cite[equation~(4)]{Figotin_Klein:localization_classical_waves_II:1997}). 

For the sake of concreteness, let us focus on the electric field. It turns out that $\imath^E$ is a bounded injection that maps transversal electric fields onto transversal, positive frequency \emph{electromagnetic} fields; its left-inverse $\mathrm{pr}^E : \Psi = \left ( 
\begin{smallmatrix}
	\psi^E \\
	\psi^H \\
\end{smallmatrix}
\right ) \mapsto \psi^E$ discards the magnetic component. 
\begin{proposition}\label{setup:prop:identification_spaces}
	Suppose Assumption~\ref{setup:assumption:material_weights} on the weights holds. 
	\begin{enumerate}[(1)]
		\item $\ker M_{EE}^2 = \ran \nabla = \ker M_{HH}^2$ where $\nabla$ is seen as an operator $\domain(\nabla) \subset L^2(\R^3,\C) \longrightarrow L^2(\R^3,\C^3)$. 
		\item The electric and magnetic parts of $\mathcal{J}_+$ coincide with the divergence-free electric and magnetic fields in the sense that 
		\begin{align*}
			L^2_{\eps}(\R^3,\C^3) \supset
			\ker \, ( \nabla \cdot \eps ) &= \mathcal{J}_+^E
			:= \Bigl \{ \psi^E \in L^2_{\eps}(\R^3,\C^3) \; \; \big \vert \; \; 
			\Psi = \left (
			\begin{smallmatrix}
				\psi^E \\
				\psi^H \\
			\end{smallmatrix}
			\right ) \in \mathcal{J}_+
			\Bigr \} 
			,
			\\
			L^2_{\mu}(\R^3,\C^3) \supset
			\ker \, ( \nabla \cdot \mu ) &= \mathcal{J}_+^H
			:= \Bigl \{ \psi^H \in L^2_{\mu}(\R^3,\C^3) \; \; \big \vert \; \; 
			\Psi = \left (
			\begin{smallmatrix}
				\psi^E \\
				\psi^H \\
			\end{smallmatrix}
			\right ) \in \mathcal{J}_+
			\Bigr \} 
			.
		\end{align*}
		\item The electric and magnetic components 
		\begin{align*}
			\Hil^E :& \negmedspace = \Bigl \{ \psi^E \in L^2_{\eps}(\R^3,\C^3) \; \; \big \vert \; \; 
			\Psi = \left (
			\begin{smallmatrix}
				\psi^E \\
				\psi^H \\
			\end{smallmatrix}
			\right ) \in \Hil
			\Bigr \} 
			= L^2_{\eps}(\R^3,\C^3)
			, 
			\\
			\Hil^H :& \negmedspace = \Bigl \{ \psi^H \in L^2_{\mu}(\R^3,\C^3) \; \; \big \vert \; \; 
			\Psi = \left (
			\begin{smallmatrix}
				\psi^E \\
				\psi^H \\
			\end{smallmatrix}
			\right ) \in \Hil
			\Bigr \} 
			= L^2_{\mu}(\R^3,\C^3)
			, 
		\end{align*}
		of the non-negative frequency space $\Hil$ coincide with $L^2_{\eps}(\R^3,\C^3)$ and $L^2_{\mu}(\R^3,\C^3)$. 
	\end{enumerate}
\end{proposition}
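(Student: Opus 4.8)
The statement has three parts, of which parts~(2) and~(3) rest on part~(1), so I would take them in that order. For part~(1) I unwind the operators from the outside in: selfadjointness of $\Maux$ on $L^2_W(\R^3,\C^6)$ gives $\ker (\Maux)^2 = \ker \Maux$; since $W^{-1}$ is a bounded, boundedly invertible multiplication operator, $\ker \Maux = \ker \Rot$; and as $\Rot$ is block-offdiagonal, $\ker \Rot = \ker \nabla^{\times} \oplus \ker \nabla^{\times}$ (electric slot $\oplus$ magnetic slot). Feeding in the Hodge-type identity $\ker \nabla^{\times} = \ran \nabla$ on $\R^3$ (with the correct operator domains, \cf \cite[Appendix~A]{DeNittis_Lein:adiabatic_periodic_Maxwell_PsiDO:2013}) yields $\ker \Maux = \ran\nabla\oplus\ran\nabla$, and comparing this with the block-\emph{diagonal} form $(\Maux)^2 = M_{EE}^2 \oplus M_{HH}^2$ of \eqref{setup:eqn:squared_auxiliary_Maxwell_operator} gives $\ker M_{EE}^2 = \ran \nabla = \ker M_{HH}^2$.

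For part~(2) I would establish the sharper claim that $\imath^E$ of \eqref{setup:eqn:inclusion:electric} is a bounded injection $\ker(\nabla\cdot\eps)\hookrightarrow\mathcal{J}_+$ with left inverse $\mathrm{pr}^E$ (the argument for $\imath^H$ being identical), which contains part~(2). By part~(1) and \eqref{setup:eqn:second_order_Helmholtz_splitting:electric}, $\ker(\nabla\cdot\eps) = \bigl(\ker M_{EE}^2\bigr)^{\perp_{\eps}}$, so $\bigl(M_{EE}^2\bigr)^{-\nicefrac{1}{2}}$ is well-defined there by functional calculus; the product $\mu^{-1}\nabla^{\times}\bigl(M_{EE}^2\bigr)^{-\nicefrac{1}{2}}$ is nonetheless \emph{bounded}, since a one-line computation using that the weighted adjoint of $\mu^{-1}\nabla^{\times}: L^2_{\eps}(\R^3,\C^3)\to L^2_{\mu}(\R^3,\C^3)$ is $\eps^{-1}\nabla^{\times}$ (together with $\eps^{-1}\nabla^{\times}\mu^{-1}\nabla^{\times} = M_{EE}^2$) shows $\bnorm{\mu^{-1}\nabla^{\times}\bigl(M_{EE}^2\bigr)^{-\nicefrac{1}{2}}\psi^E}_{\mu} = \snorm{\psi^E}_{\eps}$, hence $\snorm{\imath^E\psi^E}_W^2 = 2\,\snorm{\psi^E}_{\eps}^2$; so $\imath^E$ extends from its dense natural domain to a bounded injection on all of $\ker(\nabla\cdot\eps)$, and $\mathrm{pr}^E\circ\imath^E = \id$ by inspection. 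Its range lies in $\bigl(\ker\Maux\bigr)^{\perp_W} = \mathcal{J}_+\oplus\mathcal{J}_-$ (with $\mathcal{J}_- := 1_{(-\infty,0)}(\Maux)\bigl[L^2_W(\R^3,\C^6)\bigr]$ the negative-frequency subspace) — the magnetic component is a curl, hence $\nabla\cdot\mu$-free, and this $W$-orthogonal complement equals $\ker(\nabla\cdot\eps)\oplus\ker(\nabla\cdot\mu)$ by part~(1) together with \eqref{setup:eqn:second_order_Helmholtz_splitting} — so it remains to rule out a negative-frequency component. For that I would verify $\Maux\,\imath^E\psi^E = \abs{\Maux}\,\imath^E\psi^E$, where $\abs{\Maux} := \bigl((\Maux)^2\bigr)^{\nicefrac{1}{2}} = \bigl(M_{EE}^2\bigr)^{\nicefrac{1}{2}}\oplus\bigl(M_{HH}^2\bigr)^{\nicefrac{1}{2}}$: the electric components of the two sides are both $\bigl(M_{EE}^2\bigr)^{\nicefrac{1}{2}}\psi^E$, and the magnetic components coincide by the intertwining relation $\bigl(M_{HH}^2\bigr)^{\nicefrac{1}{2}}\,\mu^{-1}\nabla^{\times} = \mu^{-1}\nabla^{\times}\,\bigl(M_{EE}^2\bigr)^{\nicefrac{1}{2}}$, which is a standard consequence of the polar decomposition of the closed operator $\mu^{-1}\nabla^{\times}$; since $\abs{\Maux}$ is injective on $\bigl(\ker\Maux\bigr)^{\perp_W}$, this identity forces $\imath^E\psi^E\in\mathcal{J}_+$. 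For the reverse inclusion $\mathcal{J}_+^E\subseteq\ker(\nabla\cdot\eps)$: every $\Psi\in\mathcal{J}_+$ is $\scpro{\, \cdot \,}{\, \cdot \,}_W$-orthogonal to $\mathcal{G} = \ker\Maux\supseteq\ran\nabla\oplus\{0\}$, so its electric component lies in $\bigl(\ran\nabla\bigr)^{\perp_{\eps}} = \ker(\nabla\cdot\eps)$. Hence $\mathcal{J}_+^E = \mathrm{pr}^E(\mathcal{J}_+) = \ker(\nabla\cdot\eps)$ and likewise $\mathcal{J}_+^H = \ker(\nabla\cdot\mu)$. (Alternatively, part~(2) on its own can be had more cheaply from the observation that the chirality operator $\mathrm{diag}(\id,-\id)$ is $\scpro{\, \cdot \,}{\, \cdot \,}_W$-unitary and anticommutes with $\Maux$, hence swaps $\mathcal{J}_+$ and $\mathcal{J}_-$ and so equates their electric components; but the route via $\imath^E$ is worth the extra work because it also produces the bounded injection needed further downstream.)

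Part~(3) is then immediate: by \eqref{setup:eqn:first_order_Helmholtz_splitting} $\Hil = \mathcal{J}_+\oplus\mathcal{G}$ with $\mathcal{G} = \ran\nabla\oplus\ran\nabla$ by part~(1), so — taking vanishing magnetic part in $\mathcal{G}$ — the electric component of $\Hil$ equals $\ker(\nabla\cdot\eps) + \ran\nabla = L^2_{\eps}(\R^3,\C^3)$ by \eqref{setup:eqn:second_order_Helmholtz_splitting:electric}; the opposite inclusion $\Hil^E\subseteq L^2_{\eps}(\R^3,\C^3)$ is trivial, and $\Hil^H = L^2_{\mu}(\R^3,\C^3)$ is obtained the same way. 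The only genuinely analytic point, and the one I expect to require the most care, is showing in part~(2) that $\imath^E$ lands in the \emph{positive} frequency space $\mathcal{J}_+$ and not merely in $\mathcal{J}_+\oplus\mathcal{J}_-$; this uses the functional-calculus intertwining above plus a little attention to domains, since $\bigl(M_{EE}^2\bigr)^{-\nicefrac{1}{2}}$ is unbounded whenever $0$ is not isolated in $\sigma(M_{EE}^2)$, so $\imath^E$ has to be obtained by continuous extension rather than read off literally. Parts~(1) and~(3), by contrast, are essentially bookkeeping once the Hodge identity $\ker\nabla^{\times} = \ran\nabla$ is granted.
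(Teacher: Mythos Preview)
Your proposal is correct and follows essentially the same overall strategy as the paper, in particular the use of $\imath^E$ to produce the reverse inclusion in part~(2). A few differences in execution are worth noting.

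For part~(1), your route via $\ker(\Maux)^2 = \ker\Maux$ (selfadjointness), $\ker\Maux = \ker\Rot$ (invertibility of $W$), and the block structure is slicker than the paper's argument, which instead works directly with $M_{EE}^2\psi^E = 0$ and does a case analysis, showing that $\mu^{-1}\nabla^{\times}\psi^E \in \ran(\mu^{-1}\nabla^{\times}) \cap \ran\nabla = \{0\}$ via the $\mu$-weighted Helmholtz decomposition. Your argument gets the same conclusion with less computation.

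For part~(2), the two proofs differ only in packaging. The paper obtains boundedness of $\imath^E$ by recognizing its magnetic component as a block of $\mathrm{sgn}(\Maux)$, which has operator norm $1$; you instead compute $\bnorm{\mu^{-1}\nabla^{\times}(M_{EE}^2)^{-\nicefrac{1}{2}}\psi^E}_{\mu} = \snorm{\psi^E}_{\eps}$ directly using that the weighted adjoint of $\mu^{-1}\nabla^{\times}$ is $\eps^{-1}\nabla^{\times}$ --- this is cleaner and in fact sharper (it shows $\imath^E$ is $\sqrt{2}$ times an isometry). For the positive-frequency check, the paper verifies $\mathrm{sgn}(\Maux)\,\imath^E\psi^E = +\,\imath^E\psi^E$, whereas you verify the equivalent statement $\Maux\,\imath^E\psi^E = \sabs{\Maux}\,\imath^E\psi^E$; both rely on the same intertwining $f(M_{HH}^2)\,\mu^{-1}\nabla^{\times} = \mu^{-1}\nabla^{\times}\,f(M_{EE}^2)$, which the paper phrases as ``$M_{HH}^2$ becomes $M_{EE}^2$ when commuted past $\eps^{-1}\nabla^{\times}$'' and you phrase via polar decomposition. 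Your aside about the chirality $\mathrm{diag}(\id,-\id)$ swapping $\mathcal{J}_+$ and $\mathcal{J}_-$ is a genuinely shorter alternative for part~(2) alone that the paper does not use.

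Part~(3) is handled identically in both.
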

This Proposition tells us several things: on a conceptual level it states that the subspace of transversal electric fields $\ker \, ( \nabla \cdot \eps )$ does not contain (unphysical) waves that cannot be mapped to a real, transversal electromagnetic field $(\mathbf{E},\mathbf{H}) \in L^2(\R^3,\R^6)$. Indeed, concatenating $Q$ with the projection $\mathrm{pr}^E$ gives us an \emph{injective} map $\mathrm{pr}^E \circ Q : L^2(\R^3,\R^6) \longrightarrow L^2_{\eps}(\R^3,\C^3)$. 

And on a practical level it tells us that we could have defined $M_{EE}^2$ and $M_{HH}^2$ via $M^2$ rather than $(\Maux)^2$. Note, however, that $M^2 \neq M_{EE}^2 \oplus M_{HH}^2$ is not the direct sum of two operators because the Hilbert space $M$ is defined on does not — the $\omega \geq 0$ condition imposes a relation on electric and magnetic components. 
\begin{proof}
	Since the roles of electric and magnetic fields are symmetric, it suffices to make the arguments explicit only for the electric field. 
	\begin{enumerate}[(1)]
		\item We note that $M_{EE}^2 \psi^E = \eps^{-1} \, \nabla^{\times} \, \mu^{-1} \, \nabla^{\times} \psi^E = 0$ implies that either $\nabla^{\times} \psi^E = 0$ or $\mu^{-1} \, \nabla^{\times} \psi^E \in \ran \nabla = \ker \nabla^{\times}$ (as $\eps^{-1}$ and $\mu^{-1}$ are bounded and have bounded inverses). Thanks to $\nabla \times \nabla f = 0$, we deduce the inclusion $\ran \nabla \subseteq \ker M_{EE}^2$. 
		
		In the second case we are looking for a vector that lies in the intersection 
		\begin{align*}
			\ran \bigl ( \mu^{-1} \, \nabla^{\times} \bigr ) \cap \ran \nabla = \{ 0 \}
			, 
		\end{align*}
		and we will show it is necessarily zero: the adapted Helmholtz decomposition \cite[Appendix~A]{DeNittis_Lein:adiabatic_periodic_Maxwell_PsiDO:2013} of $L^2_{\mu}(\R^3,\C^3) = \ran \nabla \oplus \bigl ( \ran \nabla \bigr )^{\perp_{\mu}}$ implies 
		\begin{align*}
			\ran \bigl ( \mu^{-1} \, \nabla^{\times} \bigr ) &= \ker \, ( \nabla \cdot \mu ) 
			= \bigl ( \ran \nabla \bigr )^{\perp_{\mu}} 
			\subset L^2_{\mu}(\R^3,\C^3)
			. 
		\end{align*}
		Because $L^2_{\mu}(\R^3,\C^3) = L^2(\R^3,\C^3) = L^2_{\eps}(\R^3,\C^3)$ all coincide as \emph{Banach} spaces by our assumptions on $\eps$ and $\mu$, we conclude $\ran \bigl ( \mu^{-1} \, \nabla^{\times} \bigr ) \cap \ran \nabla = \{ 0 \}$ independently of the choice of scalar product. This shows the opposite inclusion $\ran \nabla \supseteq \ker M_{EE}^2$, and in combination yields $\ran \nabla = \ker M_{EE}^2$. 
		\item By the Helmholtz composition $L^2_{\eps}(\R^3,\C^3) = \ran \nabla \oplus \ker \, ( \nabla \cdot \eps ) = \ker M_{EE}^2 \oplus \ker \, ( \nabla \cdot \eps )$ holds, and since $M_{EE}^2 \geq 0$ is non-negative, we can express the second summand as 
		\begin{align*}
			\ker \, ( \nabla \cdot \eps ) = 1_{(0,\infty)} \bigl ( M_{EE}^2 \bigr ) \bigl [ L^2_{\eps}(\R^3,\C^3) \bigr ] 
			. 
		\end{align*}
		The same reasoning applies to $\mathcal{J}_+ = 1_{(0,\infty)}(\Maux) \bigl [ L^2_W(\R^3,\C^6) \bigr ]$: the gradient fields $\mathcal{G} = 1_{\{ 0 \}}(\Maux) \bigl [ L^2_W(\R^3,\C^6) \bigr ]$ (this time seen as a subspace of $L^2_W(\R^3,\C^6)$) make up the kernel of $\Maux$ and spectral calculus gives us for free that $\mathcal{J}_+$ is $\scpro{\, \cdot \,}{\, \cdot \,}_W$-orthogonal to the gradient fields. Manually writing out the orthogonality condition then yields that elements of $\mathcal{J}_+ \subset \ker \, (\Div \, W) = \mathcal{G}^{\perp_W} \subset L^2(\R^3,\C^6)$ satisfy the divergence-free condition $\Div \, W \, \Psi = 0$, which we can write out as $\nabla \cdot \eps \, \psi^E = 0 = \nabla \cdot \mu \, \psi^H$. This shows the inclusion $\mathcal{J}_+^E \subseteq \ker \, ( \nabla \cdot \eps )$. 
		
		The other inclusion $\mathcal{J}_+^E \supseteq \ker \, ( \nabla \cdot \eps )$ requires a bit more work. The key here is the map $\imath^E$ defined by \eqref{setup:eqn:inclusion:electric} that associates to each divergence-free electric field an element in $\mathcal{J}_+$. Of course, we need to show that this map is well-defined as a map $\ker \, ( \nabla \cdot \eps ) \longrightarrow L^2_W(\R^3,\C^6)$ and then prove that it maps divergence-free electric fields onto \emph{positive} frequency electromagnetic fields (which are automatically divergence-free). 
		
		To show that $\mu^{-1} \, \nabla^{\times} \, \bigl ( M_{EE}^2 \bigr )^{- \nicefrac{1}{2}} \psi^E$ defines an element of $L^2_{\mu}(\R^3,\C^3)$, we note that the squared auxiliary Maxwell operator 
		\begin{align*}
			(\Maux)^2 \big \vert_{\ker \, (\Div \, W)} = M_{EE}^2 \big \vert_{\ker \, ( \nabla \cdot \eps )} \oplus M_{HH}^2 \big \vert_{\ker \, ( \nabla \cdot \mu )}
		\end{align*}
		restricted to divergence-free fields (including negative frequency waves) coincides with the restriction of $M_{EE}^2$ and $M_{HH}^2$ to the divergence-free electric and magnetic fields. Then for all divergence-free electromagnetic fields $\Psi = (\psi^E , \psi^H) \in \ker \, (\Div \, W)$ the norm of the operator 
		\begin{align}
			\mathrm{sgn} \bigl ( \Maux \bigr ) \, \Psi &= \Maux \, \bigl ( (\Maux)^2 \bigr )^{-\nicefrac{1}{2}} \Psi 
			= \left (
			\begin{matrix}
				+ \ii \eps^{-1} \, \nabla^{\times} \, \bigl ( M_{HH}^2 \bigr )^{- \nicefrac{1}{2}} \psi^H \\
				- \ii \mu^{-1} \, \nabla^{\times} \, \bigl ( M_{EE}^2 \bigr )^{- \nicefrac{1}{2}} \psi^E \\
			\end{matrix}
			\right )
			\label{setup:eqn:sgn_M_EE_2_equation}
			\\
			&= \bigl ( (\Maux)^2 \bigr )^{-\nicefrac{1}{2}} \, \Maux \Psi 
			= \left (
			\begin{matrix}
				+ \bigl ( M_{EE}^2 \bigr )^{- \nicefrac{1}{2}} \, \ii \eps^{-1} \, \nabla^{\times} \, \psi^H \\
				- \bigl ( M_{HH}^2 \bigr )^{- \nicefrac{1}{2}} \, \ii \mu^{-1} \, \nabla^{\times} \psi^E \\
			\end{matrix}
			\right )
			\label{setup:eqn:sgn_M_EE_2_equation_alternate} 
		\end{align}
		is necessarily bounded by $1 \cdot \snorm{\Psi}_W$: to justify \eqref{setup:eqn:sgn_M_EE_2_equation} we note that $\abs{\Maux} = \bigl ( (\Maux)^2 \bigr )^{\nicefrac{1}{2}}$ maps $\domain(\Maux) \cap \ker \, \bigl ( \Div \, W \bigr )$ onto $\ran \Maux = \ker \, \bigl ( \Div \, W \bigr )$, so that its inverse 
		\begin{align*}
			\bigl ( (\Maux)^2 \bigr )^{- \nicefrac{1}{2}} = \babs{\Maux}^{-1} : \ker \, \bigl ( \Div \, W \bigr ) \longrightarrow \domain(\Maux) \cap \ker \, \bigl ( \Div \, W \bigr )
		\end{align*}
		maps divergence-free fields onto divergence-free fields from the domain. Hence, applying $\Maux$ from the left gives us a \emph{bounded} operator on $\ker \, \bigl ( \Div \, W \bigr )$. 
		
		Therefore, $\nabla^{\times} \, \bigl ( M_{EE}^2 \bigr )^{- \nicefrac{1}{2}} : \ker \, ( \nabla \cdot \eps ) \longrightarrow \ran \nabla^{\times} = \ker \, ( \nabla \cdot )$ maps to the divergence-free fields, so if we multiply from the left with $\mu^{-1}$, we indeed get something in $\ker \, (\nabla \cdot \mu)$. Moreover, we deduce that this is in fact a \emph{bounded} operator, and $\imath^E$ is well-defined as a map 
		\begin{align*}
			\ker \, ( \nabla \cdot \eps ) \longrightarrow \ker \, (\Div \, W) = \ker \, ( \nabla \cdot \eps ) \oplus \ker \, ( \nabla \cdot \mu ) 
			. 
		\end{align*}
		All that is left is to prove that $\imath^E(\psi^E)$ is a positive frequency wave. The reason we defined $\imath^E$ the way we did is readily apparent when we compare it with \eqref{setup:eqn:sgn_M_EE_2_equation}: the operator $\mathrm{sgn} \bigl ( \Maux \bigr ) \big \vert_{\ker \, (\Div \, W)}$ has two eigenvalues, $\pm 1$, and the positive frequency fields $\Psi \in \mathcal{J}_+$ form the eigenspace of $\mathrm{sgn} \bigl ( \Maux \big \vert_{\ker \, (\Div \, W)} \bigr )$ to the eigenvalue $+1$. By design $\imath^E$ gives us an eigenvector to the eigenvalue $+1$: because we can write $\mathrm{sgn}(\Maux)$ in both ways, \eqref{setup:eqn:sgn_M_EE_2_equation} and \eqref{setup:eqn:sgn_M_EE_2_equation_alternate}, we deduce 
		\begin{align*}
			\ii \eps^{-1} \, \nabla^{\times} \, \bigl ( M_{HH}^2 \bigr )^{- \nicefrac{1}{2}} \psi^H = \bigl ( M_{EE}^2 \bigr )^{- \nicefrac{1}{2}} \, \ii \eps^{-1} \, \nabla^{\times} \, \psi^H
		\end{align*}
		and a similar equation for the other component. The important fact here is that $M_{HH}^2$ becomes $M_{EE}^2$ when we “commute” it with $+ \ii \eps^{-1} \, \nabla^{\times}$, which is what we will exploit when applying $\mathrm{sgn}(\Maux)$ to $\imath^E(\psi^E)$, 
		\begin{align*}
			\mathrm{sgn}(\Maux) \, \imath^E(\psi^E) &= \left (
			\begin{matrix}
				\eps^{-1} \, \nabla^{\times} \, \bigl ( M_{HH}^2 \bigr )^{- \nicefrac{1}{2}} \mu^{-1} \, \nabla^{\times} \, \bigl ( M_{EE}^2 \bigr )^{- \nicefrac{1}{2}} \, \psi^E \\
				- \ii \mu^{-1} \, \nabla^{\times} \, \bigl ( M_{EE}^2 \bigr )^{- \nicefrac{1}{2}} \psi^E \\
			\end{matrix}
			\right )
			\\
			&= (+1) \, \left (
			\begin{matrix}
				\eps^{-1} \, \nabla^{\times} \, \mu^{-1} \, \nabla^{\times} \, \bigl ( M_{EE}^2 \bigr )^{-1} \psi^E \\
				- \ii \mu^{-1} \, \nabla^{\times} \, \bigl ( M_{EE}^2 \bigr )^{- \nicefrac{1}{2}} \psi^E \\
			\end{matrix}
			\right )
			= + \imath^E(\psi^E)
			.
		\end{align*}
		Note that we may either use $\psi^E$ or $\psi^H$ as an independent variable, and choosing $\psi^H$ instead of $\psi^E$ gives us $\imath^H$.
		
		Evidently, its inverse — the projection $\mathrm{pr}^E : (\psi^E , \psi^H) \mapsto \psi^E$ onto the first component is bounded and a left-inverse to $\imath^E$, \ie we have $\mathrm{pr}^E \circ \imath^E = \id_{\ker \, ( \nabla \cdot \eps )}$. This shows the opposite inclusion, $\mathcal{J}_+^E \supseteq \ker \, ( \nabla \cdot \eps )$. 
		\item This follows from the Helmholtz decompositions~\eqref{setup:eqn:second_order_Helmholtz_splitting} combined with (1) and (2). 
	\end{enumerate}
\end{proof}
\begin{remark}
	A second way to verify that $\imath^E(\psi^E)$ is composed solely of positive frequencies relies on functional calculus for $M_{EE}^2 = \eps^{-1} \, \nabla^{\times} \, \mu^{-1} \nabla^{\times}$. If we assume for a moment that $\imath^E(\psi^E) \in \mathcal{D}(\Maux)$, then the following expressions are all well-defined: 
	\begin{align*}
		\Maux \, \imath^E(\psi^E) 
		&= \int_{(0,\infty)} \Maux \, \left (
		\begin{matrix}
			\dd 1_{\lambda} \bigl ( M_{EE}^2 \bigr ) \, \psi^E \\
			- \ii \lambda^{-\nicefrac{1}{2}} \, \mu^{-1} \, \nabla^{\times} \, \dd 1_{\lambda} \bigl ( M_{EE}^2 \bigr ) \, \psi^E \\
		\end{matrix}
		\right )
		\\
		&
		= \int_{(0,\infty)} \left (
		\begin{matrix}
			\lambda^{-\nicefrac{1}{2}} \, \eps^{-1} \, \nabla^{\times} \, \mu^{-1} \nabla^{\times} \, \dd 1_{\lambda} \bigl ( M_{EE}^2 \bigr ) \, \psi^E \\
			- \ii \, \mu^{-1} \, \nabla^{\times} \, \dd 1_{\lambda} \bigl ( M_{EE}^2 \bigr ) \, \psi^E \\
		\end{matrix}
		\right )
		\\
		&= \int_{(0,\infty)} \bigl ( + \lambda^{\nicefrac{1}{2}} \bigr ) \, \left (
		\begin{matrix}
			\dd 1_{\lambda} \bigl ( M_{EE}^2 \bigr ) \, \psi^E \\
			- \ii \lambda^{-\nicefrac{1}{2}} \, \mu^{-1} \, \nabla^{\times} \, \dd 1_{\lambda} \bigl ( M_{EE}^2 \bigr ) \, \psi^E \\
		\end{matrix}
		\right )
	\end{align*}
	Of course, if $\imath^E(\psi^E) \not \in \mathcal{D}(\Maux)$, we need to regularize: if we replace $\imath^E(\psi^E)$ with the cut off wave $\Psi_b := \imath^E \Bigl ( 1_{(0,b)} \bigl ( M_{EE}^2 \bigr ) \, \psi^E \Bigr )$ for $b < \infty$, the above computation then shows that for $\Psi_b$ only frequencies in the range $(0,\sqrt{b})$ are excited, negative frequencies are excluded. The limit $\Maux \Psi_b$ as $b \rightarrow \infty$ will in general not exist in $L^2_W(\R^3,\C^6)$, but nevertheless, we still see that $\imath^E(\psi^E)$ is composed solely of positive frequencies.
\end{remark}
When we showed part (2), we have indeed also furnished a proof for the following Corollary that will be useful for our discussion of periodic electromagnetic media in Section~\ref{periodic}: 
\begin{corollary}\label{setup:cor:inclusion_bounded_injections}
	The maps $\imath^E : \ker \, ( \nabla \cdot \eps ) \longrightarrow \mathcal{J}_+$ and $\imath^H : \ker \, ( \nabla \cdot \mu ) \longrightarrow \mathcal{J}_+$ from equations~\eqref{setup:eqn:inclusion} are bounded injections with left-inverses $\mathrm{pr}^{E,H} : (\psi^E , \psi^H) \mapsto \psi^{E,H}$.
\end{corollary}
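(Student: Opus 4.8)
The plan is to extract this Corollary essentially for free from the proof of Proposition~\ref{setup:prop:identification_spaces}(2), where the substantive analytic estimates have already been carried out; only three things need to be isolated — that $\imath^E$ is well-defined as a map into $\mathcal{J}_+$, that it is bounded, and that it admits a bounded left-inverse — with injectivity then a formal consequence of the last. As usual it suffices to treat $\imath^E$, since $\imath^H$ is obtained by the substitution $\eps \leftrightarrow \mu$, $E \leftrightarrow H$.

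The cleanest route I see is to observe that, for $\psi^E \in \ker \, ( \nabla \cdot \eps )$, the pair $\bigl ( \psi^E , 0 \bigr )$ lies in $\ker \, ( \Div \, W )$, so that evaluating equation~\eqref{setup:eqn:sgn_M_EE_2_equation} at $\psi^H = 0$ — legitimate because the proof of Proposition~\ref{setup:prop:identification_spaces}(2) already showed $\bigl ( M_{EE}^2 \bigr )^{- \nicefrac{1}{2}}$, and hence $\mu^{-1} \, \nabla^{\times} \, \bigl ( M_{EE}^2 \bigr )^{- \nicefrac{1}{2}}$, to be a bounded operator on $\ker \, ( \nabla \cdot \eps )$ (this is where part~(1), which strips the kernel $\ran \nabla$ off $M_{EE}^2$, is used) — yields the identity
\begin{align*}
	\imath^E(\psi^E) = \left ( \begin{matrix} \psi^E \\ 0 \end{matrix} \right ) + \mathrm{sgn}(\Maux) \left ( \begin{matrix} \psi^E \\ 0 \end{matrix} \right ) = 2 \, P_+ \left ( \begin{matrix} \psi^E \\ 0 \end{matrix} \right ) .
\end{align*}
The last equality uses $\tfrac{1}{2} \bigl ( \id + \mathrm{sgn}(\Maux) \bigr ) = 1_{(0,\infty)}(\Maux) = P_+$ on $\ker \, ( \Div \, W ) = \mathcal{G}^{\perp_W}$, the $W$-orthogonal complement of the gradient fields $\mathcal{G} = \ker \Maux$, on which $\Maux$ therefore has trivial kernel. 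This one identity does double duty: it exhibits $\imath^E(\psi^E) \in P_+ \bigl [ L^2_W(\R^3,\C^6) \bigr ] = \mathcal{J}_+$, so that $\imath^E : \ker \, ( \nabla \cdot \eps ) \longrightarrow \mathcal{J}_+$ is well-defined, and, $P_+$ being an orthogonal projection, it gives the bound $\snorm{\imath^E(\psi^E)}_W = 2 \, \bnorm{P_+ ( \psi^E , 0 )}_W \leq 2 \, \snorm{( \psi^E , 0 )}_W = 2 \, \snorm{\psi^E}_{\eps}$, so $\imath^E$ is bounded. (One may instead invoke the eigenvector computation $\mathrm{sgn}(\Maux) \, \imath^E(\psi^E) = + \imath^E(\psi^E)$ of that proof for the $\mathcal{J}_+$-membership; estimating the lower component alone against $\snorm{\mathrm{sgn}(\Maux)} \leq 1$ would sharpen the constant to $\sqrt{2}$, but this is immaterial.)

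It remains to produce the left-inverse. The projection $\mathrm{pr}^E : L^2_W(\R^3,\C^6) \longrightarrow L^2_{\eps}(\R^3,\C^3)$, $(\psi^E , \psi^H) \mapsto \psi^E$, satisfies $\snorm{\mathrm{pr}^E \Psi}_{\eps} \leq \snorm{\Psi}_W$ because $W$ is block-diagonal with $\eps$ in its electric block, hence is bounded, and $\mathrm{pr}^E \circ \imath^E = \id_{\ker \, ( \nabla \cdot \eps )}$ holds by the very definition~\eqref{setup:eqn:inclusion:electric} of $\imath^E$; thus $\imath^E$ is injective with bounded left-inverse $\mathrm{pr}^E$, and the analogous argument with $\mathrm{pr}^H$ settles $\imath^H$. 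I do not expect a genuine obstacle: the only non-formal ingredient is the contractivity $\snorm{\mathrm{sgn}(\Maux) \vert_{\ker \, ( \Div \, W )}} \leq 1$ — equivalently, the boundedness of $\nabla^{\times} \bigl ( M_{EE}^2 \bigr )^{- \nicefrac{1}{2}}$ on $\ker \, ( \nabla \cdot \eps )$ — and this is already established inside the proof of Proposition~\ref{setup:prop:identification_spaces}. The one point deserving care is that $\bigl ( M_{EE}^2 \bigr )^{- \nicefrac{1}{2}}$ must be read as the inverse of the \emph{strictly positive} operator $M_{EE}^2 \vert_{\ker \, ( \nabla \cdot \eps )}$, which is precisely what part~(1) of the Proposition secures.
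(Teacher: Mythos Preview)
Your proof is correct and follows essentially the same route as the paper, which simply remarks that the Corollary is already contained in the proof of Proposition~\ref{setup:prop:identification_spaces}(2). Your identity $\imath^E(\psi^E) = 2 P_+ (\psi^E,0)$ is a clean repackaging of the paper's eigenvector computation $\mathrm{sgn}(\Maux)\,\imath^E(\psi^E) = +\,\imath^E(\psi^E)$ --- both rest on the same analytic input (boundedness of $\mathrm{sgn}(\Maux)$ on $\ker(\Div\,W)$) and yield the $\mathcal{J}_+$-membership and boundedness in one stroke; the left-inverse $\mathrm{pr}^E$ is handled identically.
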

Another way the equivalence of electric and magnetic field equations manifests itself is as a \emph{unitary} equivalence of the positive frequency parts of $M_{EE}^2$ and $M_{HH}^2$. 
\begin{corollary}
	The positive frequency part $M_{HH}^2 \vert_{\omega > 0} := M_{HH}^2 \, \big \vert_{\ker \, ( \nabla \cdot \mu )}$ is related to the positive frequency part of 
	\begin{align*}
		M_{EE}^2 \vert_{\omega > 0} = U_{EH} \, M_{HH}^2 \vert_{\omega > 0} \, U_{EH}^{-1} 
	\end{align*}
	via the \emph{unitary} map 
	\begin{align*}
		U_{EH} := \mathrm{pr}^E \circ \imath^H 
		: \ker \, ( \nabla \cdot \mu ) \subset L^2_{\mu}(\R^3,\C^3) \longrightarrow \ker \, ( \nabla \cdot \eps ) \subset L^2_{\eps}(\R^3,\C^3)
		. 
	\end{align*}
\end{corollary}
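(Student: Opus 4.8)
The plan is to establish three things in turn: that $U_{EH} := \mathrm{pr}^E \circ \imath^H$ is a bounded bijection, that it is an isometry for the relevant weighted scalar products, and that it conjugates the two squared operators correctly. For the first point, both $\imath^H : \ker(\nabla \cdot \mu) \to \mathcal{J}_+$ and $\mathrm{pr}^E : \mathcal{J}_+ \to \ker(\nabla \cdot \eps)$ are bounded by Corollary~\ref{setup:cor:inclusion_bounded_injections} and Proposition~\ref{setup:prop:identification_spaces}, so $U_{EH}$ is bounded; for surjectivity I would exhibit the inverse as $U_{HE} := \mathrm{pr}^H \circ \imath^E$ and check $U_{HE} \circ U_{EH} = \id$ and $U_{EH} \circ U_{HE} = \id$ by composing the explicit formulas in \eqref{setup:eqn:inclusion}. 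Concretely, $U_{EH} \psi^H = + \ii \eps^{-1} \, \nabla^\times \, (M_{HH}^2)^{-\nicefrac12} \psi^H$, and applying $U_{HE}$ to this and simplifying $\mu^{-1} \nabla^\times (M_{EE}^2)^{-\nicefrac12} \, \ii \eps^{-1} \nabla^\times (M_{HH}^2)^{-\nicefrac12}$ should collapse to the identity once one uses the intertwining identity already derived in the proof of Proposition~\ref{setup:prop:identification_spaces}, namely that $\ii\eps^{-1}\nabla^\times$ turns $(M_{HH}^2)^{-\nicefrac12}$ into $(M_{EE}^2)^{-\nicefrac12}$ when moved to the other side, together with $\eps^{-1}\nabla^\times\mu^{-1}\nabla^\times (M_{EE}^2)^{-1} = \id$ on $\ker(\nabla\cdot\eps)$.

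For the isometry property I would argue that $\imath^H$ is a $\scpro{\,\cdot\,}{\,\cdot\,}_\mu$-to-$\scpro{\,\cdot\,}{\,\cdot\,}_W$ isometry onto $\mathcal{J}_+^H \oplus$ (its image), since $\imath^H(\psi^H)$ has second component exactly $\psi^H$ plus a first component $+\ii\eps^{-1}\nabla^\times(M_{HH}^2)^{-\nicefrac12}\psi^H$; the energy norm of $\imath^H(\psi^H)$ is then $\bnorm{\psi^H}_\mu^2 + \bnorm{\eps^{-1}\nabla^\times(M_{HH}^2)^{-\nicefrac12}\psi^H}_\eps^2$, and the second term equals $\bscpro{\psi^H}{(M_{HH}^2)^{-\nicefrac12}\nabla^\times\eps^{-1}\nabla^\times(M_{HH}^2)^{-\nicefrac12}\psi^H}_\mu = \bscpro{\psi^H}{(M_{HH}^2)^{-\nicefrac12} M_{HH}^2 (M_{HH}^2)^{-\nicefrac12}\psi^H}_\mu = \bnorm{\psi^H}_\mu^2$ after recognizing $\nabla^\times\eps^{-1}\nabla^\times = \mu M_{HH}^2$. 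Hence $\bnorm{\imath^H(\psi^H)}_W^2 = 2\bnorm{\psi^H}_\mu^2$ — a fixed scaling, not an isometry on the nose, but one normalizes $\imath^H$ by $\nicefrac{1}{\sqrt2}$ if needed, or more simply observes $U_{EH}$ directly: $\bnorm{U_{EH}\psi^H}_\eps^2 = \bnorm{\eps^{-1}\nabla^\times(M_{HH}^2)^{-\nicefrac12}\psi^H}_\eps^2 = \bnorm{\psi^H}_\mu^2$ by the same computation, so $U_{EH}$ \emph{is} genuinely an isometry, and being bijective, unitary.

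Finally, for the intertwining $M_{EE}^2\vert_{\omega>0} = U_{EH} M_{HH}^2\vert_{\omega>0} U_{EH}^{-1}$, I would verify it on the dense domain by direct substitution: $U_{EH} M_{HH}^2 U_{EH}^{-1}$ applied to a transversal $\psi^E$, using $U_{EH}^{-1}\psi^E = U_{HE}\psi^E = +\ii\mu^{-1}\nabla^\times(M_{EE}^2)^{-\nicefrac12}\psi^E$ (read off from $\imath^E$ by swapping roles), then $M_{HH}^2$ hitting that, then $U_{EH}$, and collapsing via the same commutation relations; the functional-calculus identity $(M_{HH}^2)^{-\nicefrac12}$ versus $(M_{EE}^2)^{-\nicefrac12}$ is what makes the spectral data match. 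Alternatively and more cleanly, one restricts the unitary $\mathrm{sgn}(\Maux)$-eigenspace structure: $\imath^E$ and $\imath^H$ both land in $\mathcal{J}_+$ and both are right-split by the projections, so $\imath^E (M_{EE}^2\vert_{\omega>0}) = (\Maux)^2\vert_{\mathcal{J}_+}\,\imath^E$ and likewise for $H$; composing gives the claim. The main obstacle I anticipate is purely the bookkeeping of operator domains — the square roots $(M_{\bullet\bullet}^2)^{-\nicefrac12}$ are unbounded, so one must check each manipulation is performed on a common core (e.g. the range of $1_{(0,b)}(M_{EE}^2)$ as in the preceding remark) and then pass to the limit, rather than a conceptual difficulty; the algebraic identities themselves are forced by the block-offdiagonal structure of $\Maux$ and have effectively already been established in the proof of Proposition~\ref{setup:prop:identification_spaces}.
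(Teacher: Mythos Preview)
Your proposal is correct and follows essentially the same approach as the paper: well-definedness and invertibility are delegated to Proposition~\ref{setup:prop:identification_spaces}, and unitarity is verified by the direct inner-product computation you give (the paper's computation is line-for-line the one you sketch, carried out on the dense domain $\domain(M_{HH}^2) \cap \ker(\nabla\cdot\mu)$ and extended by density). Two small slips worth fixing: the sign in your formula for $U_{HE}\psi^E$ should be $-\ii\mu^{-1}\nabla^\times(M_{EE}^2)^{-1/2}\psi^E$ (compare \eqref{setup:eqn:inclusion:electric}), and in your isometry step the $\mu$-scalar-product expression should carry a $\mu^{-1}$ in front of $\nabla^\times\eps^{-1}\nabla^\times$ before you identify it with $M_{HH}^2$ --- your conclusion $\bnorm{U_{EH}\psi^H}_\eps = \bnorm{\psi^H}_\mu$ is nonetheless correct. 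You are in fact more thorough than the paper, which does not spell out the intertwining $M_{EE}^2\vert_{\omega>0} = U_{EH} M_{HH}^2\vert_{\omega>0} U_{EH}^{-1}$ at all; your verification via the commutation identities from the proof of Proposition~\ref{setup:prop:identification_spaces} is the right way to close that gap.
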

This was already recognized by Figotin and Klein (\cf \cite[equations~(9)–(10)]{Figotin_Klein:localization_classical_waves_II:1997}), although they did not give a proof showing that $U_{EH}$ is well-defined. 
\begin{proof}
	The well-definedness and invertibility of $U_{EH}$ are direct consequences of Proposition~\ref{setup:prop:identification_spaces}. Therefore unitarity follows from checking 
	\begin{align*}
		\bscpro{U_{EH} \varphi^H}{U_{EH} \psi^H}_{\eps} &= \Bscpro{\mu^{-1} \, \nabla^{\times} \, \eps^{-1} \, \nabla^{\times} \, (M_{HH}^2)^{-\nicefrac{1}{2}} \varphi^H}{(M_{HH}^2)^{-\nicefrac{1}{2}} \psi^H}_{\mu}
		\\
		&= \Bscpro{\varphi^H}{(M_{HH}^2)^{+\nicefrac{1}{2}} \, (M_{HH}^2)^{-\nicefrac{1}{2}} \psi^H}_{\mu} 
		= \scpro{\varphi^H}{\psi^H}_{\mu} 
	\end{align*}
	by direct computation on the dense subset $\domain(M_{HH}^2) \cap \ker \, ( \nabla \cdot \mu )$, and extending this by density to all of $\ker \, ( \nabla \cdot \mu )$. 
\end{proof}
%

\subsubsection{Equivalence of the dynamics} 
\label{setup:equivalence:dynamics}
The actual proof of equivalence is completely standard, indeed the hard part was in properly defining Maxwell's equations and verify that the spaces on which the second-order equations are defined are correct. 
\begin{proof}[Theorem~\ref{setup:thm:equivalence}]
	As before, we will only formulate the proof for the electric field. Moreover, to simplify the presentation we impose in addition that the complex initial condition $\Phi := Q (\mathbf{E}_0 , \mathbf{H}_0) \in \mathcal{D}(M^2) \cap \mathcal{J}_+$ holds. This just allows us to write out the proof for strong rather than weak solutions. 
	
	Thanks to Theorem~\ref{setup:thm:equivalence_frameworks} the (strong) solution $\Psi(t) = \e^{- \ii (t - t_0) M} \Phi$ to Maxwell's equations~\eqref{setup:eqn:Maxwell_equations} can be expressed in terms of the Maxwell operator. Seeing as $\e^{- \ii t M} : \domain(M^2) \longrightarrow \domain(M^2)$ preserves the core $\domain(M^2)$, the second-order time-derivative of $\Psi(t)$ exists in $\Hil$. Separating out the electric components, $\partial_t^2 \Psi(t) + M^2 \Psi(t) = 0$ yields~\eqref{setup:eqn:wave_equation_electric:dynamics}. By definition of $\Hil$ and the transversality of the initial condition, also the second initial condition $\partial_t \psi^E(t_0) = - \mu^{-1} \, \nabla \times \psi^H(t_0)$ of the wave equation is satisfied. Hence, the magnetic and electric part satisfy \eqref{setup:eqn:wave_equation_electric:dynamics}. Lastly, the transversality condition~\eqref{setup:eqn:wave_equation_electric:constraint} is preserved (Proposition~\ref{setup:prop:complex_representation_real_fields}~(2)). This shows that such a strong solution to Maxwell's equations yields a strong solution to the wave equation~\eqref{setup:eqn:wave_equation_electric} for the electric field. 
	
	Now conversely, suppose $\psi^E(t)$ solves \eqref{setup:eqn:wave_equation_electric}. Our additional assumption $\Phi \in \mathcal{D}(M^2)$ guarantees that it is indeed a strong solution as the first- and second-order time-derivatives exist in $L^2_{\eps}(\R^3,\C^3)$. Writing \eqref{setup:eqn:wave_equation_electric} as a first-order equation yields 
	\begin{align*}
		\frac{\partial}{\partial t} \left (
		\begin{matrix}
			\psi^E(t) \\
			\eta^E(t) \\
		\end{matrix}
		\right ) = \left (
		\begin{matrix}
			0 & \id \\
			M_{EE}^2 & 0 \\
		\end{matrix}
		\right ) \left (
		\begin{matrix}
			\psi^E(t) \\
			\eta^E(t) \\
		\end{matrix}
		\right )
		. 
	\end{align*}
	However, instead of using $\eta^E(t) = \partial_t \psi^E(t)$ as the second variable, we can introduce the magnetic field $\psi^H(t) := \bigl ( \nabla^{\times} \bigr )^{-1} \, \eps \, \eta^H(t)$. This change of variables makes sense as $\eta^E(t) \in \ker \, ( \nabla \cdot \eps )$ and the curl has the inverse $\bigl ( \nabla^{\times} \bigr )^{-1} : \ker (\nabla \cdot \,) \longrightarrow L^2_{\eps}(\R^3,\C^3)$ for divergence-free fields. Using the complex magnetic field as the second variable, we obtain the equations
	\begin{align*}
		\left (
		\begin{matrix}
			\id & 0 \\
			0 & \eps^{-1} \, \nabla^{\times} \\
		\end{matrix}
		\right ) \; \frac{\partial}{\partial t} \left (
		\begin{matrix}
			\psi^E(t) \\
			\psi^H(t) \\
		\end{matrix}
		\right )
		= \left (
		\begin{matrix}
			0 & \id \\
			M_{EE}^2 & 0 \\
		\end{matrix}
		\right ) 
		\left (
		\begin{matrix}
			\id & 0 \\
			0 & \eps^{-1} \, \nabla^{\times} \\
		\end{matrix}
		\right ) \left (
		\begin{matrix}
			\psi^E(t) \\
			\psi^H(t) \\
		\end{matrix}
		\right )
		, 
	\end{align*}
	which can be rewritten as the dynamical Maxwell equation~\eqref{setup:eqn:Maxwell_equations:dynamics}. The latter step is once again allowed because $\bigl ( \nabla^{\times} \bigr )^{-1} \, \eps : \ker \, ( \nabla \cdot \eps ) \longrightarrow L^2_{\eps}(\R^3,\C^3)$ is bounded. Moreover, with our specific choice of initial condition $\Phi = Q (\mathbf{E}_0,\mathbf{H}_0) \in \Hil$, the resulting electromagnetic field $\Psi(t) = \bigl ( \psi^E(t) , \psi^H(t) \bigr )$ indeed is a positive frequency wave. 
	
	This is the proof of the statement for strong solutions. But of course, the extra assumption $\Phi \in \domain(M^2)$ on the initial condition can be dropped if we work with weak solutions. In essence, we impose the extra condition on the test functions $\Theta \in \domain(M^2)$ and exploit that $\domain(M^2)$ lies densely in $\Hil$. 
\end{proof}
\begin{remark}[Extension to other dimensions]\label{setup:rem:other_dimensions}
	Evidently, none of our arguments rely on the fact that the spatial domain is \emph{all} of $\R^3$ rather than some subset. We may want to work on a subdomain of the form $\R^2 \times [0,h] \subset \R^3$ to model a quasi-2d waveguide slab, for example. Here, a proper choice of boundary conditions such as those for a perfect electric or magnetic conductor on the upper and lower plate are necessary to define Maxwell's equations — and, by extension, the Maxwell operators. 
	
	Consequently, the equivalence of Chern numbers applies to two- and three-dimensional topological photonic crystals alike. 
\end{remark}
%
\section{Equivalence of frequency band pictures in periodic media} 
\label{periodic}
The previously proven equivalence of first- and second-order dynamics, Theorem~\ref{setup:thm:equivalence_frameworks}, evidently applies to the special case of periodic electromagnetic media, better known as \emph{photonic crystals.} 
\begin{assumption}[Periodic weights]\label{periodic:assumption:periodic_material_weights}
	Suppose the material weights satisfy Assumption~\ref{setup:assumption:material_weights} and there exists a lattice $\Gamma \cong \Z^3$ so that $W(x + \gamma) = W(x)$ holds for all $\gamma \in \Gamma$ and almost all $x \in \R^3$
\end{assumption}
\begin{remark}[Other dimensions]
	As we have mentioned in Remark~\ref{setup:rem:other_dimensions} none of our arguments below are specific to three-di\-men\-sional photonic crystals, and analogs of our main results, Proposition~\ref{periodic:prop:reconstruction_eigenfunctions} and Theorem~\ref{periodic:prop:equivalence_vector_bundles}, also hold true for photonic crystals of other dimensions. 
\end{remark}
Bloch-Floquet theory — just like in case of periodic Schrödinger operators \cite{Kuchment:Floquet_theory:1993,Kuchment:math_photonic_crystals:2001,DeNittis_Lein:adiabatic_periodic_Maxwell_PsiDO:2013} — gives rise to \emph{three} \emph{sets} of frequency bands for the periodic operators $M$, $M_{EE}^2$ and $M_{HH}^2$. The purpose of this section is to show that these three sets of frequency bands (Proposition~\ref{periodic:prop:reconstruction_eigenfunctions}) and their frequency band \emph{topologies} (as measured by Chern numbers) coincide (Theorem~\ref{intro:thm:Chern_numbers_agree} and Proposition~\ref{periodic:prop:equivalence_vector_bundles}). The key ingredients are the maps $\imath^{E,H}$ from equation~\eqref{setup:eqn:inclusion} and $\mathrm{pr}^{E,H} : (\psi^E , \psi^H) \mapsto \psi^{E,H}$, that allow us to relate electromagnetic, electric and magnetic Bloch functions as well as three associated vector bundles with one another.

\subsection{The frequency band spectra coincide} 
\label{periodic:frequency_band_spectra}
Showing the equivalence of the frequency band spectra is different from proving the equivalence of the dynamical equations, and not merely a corollary of Theorem~\ref{setup:thm:equivalence}: to \emph{uniquely} fix a solution to the second-order equations we need the electric \emph{and} the magnetic fields as an \emph{input}; in contrast, we can reconstruct electromagnetic eigenfunctions solely from the electric or magnetic components \emph{alone.} 

Let us start by providing some of the basics on periodic operators.

\subsubsection{Exploiting periodicity: the Bloch-Floquet-Zak representation} 
\label{periodic:frequency_band_spectra:Zak}
This subsection collects basic facts about a variant of the discrete Fourier transform
\begin{align}
	(\Fourier \Psi)(k,x) = \sum_{\gamma \in \Gamma} \e^{- \ii k \cdot (x + \gamma)} \, \Psi(x + \gamma)
	\label{periodic:eqn:Bloch_Floquet_Zak_transform}
\end{align}
that is commonly called the Zak transform \cite{Zak:dynamics_Bloch_electrons:1968}; compared to the more common Bloch-Floquet transform it includes the extra phase factor $\e^{- \ii k \cdot x}$ in its definition. Experts on the subject may proceed directly to Section~\ref{periodic:frequency_band_spectra:fiber_decomposition}. 

Periodicity with respect to a lattice $\Gamma$ in these systems is exploited by decomposing position $q  = \gamma + x \in \R^3 \cong \Gamma \times \WS$ into a lattice coordinate $\gamma \in \Gamma$ and a position $x \in \WS$ located in a fundamental cell, usually referred to as the Wigner-Seitz cell $\WS$. Similarly, momenta $p = k + \gamma^* \in \R^3 \cong \BZ \times \Gamma^*$ are expressed as the sum of Bloch momentum $k \in \BZ$ that is taken from the first Brillouin zone $\BZ$ and a reciprocal lattice vector $\gamma^* \in \Gamma^*$; here, the dual lattice 
\begin{align*}
	\Gamma^* := \mathrm{span}_{\Z} \, \bigl \{ e_1^* , e_2^* , e_3^* \bigr \} 
\end{align*}
can be constructed from the real space lattice $\Gamma = \mathrm{span}_{\Z} \, \bigl \{ e_1 , e_2 , e_3 \bigr \}$ by requiring its basis vectors $e_n^*$ satisfy $e_j \cdot e_n^* = 2 \pi \, \delta_{jn}$ \cite{Grosso_Parravicini:solid_state_physics:2003}. On the level of groups, we may view the Brillouin zone $\BZ \simeq \widehat{\Gamma} \simeq \T^3$ and $\WS \simeq \widehat{\Gamma^*} \simeq \T^3$ as the \emph{dual groups} to $\Gamma \simeq \Z^3$ and $\Gamma^* \simeq \Z^3$, which is why we will identify the real space unit cell $\WS \simeq \T^3$ with a torus. 

Ordinarily, the Zak transform is defined as a unitary map 
\begin{align*}
	\Fourier : L^2(\R^3,\C^n) \longrightarrow L^2_{\mathrm{eq}} \bigl ( \R^3 \, , \, L^2(\T^3,\C^n) \bigr ) \cong L^2(\BZ) \otimes L^2(\T^3,\C^n) 
\end{align*}
where in the last step we have canonically identified the space of equivariant $L^2$-functions 
\begin{align*}
	L^2_{\mathrm{eq}} &\bigl ( \R^3 \, , \, L^2(\T^3,\C^n) \bigr ) := 
	\\
	&:= \Bigl \{ \Psi \in L^2_{\mathrm{loc}} \bigl ( \R^3 \, , \, L^2(\T^3,\C^n) \bigr ) \; \; \big \vert \; \; \Psi(k - \gamma^*,x) = \e^{+ \ii \gamma^* \cdot x} \, \Psi(k,x) \mbox{ a.~e.~$\forall \gamma^* \in \Gamma^*$} \Bigr \}
\end{align*}
with a tensor product space by restricting equivariant $L^2$-functions to the unit cell that contains $k = 0$. That is because Zak transformed functions are $\Gamma$-periodic in $x$ and $\Gamma^*$-quasiperiodic in $k$, 
\begin{subequations}\label{periodic:eqn:Zak_periodicity_relations}
	\begin{align}
		(\Fourier \Psi)(k , x - \gamma) &= (\Fourier \Psi)(k , x) 
		, 
		\label{periodic:eqn:Zak_periodicity_relations:position}
		\\
		(\Fourier \Psi)(k - \gamma^* , x) &= \e^{+ \ii \gamma^* \cdot x} \, (\Fourier \Psi)(k , x) 
		. 
		\label{periodic:eqn:Zak_periodicity_relations:momentum}
	\end{align}
\end{subequations}
These definitions extend naturally when the $L^2$-spaces are subjected to $\Gamma$-periodic weights, so that the Zak transform can then be considered as a unitary map
\begin{align*}
	\Fourier : L^2_W(\R^3,\C^6) \longrightarrow L^2(\BZ) \otimes L^2_W(\T^3,\C^6) 
\end{align*}
between \emph{weighted} electromagnetic $L^2$-spaces; here, $L^2_W(\T^3,\C^6)$ has been defined as the \emph{Banach space} $L^2(\T^3,\C^6)$ endowed with the energy scalar product 
\begin{align*}
	\sscpro{\phi}{\psi}_W := \bscpro{\phi}{W \, \psi}_{L^2(\T^3,\C^6)} 
\end{align*}
in analogy to our definition of $L^2_W(\R^3,\C^6)$ from Section~\ref{setup:first_order:representation_complex_waves}. In just the same way, we may view $\Fourier$ as a map on the electric or magnetic weighted $L^2$-spaces. 

The non-negative frequency space $\Fourier : \Hil \longrightarrow \int_{\BZ}^{\oplus} \dd k \, \Hil(k)$ splits into a \emph{direct integral of Hilbert spaces;} while the general definition (\cf  \cite[Part~II, Chapter~1, Section~5]{Dixmier:von_Neumann_algebras:1981}) is quite technical, especially when it comes to specifying what measurable means in this context, these subtleties are not of importance here: we can straight-forwardly express 
\begin{align*}
	\Hil(k) = \ran \, 1_{[0,\infty)} \bigl ( \Maux(k) \bigr ) \subset L^2_W(\T^3,\C^6) 
\end{align*}
in terms of spectral projections of $\Maux(k)$, whose $k$-dependence is even \emph{analytic} — and thus, measurable — on the set $\BZ \setminus \{ 0 \}$ that has full $\dd k$ measure. 

\subsubsection{Fiber decomposition of the first- and second-order operators} 
\label{periodic:frequency_band_spectra:fiber_decomposition}
Periodic operators $A : \domain(A) \subseteq L^2(\R^3,\C^n) \longrightarrow L^2(\R^3,\C^n)$ are operators which commute with lattice translations. These therefore admit a fiber decomposition 
\begin{align*}
	\Fourier \, A \, \Fourier^{\, -1} &= \int_{\BZ}^{\oplus} \dd k \, A(k) 
	. 
\end{align*}
Due to the quasi periodicity condition~\eqref{periodic:eqn:Zak_periodicity_relations:momentum} for any $\gamma^* \in \Gamma^*$ the fiber operators $A(k)$ and 
\begin{align}
	A(k - \gamma^*) &= \e^{+ \ii \gamma^* \cdot \hat{x}} \, A(k) \, \e^{- \ii \gamma^* \cdot \hat{x}}
	\label{periodic:eqn:equivariant_operator}
\end{align}
are unitarily equivalent via the multiplication operator $\e^{- \ii \gamma^* \cdot \hat{x}}$, and we call operator-valued functions $k \mapsto A(k)$ that satisfy \eqref{periodic:eqn:equivariant_operator} \emph{equivariant}. 

Two particularly relevant examples are periodic multiplication operators such as the electromagnetic weights 
\begin{align*}
	\Fourier \, W \, \Fourier^{\, -1} &= \id_{L^2(\BZ)} \otimes W \equiv W 
\end{align*}
and the derivatives 
\begin{align*}
	\Fourier \, (- \ii \partial_j) \, \Fourier^{\, -1} &= \id_{L^2(\BZ)} \otimes (- \ii \partial_j) + \hat{k} \otimes \id_{L^2_W(\T^3,\C^6)}
\end{align*}
which are equipped with the obvious domains (\cf our discussion in \cite[Section~3.1]{DeNittis_Lein:adiabatic_periodic_Maxwell_PsiDO:2013}). 

Thus, the unitary $\Fourier$ facilitates a fiber decomposition of the Maxwell operators 
\begin{subequations}
	\begin{align}
		\Maux &\cong \int_{\BZ}^{\oplus} \dd k \, \Maux(k)
		= \int_{\BZ}^{\oplus} \dd k \, \left (
		\begin{matrix}
			0 & - \eps^{-1} \, (- \ii \nabla + k)^{\times} \\
			+ \mu^{-1} \, (- \ii \nabla + k)^{\times} & 0 \\
		\end{matrix}
		\right )
		, 
		\\
		M &\cong \int_{\BZ}^{\oplus} \dd k \, M(k) 
		= \int_{\BZ}^{\oplus} \dd k \, \Maux(k) \, \big \vert_{\omega \geq 0}
		, 
	\end{align}
\end{subequations}
where the fiber operators $\Maux(k)$ and $M(k)$ act on $L^2_W(\T^3,\C^6)$ and the non-negative frequency subspace $\Hil(k) \subset L^2_W(\T^3,\C^6)$, respectively. In the same way the wave operators 
\begin{subequations}
	\begin{align}
		M_{EE}^2 &\cong \int_{\BZ}^{\oplus} \dd k \, M_{EE}^2(k) 
		= \int_{\BZ}^{\oplus} \dd k \, \eps^{-1} \, (\nabla - \ii k)^{\times} \, \mu^{-1} \, (\nabla - \ii k)^{\times}
		, 
		\\
		M_{HH}^2 &\cong \int_{\BZ}^{\oplus} \dd k \, M_{HH}^2(k) 
		= \int_{\BZ}^{\oplus} \dd k \, \mu^{-1} \, (\nabla - \ii k)^{\times} \, \eps^{-1} \, (\nabla - \ii k)^{\times}
		, 
	\end{align}
\end{subequations}
split into direct integrals, and $M_{EE}^2(k)$ and $M_{HH}^2(k)$ are selfadjoint operators on $L^2_{\eps}(\T^3,\C^3)$ and $L^2_{\mu}(\T^3,\C^3)$, respectively. 

Spectral and analyticity properties of $\Maux(k)$ have been studied extensively in the past (\eg in \cite{Kuchment:math_photonic_crystals:2001,DeNittis_Lein:adiabatic_periodic_Maxwell_PsiDO:2013}). Apart from essential spectrum at $\omega_0(k) = 0$ due to gradient fields, $\sigma \bigl (\Maux(k) \bigr ) \setminus \{ 0 \} = \sigma_{\mathrm{disc}} \bigl (\Maux(k) \bigr ) \setminus \{ 0 \}$ is purely discrete; since $\Maux(k)$ is not bounded from below, the eigenvalues accumulate at $\pm \infty$ (\cf \cite[Theorem~1.4]{DeNittis_Lein:adiabatic_periodic_Maxwell_PsiDO:2013}). As $k$ varies, these eigenvalues form frequency bands $k \mapsto \omega_n(k)$ and both, $\omega_n(k)$ and the associated eigenfunctions can be chosen locally analytically away from band crossings. 

Due to the structure of the operator there are 2+2 “ground state bands” with approximately linear dispersion near $k = 0$ and $\omega = 0$ due to long-wavelength waves which to good approximation only see unit cell averages of the material weights $W$. 

Seeing as the domain $\domain \bigl ( \Maux(k) \bigr ) = \domain \bigl ( \Maux(0) \bigr )$ of the linear polynomial $k \mapsto \Maux(k)$ is independent of $k$, this operator is evidently analytic. Its non-negative frequency restriction $M(k) = \Maux(k) \, \vert_{\omega \geq 0}$, however, is more delicate, because of the singular behavior of the ground state Bloch functions at $k = 0$. 

Not surprisingly, the properties of the second-order operators $M_{EE}^2$ and $M_{HH}^2$ mirror those of $\Maux(k)$ and $M(k)$. 
\begin{lemma}\label{periodic:lem:essential_properties_wave_operators}
	Suppose the material weights satisfy Assumption~\ref{periodic:assumption:periodic_material_weights}. Then the following holds: 
	\begin{enumerate}[(1)]
		\item For all $k \in \BZ$ the wave operators $M_{EE}^2(k)$ and $M_{HH}^2(k)$ are selfadjoint on $L^2_{\eps}(\T^3,\C^3)$ and $L^2_{\mu}(\T^3,\C^3)$. 
		\item $k \mapsto M_{EE}^2(k)$ and $k \mapsto M_{HH}^2(k)$ are analytic on the entire Brillouin zone and equivariant in the sense of equation~\eqref{periodic:eqn:equivariant_operator}. 
		\item The spectra of $M_{EE}^2(k) \, \big \vert_{\ker \, ((\nabla - \ii k) \cdot \eps)}$ and $M_{HH}^2(k) \, \big \vert_{\ker \, ((\nabla - \ii k) \cdot \mu)}$ are purely discrete, \ie they consist of eigenvalues of finite multiplicity which accumulate at $+ \infty$. 
		%
		%
		\item The essential spectrum consists only of $0$, \ie $\sigma_{\mathrm{ess}} \bigl ( M_{EE}^2(k) \bigr ) = \{ 0 \} = \sigma_{\mathrm{ess}} \bigl ( M_{HH}^2(k) \bigr )$, and is solely due to gradient fields $\ran (- \ii \nabla + k)$. 
		%
	\end{enumerate}
\end{lemma}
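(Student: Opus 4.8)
The plan is to derive all four statements from the already-known properties of the first-order fiber operator $\Maux(k)$ — selfadjoint on $L^2_W(\T^3,\C^6)$, analytic and equivariant, with $\sigma \bigl ( \Maux(k) \bigr ) \setminus \{ 0 \}$ purely discrete and accumulating only at $\pm \infty$ \cite[Theorem~1.4]{DeNittis_Lein:adiabatic_periodic_Maxwell_PsiDO:2013} — together with the fiberwise versions of Proposition~\ref{setup:prop:identification_spaces}. Since $\Maux(k)$ is block-offdiagonal, I would write $\Maux(k) = \left ( \begin{smallmatrix} 0 & D(k)^{\ast} \\ D(k) & 0 \end{smallmatrix} \right )$, where $D(k) : L^2_{\eps}(\T^3,\C^3) \longrightarrow L^2_{\mu}(\T^3,\C^3)$ is the $k$-dependent weighted curl, a closed operator whose domain equals $\domain \bigl ( (\nabla)^{\times} \bigr )$ and so does not depend on $k$ (adding the bounded term $k^{\times}$ does not change which fields have an $L^2$-curl), and $D(k)^{\ast}$ is its adjoint for the energy scalar products. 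Squaring the block-offdiagonal matrix produces the block-diagonal form $\bigl ( \Maux(k) \bigr )^2 = D(k)^{\ast} D(k) \oplus D(k) D(k)^{\ast}$, and matching this against the fiberwise reading of \eqref{setup:eqn:squared_auxiliary_Maxwell_operator} and the domain definitions attached to it identifies $M_{EE}^2(k) = D(k)^{\ast} D(k)$ and $M_{HH}^2(k) = D(k) D(k)^{\ast}$.

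Statement~(1) is then just von Neumann's theorem: $D(k)^{\ast} D(k)$ and $D(k) D(k)^{\ast}$ are selfadjoint and non-negative on the respective energy Hilbert spaces. For statement~(2), equivariance drops out by squaring the equivariance identity~\eqref{periodic:eqn:equivariant_operator} for $\Maux(k)$ and reading off the diagonal blocks. Analyticity is the one point that needs care: because $\eps$ and $\mu$ are only $L^{\infty}$, the domain of $\bigl ( \Maux(k) \bigr )^2$ genuinely varies with $k$, so $k \mapsto M_{EE}^2(k)$ is \emph{not} a holomorphic family of type (A) and cannot be differentiated naively. Instead I would work in the norm-resolvent topology: $k \mapsto \Maux(k)$ \emph{is} of type (A) — it is affine in $k$ on the fixed domain $\domain(\Rot)$ — hence $k \mapsto \bigl ( \Maux(k) - w \bigr )^{-1}$ is analytic for $w$ in a locally uniform subset of the resolvent set, and therefore so is
\begin{align*}
	\bigl ( \bigl ( \Maux(k) \bigr )^2 - w^2 \bigr )^{-1} = \bigl ( \Maux(k) - w \bigr )^{-1} \, \bigl ( \Maux(k) + w \bigr )^{-1}
	.
\end{align*}
Restricting to the two diagonal blocks shows $k \mapsto M_{EE}^2(k)$ and $k \mapsto M_{HH}^2(k)$ are analytic families in the norm-resolvent sense, which is exactly what is needed to pick frequency bands and Bloch eigenprojections analytically away from band crossings.

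For statements~(3) and~(4) I would first transcribe Proposition~\ref{setup:prop:identification_spaces}(1)--(2), and the operator identity used in its proof, to each Bloch fiber; those arguments use only the Helmholtz decomposition and the boundedness of $\eps^{\pm 1}$ and $\mu^{\pm 1}$, so they go through verbatim with $\nabla$ replaced by $(\nabla - \ii k)$ and give, for every $k$, that $\ker M_{EE}^2(k) = \ran (\nabla - \ii k) = \ker M_{HH}^2(k)$, the reducing Helmholtz splitting $L^2_{\eps}(\T^3,\C^3) = \ran (\nabla - \ii k) \oplus \ker \, \bigl ( (\nabla - \ii k) \cdot \eps \bigr )$, and the identity $\bigl ( \Maux(k) \bigr )^2 \big \vert_{\ker \, (\Div \, W)} = M_{EE}^2(k) \big \vert_{\ker \, ((\nabla - \ii k) \cdot \eps)} \oplus M_{HH}^2(k) \big \vert_{\ker \, ((\nabla - \ii k) \cdot \mu)}$. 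On the infinite-dimensional gradient subspace $\ran(\nabla - \ii k)$ the operator $M_{EE}^2(k)$ is identically zero, which already puts $0$ into $\sigma_{\mathrm{ess}} \bigl ( M_{EE}^2(k) \bigr )$, and by the kernel identity its restriction to the transversal subspace $\ker \, ((\nabla - \ii k)\cdot\eps)$ is injective. For the transversal restriction I would invoke the cited discreteness of $\sigma \bigl ( \Maux(k) \bigr ) \setminus \{ 0 \}$: it says that $\Maux(k)$ restricted to the transversal subspace $\ker \, (\Div \, W)$ has compact resolvent, hence so does its square, and — since that square is the direct sum of the transversal restrictions of $M_{EE}^2(k)$ and $M_{HH}^2(k)$, and a block-diagonal operator has compact resolvent iff both blocks do — each transversal restriction has compact resolvent. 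A selfadjoint, non-negative, injective operator with compact resolvent has purely discrete spectrum consisting of eigenvalues of finite multiplicity accumulating only at $+\infty$, which is statement~(3); and since the transversal restriction then contributes no essential spectrum, $\sigma_{\mathrm{ess}} \bigl ( M_{EE}^2(k) \bigr ) = \{ 0 \} = \sigma_{\mathrm{ess}} \bigl ( M_{HH}^2(k) \bigr )$, with the sole contribution coming from the gradient fields.

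The main obstacle is the analyticity claim over \emph{all} of $\BZ$: the low regularity of the weights is exactly what makes the naive type-(A) argument for the squared operator fail, which is why one is forced to route analyticity — and, in the same spirit, the compactness underlying the discreteness in~(3) — through the first-order operator $\Maux(k)$ and the spectral facts already established for it. With that in hand, what remains is bookkeeping: localizing the $\R^3$-arguments of Section~\ref{setup:equivalence} to a single Bloch fiber.
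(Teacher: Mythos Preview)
Your proof is correct, and on items~(1), (3) and~(4) it runs parallel to the paper's: selfadjointness is inherited from $\bigl(\Maux(k)\bigr)^2$ (the paper states this directly; you phrase it via von~Neumann applied to the supersymmetric factorization $\Maux(k) = \left(\begin{smallmatrix}0 & D(k)^{\ast}\\ D(k) & 0\end{smallmatrix}\right)$, which amounts to the same thing), and the spectral statements are in both cases obtained by pulling compactness of the transversal resolvent of $\Maux(k)$ through the identity $\bigl((\Maux(k))^2 - w^2\bigr)^{-1} = \bigl(\Maux(k) - w\bigr)^{-1}\bigl(\Maux(k) + w\bigr)^{-1}$, reading off the diagonal blocks, and combining with the fiberwise Helmholtz decomposition and kernel identification.

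Where you genuinely diverge is item~(2). The paper argues that $M_{EE}^2(k)$ is a quadratic polynomial in $k$ whose lower-order terms are infinitesimally $M_{EE}^2(0)$-bounded, invokes Kato--Rellich to conclude the domain is $k$-independent, and deduces type-(A) analyticity. You instead route analyticity through the first-order operator in the norm-resolvent sense, explicitly flagging that type~(A) may fail for the squared operator. Your caution is warranted: $M_{EE}^2(0)$ annihilates the infinite-dimensional subspace of gradient fields $\psi = \nabla f$, while the first-order perturbation $\eps^{-1}\nabla^{\times}\mu^{-1}(-\ii k)^{\times}$ does \emph{not} --- on such $\psi$ it is a genuine first-order differential operator --- so the relative-boundedness estimate $\snorm{B(k)\psi} \leq a\,\snorm{M_{EE}^2(0)\psi} + b\,\snorm{\psi}$ cannot hold, and the Kato--Rellich route as stated does not close. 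One minor correction: your attribution of the obstruction to low regularity of $\eps,\mu$ is not quite the point --- the domain of $\bigl(\Maux(k)\bigr)^2$ already varies with $k$ for \emph{constant} scalar weights, precisely because of the infinite-dimensional kernel. What your resolvent argument buys is an unconditional proof of norm-resolvent analyticity, which is exactly the notion needed downstream for analytic dependence of frequency bands and spectral projections.
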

For the reader's convenience, we have included a proof of these basic facts in Appendix~\ref{appendix:proof_properties_wave_operators}. 

\subsubsection{Equivalence of the frequency band spectra} 
\label{periodic:frequency_bands:equivalence_spectra}
Thus, the spectra of the first- and second-order operators consist solely of eigenvalues, and we therefore only need to pay attention to the corresponding eigenvalue equations, namely  
\begin{align}
	\Maux(k) \varphi_n(k) = \omega_n(k) \, \varphi_n(k)
	\label{periodic:eqn:eigenvalue_equation_auxiliary_Maxwell_operator}
\end{align}
for the auxiliary Maxwell operator \cite[Theorem~1.4]{DeNittis_Lein:adiabatic_periodic_Maxwell_PsiDO:2013} and 
\begin{subequations}\label{periodic:eqn:eigenvalue_equation_wave_operator}
	\begin{align}
		M_{EE}^2(k) \varphi_n^E(k) = \bigl ( \omega_n^E(k) \bigr )^2 \, \varphi_n^E(k) 
		, 
		\label{periodic:eqn:eigenvalue_equation_wave_operator:electric}
		\\
		M_{HH}^2(k) \varphi_n^H(k) = \bigl ( \omega_n^H(k) \bigr )^2 \, \varphi_n^H(k) 
		, 
		\label{periodic:eqn:eigenvalue_equation_wave_operator:magnetic}
	\end{align}
\end{subequations}
for the two wave operators. 

Once we label these eigenvalues in the obvious way, these give rise to three sets of frequency bands. By convention the label $n = 0$ is reserved for the infinitely degenerate flat band $\omega_0(k) = 0 = \omega_0^{E,H}(k)$ that is due to longitudinal gradient fields. Unlike $M_{EE}^2(k) \geq 0$ and $M_{HH}^2(k) \geq 0$ the auxiliary Maxwell operator $\Maux(k)$ is not bounded from below, and \emph{frequency bands here come as positive-negative frequency pairs} where $\omega_{-n}(k) = - \omega_n(k)$; by convention positive/negative frequency bands are labeled with positive/negative integers. This pairing is due to the symmetry $J = \sigma_3 \otimes \id : (\varphi^E,\varphi^H) \mapsto \bigl (\varphi^E , - \varphi^H \bigr )$, 
\begin{align}
	J \, \Maux(k) \, J^{-1} &= - \Maux(k) 
	. 
	\label{periodic:eqn:chiral_operation}
\end{align}
We emphasize that $J$ is \emph{not} a symmetry of the physical fields because it evidently maps positive onto negative frequency states and vice versa. Therefore, it does not restrict to an operator $\Hil(k) \longrightarrow \Hil(k)$ and is \emph{not a symmetry of the physical system.} For further explanation of this subtle, but very important point, we refer to \cite[Section~3.1]{DeNittis_Lein:symmetries_electromagnetism:2017}. 

The frequency restriction for the Maxwell operator $M(k) = \Maux(k) \, \big \vert_{\omega \geq 0}$ applies fiber-wise, and the relevant first-order eigenvalue equation 
\begin{align}
	M(k) \varphi_n(k) = \omega_n(k) \, \varphi_n(k)
	\label{periodic:eqn:eigenvalue_equation_Maxwell_operator}
\end{align}
is functionally equivalent to \eqref{periodic:eqn:eigenvalue_equation_auxiliary_Maxwell_operator}, we just discard negative frequency solutions. 
\medskip

\noindent
The main aim of this subsection is showing that the non-negative frequency bands $\omega_n(k) = \omega_n^E(k) = \omega_n^H(k)$ coincide. One way to give a constructive proof is by means of the maps $\imath^{E,H}$ from equation~\eqref{setup:eqn:inclusion} that reconstruct the magnetic or electric component. Because $\eps$, $\mu$, $M_{EE}^2$ and $M_{HH}^2$ are periodic, the (periodic) operators $\imath^E \cong \int_{\BZ}^{\oplus} \dd k \, \imath^E(k)$ and  $\imath^H \cong \int_{\BZ}^{\oplus} \dd k \, \imath^H(k)$ fiber decompose into a collection of $k$-dependent operators $\imath^E(k) : \ker \bigl ( (\nabla - \ii k) \cdot \eps \bigr ) \longrightarrow \mathcal{J}_+(k)$ and $\imath^H(k) : \ker \bigl ( (\nabla - \ii k) \cdot \mu \bigr ) \longrightarrow \mathcal{J}_+(k)$ where 
$\mathcal{J}_+(k)$ and the other transversal subspaces are obtained from Zak transforming $\mathcal{J}_+ \mapsto \Fourier \mathcal{J}_+ \cong \int_{\BZ}^{\oplus} \dd k \, \mathcal{J}_+(k)$, $\ker \, ( \nabla \cdot \eps ) \subset L^2_{\eps}(\R^3,\C^3)$ and $\ker \, ( \nabla \cdot \mu ) \subset L^2_{\mu}(\R^3,\C^3)$. These maps allow us to reconstruct $\omega_n(k) > 0$ frequency Bloch functions 
\begin{align}
	\varphi_n(k) = \left (
	\begin{matrix}
		\varphi_n^E(k) \\
		\varphi_n^H(k) \\
	\end{matrix}
	\right )
	= \imath^E(k) \, \varphi_n^E(k)
	= \imath^H(k) \, \varphi_n^H(k)
	&&
	\forall n \in \N
	, \; 
	k \in \BZ
	, 
	\label{periodic:eqn:imath_reconstructs_electromagnetic_Bloch_functions}
\end{align}
of $M(k)$ from the electric and magnetic Bloch functions or compute the magnetic Bloch functions $\varphi_n^H(k)$ from the electric component $\varphi_n^E(k)$. 
\begin{proposition}\label{periodic:prop:reconstruction_eigenfunctions}
	Suppose the material weights satisfy Assumption~\ref{periodic:assumption:periodic_material_weights}. Then all electromagnetic, electric and magnetic frequency bands coincide, 
	\begin{align*}
		\omega_n(k) = \omega_n^E(k) = \omega_n^H(k)
		&&
		\forall n \in \N_0
		, \; 
		k \in \BZ
		. 
	\end{align*}
	More specifically, we have: 
	\begin{enumerate}[(1)]
		\item $\omega_0(k) = 0 = \omega_0^{E,H}(k)$ 
		\item Electric and magnetic parts of the Bloch function $\varphi_n(k) = \bigl ( \varphi_n^E(k) , \varphi_n^H(k) \bigr )$ to \eqref{periodic:eqn:eigenvalue_equation_Maxwell_operator} and $\omega_n(k) > 0$ satisfy equations~\eqref{periodic:eqn:eigenvalue_equation_wave_operator:electric} and \eqref{periodic:eqn:eigenvalue_equation_wave_operator:magnetic}, respectively. 
		\item For $\omega_n(k) > 0$ eigenfunctions to \eqref{periodic:eqn:eigenvalue_equation_wave_operator:electric} and \eqref{periodic:eqn:eigenvalue_equation_wave_operator:magnetic} give rise to eigenfunctions of \eqref{periodic:eqn:eigenvalue_equation_Maxwell_operator} via equation~\eqref{periodic:eqn:imath_reconstructs_electromagnetic_Bloch_functions}. 
		\item The magnetic Bloch function $\varphi_n^H(k) = \mathrm{pr}^H(k) \; \imath^E(k) \, \varphi_n^E(k)$ to $\omega_n(k) > 0$ can be reconstructed from the electric Bloch function $\varphi_n^E(k)$ and vice versa, where 
		\begin{align}
			\mathrm{pr}^{E,H}(k) \, \bigl ( \psi^E(k) , \psi^H(k) \bigr ) := \psi^{E,H}(k) 
			\label{periodic:eqn:projection_pr_E_H}
		\end{align}
		picks out the electric/magnetic component. 
	\end{enumerate}
\end{proposition}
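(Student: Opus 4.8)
The plan is to reduce the whole proposition to a fiber-wise statement and then lean entirely on the structural results already assembled in Section~\ref{setup:equivalence}. Since $\eps$, $\mu$, $M_{EE}^2$, $M_{HH}^2$ and $\Maux$ are all $\Gamma$-periodic, the first step is to observe that Proposition~\ref{setup:prop:identification_spaces}, Corollary~\ref{setup:cor:inclusion_bounded_injections} and the $\mathrm{sgn}(\Maux)$-identity~\eqref{setup:eqn:sgn_M_EE_2_equation} all fiber-decompose verbatim upon replacing $\nabla$ by $\nabla - \ii k$; in particular, for each fixed $k$ this yields the bounded injections $\imath^{E,H}(k)$ with bounded left-inverses $\mathrm{pr}^{E,H}(k)$ and the identifications $\mathcal{J}_+(k)^E = \ker\bigl((\nabla-\ii k)\cdot\eps\bigr)$, $\mathcal{J}_+(k)^H = \ker\bigl((\nabla-\ii k)\cdot\mu\bigr)$. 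Part~(1) is then immediate: the fiber version of Proposition~\ref{setup:prop:identification_spaces}~(1) identifies $\ker M_{EE}^2(k) = \ran(\nabla-\ii k) = \ker M_{HH}^2(k)$ with the longitudinal gradient fields, which is also $\ker \Maux(k)$, so $\omega = 0$ is the common, infinitely degenerate $n = 0$ flat band. From here on one restricts to $\omega_n(k) > 0$.

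For part~(2) I would take an eigenfunction $\varphi_n(k) = \bigl(\varphi_n^E(k),\varphi_n^H(k)\bigr)$ of $M(k)$ — equivalently of $\Maux(k)$ — to $\omega_n(k) > 0$, use that $\Maux(k)$ is block-offdiagonal so that its square is the block-diagonal operator $M_{EE}^2(k) \oplus M_{HH}^2(k)$ (the fiber form of~\eqref{setup:eqn:squared_auxiliary_Maxwell_operator}), and read off $M_{EE}^2(k)\varphi_n^E(k) = \omega_n(k)^2\varphi_n^E(k)$ together with its magnetic analogue. The only point needing a word is that both components are non-zero: reading the two rows of $\Maux(k)\varphi_n(k) = \omega_n(k)\varphi_n(k)$ shows that each of $\varphi_n^E(k)$, $\varphi_n^H(k)$ equals $\omega_n(k)^{-1}$ times an operator applied to the other, so one vanishing forces the other to vanish, contradicting $\varphi_n(k)\neq 0$. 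Since $\varphi_n^E(k)\in\mathcal{J}_+(k)^E = \ker\bigl((\nabla-\ii k)\cdot\eps\bigr)$ it is a genuine eigenfunction of the transversal restriction of $M_{EE}^2(k)$, hence $\omega_n^E(k) = \omega_n(k)$, and likewise $\omega_n^H(k) = \omega_n(k)$.

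For part~(3) I would run the argument from the proof of Proposition~\ref{setup:prop:identification_spaces}~(2) backwards at the level of fibers: given an eigenfunction $\varphi_n^E(k)$ of $M_{EE}^2(k)$ with $\omega_n^E(k) > 0$, the spectral theorem collapses $\bigl(M_{EE}^2(k)\bigr)^{-\nicefrac{1}{2}}$ on it to multiplication by $\omega_n^E(k)^{-1}$, so that $\imath^E(k)\varphi_n^E(k)$ has electric part $\varphi_n^E(k)$ and magnetic part $-\ii\,\omega_n^E(k)^{-1}\,\mu^{-1}(\nabla-\ii k)^{\times}\varphi_n^E(k)$, with no boundedness subtlety near $k = 0$ since $\bigl(M_{EE}^2(k)\bigr)^{-\nicefrac{1}{2}}$ is only ever applied to eigenfunctions with positive frequency. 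Using $\eps^{-1}(\nabla-\ii k)^{\times}\mu^{-1}(\nabla-\ii k)^{\times} = M_{EE}^2(k)$ and the eigenvalue equation, a direct computation gives $\Maux(k)\,\imath^E(k)\varphi_n^E(k) = \omega_n^E(k)\,\imath^E(k)\varphi_n^E(k)$, and $\imath^E(k)\varphi_n^E(k)\in\mathcal{J}_+(k)$ by Proposition~\ref{setup:prop:identification_spaces}; the same works with $\imath^H(k)$. Together with (2) this shows the three positive spectra coincide and establishes~\eqref{periodic:eqn:imath_reconstructs_electromagnetic_Bloch_functions}. For part~(4) I would note $\mathrm{pr}^E(k)\circ\imath^E(k) = \id$ by Corollary~\ref{setup:cor:inclusion_bounded_injections}, and conversely that substituting the first-order relation $-\ii\mu^{-1}(\nabla-\ii k)^{\times}\varphi_n^E(k) = \omega_n(k)\varphi_n^H(k)$ from (2) into the explicit formula for $\imath^E(k)$ gives $\imath^E(k)\circ\mathrm{pr}^E(k) = \id$ on each positive-frequency eigenspace; so $\imath^E(k)$ and $\mathrm{pr}^E(k)$ are mutually inverse bijections between the electric and electromagnetic eigenspaces (and symmetrically for $H$), and composing the two identifications yields $\varphi_n^H(k) = \mathrm{pr}^H(k)\,\imath^E(k)\,\varphi_n^E(k)$ and vice versa.

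The hard part, such as it is, is almost pure bookkeeping: checking that nothing in Proposition~\ref{setup:prop:identification_spaces} or Corollary~\ref{setup:cor:inclusion_bounded_injections} breaks under the Zak fiber decomposition and that the $k$-dependence is handled cleanly. The only genuine subtlety I anticipate is the behaviour at $k = 0$, where the ground-state Bloch functions are singular and $\bigl(M_{EE}^2(k)\bigr)^{-\nicefrac{1}{2}}$ degenerates; this is side-stepped by noting that it is only ever applied to eigenfunctions to $\omega_n(k) > 0$, where it acts as a scalar. A secondary point is the degenerate-eigenvalue case for the reconstruction maps of part~(4), which causes no trouble precisely because $\imath^E(k)$ and $\mathrm{pr}^E(k)$ are shown to be two-sided inverses of one another on each eigenspace, not merely one-sided.
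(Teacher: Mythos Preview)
Your proposal is correct and follows essentially the same route as the paper: identify the kernels with gradient fields for~(1), use the block-diagonal structure of $\bigl(\Maux(k)\bigr)^2$ for~(2), collapse $\bigl(M_{EE}^2(k)\bigr)^{-\nicefrac{1}{2}}$ to $\omega_n(k)^{-1}$ on an eigenfunction and compute directly for~(3), and combine (2)--(3) for~(4). Your extra care in~(2) (non-vanishing of both components) and your explicit two-sided inverse argument in~(4) go slightly beyond what the paper does in the proof proper; the latter is exactly what the paper separates out into the subsequent Corollary~\ref{periodic:cor:imath_pr_are_inverses}.
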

\begin{proof}
	\begin{enumerate}[(1)]
		\item The kernels of the operators $M(k)$, $M_{EE}^2(k)$ and $M_{HH}^2(k)$ consists of gradient fields (\cf Lemma~\ref{periodic:lem:essential_properties_wave_operators}~(4) and \cite[Section~3.2]{DeNittis_Lein:adiabatic_periodic_Maxwell_PsiDO:2013}), and by our labeling convention gives rise to $\omega_0$, $\omega_0^E$ and $\omega_0^H$. 
		\item Evidently, $(\Maux)^2 = M_{EE}^2 \oplus M_{HH}^2$ relates $\Maux$ with the two second-order operators, and $\bigl ( \Maux(k) \bigr )^2 \varphi_n(k) = \bigl ( \omega_n(k) \bigr )^2 \, \varphi_n(k)$ implies $M_{EE}^2(k) \varphi_n^E(k) = \bigl ( \omega_n(k) \bigr )^2 \, \varphi_n^E(k)$ and $M_{HH}^2(k) \varphi_n^H(k) = \bigl ( \omega_n(k) \bigr )^2 \, \varphi_n^H(k)$. 
		\item Suppose $\varphi_n^E(k)$ solves \eqref{periodic:eqn:eigenvalue_equation_wave_operator:electric} for $\omega_n(k) > 0$, then the vector 
		\begin{align*}
			\imath^E(k) \, \varphi_n^E(k) &= \left (
			\begin{matrix}
				\varphi_n^E(k) \\
				- \ii \mu^{-1} \, (\nabla - \ii k)^{\times} \, \bigl ( M_{EE}^2(k) \bigr )^{- \nicefrac{1}{2}} \, \varphi_n^E(k) \\
			\end{matrix}
			\right )
			\\
			&= \left (
			\begin{matrix}
				\varphi_n^E(k) \\
				- \ii \, \omega_n(k)^{-1} \, \mu^{-1} \, (\nabla - \ii k)^{\times} \, \varphi_n^E(k) \\
			\end{matrix}
			\right )
		\end{align*}
		is an eigenvector to $\Maux(k)$ and the eigenvalue $\omega_n(k) > 0$, 
		\begin{align}
			\Maux(k) \, \imath^E(k) \, \varphi_n^E(k) &= \left (
			\begin{matrix}
				\omega_n(k)^{-1} \, \eps^{-1} \, (\nabla - \ii k)^{\times} \, \mu^{-1} \, (\nabla - \ii k)^{\times} \varphi_n^E(k) \\
				- \ii \mu^{-1} \, (\nabla - \ii k)^{\times} \, \varphi_n^E(k) \\
			\end{matrix}
			\right )
			\notag
			\\
			&= \omega_n(k) \, \left (
			\begin{matrix}
				\varphi_n^E(k) \\
				- \ii \, \omega_n(k)^{-1} \, \mu^{-1} \, (\nabla - \ii k)^{\times} \, \varphi_n^E(k) \\
			\end{matrix}
			\right )
			\notag
			\\
			&= \omega_n(k) \; \imath^E(k) \, \varphi_n^E(k) 
			. 
			\label{periodic:eqn:reconstruction_eigenvalue_equation_imath}
		\end{align}
		As the sign of the eigenvalue is positive, we can in fact replace $\Maux(k)$ with $M(k)$ in the above computation. The proof for the magnetic component is completely analogous. 
		\item This follows directly from (2) and (3). 
	\end{enumerate}
\end{proof}
\begin{remark}
	Mathematically, it would be equally possible to give a description in terms of negative frequency bands: the relation~\eqref{periodic:eqn:chiral_operation} implies $J \, \imath^{E,H}(k)$ yields a \emph{negative} frequency Bloch functions. However, note that while both solutions are \emph{mathematically} equivalent, when the material weights $W \neq \overline{W}$ are complex those negative frequency solutions are in fact \emph{unphysical.} 
\end{remark}
The proof of the preceding Proposition shows more, and since we will need this fact later one, we separate it out into a 
\begin{corollary}\label{periodic:cor:imath_pr_are_inverses}
	Suppose the material weights satisfy Assumption~\ref{periodic:assumption:periodic_material_weights}. 
	Then $\imath^{E,H}(k)$ and $\mathrm{pr}^{E,H}(k)$ are inverses to one another, 
	\begin{align*}
		\imath^E(k) \; \mathrm{pr}^E(k) &= \id_{\mathcal{J}_+(k)} 
		, 
		&&
		\mathrm{pr}^E(k) \; \imath^E(k) = \id_{\ker ((\nabla - \ii k) \cdot \eps)} 
		, 
		\\
		\imath^H(k) \; \mathrm{pr}^H(k) &= \id_{\mathcal{J}_+(k)} 
		, 
		&&
		\mathrm{pr}^H(k) \; \imath^H(k) = \id_{\ker ((\nabla - \ii k) \cdot \mu)} 
		, 
	\end{align*}
	and therefore, equation~\eqref{periodic:eqn:imath_reconstructs_electromagnetic_Bloch_functions} holds true. 
\end{corollary}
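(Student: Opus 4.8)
The plan is to verify the four composition identities directly and then to read off \eqref{periodic:eqn:imath_reconstructs_electromagnetic_Bloch_functions} as a one-line consequence. First, the identities $\mathrm{pr}^E(k)\,\imath^E(k) = \id_{\ker((\nabla - \ii k)\cdot\eps)}$ and $\mathrm{pr}^H(k)\,\imath^H(k) = \id_{\ker((\nabla - \ii k)\cdot\mu)}$ are immediate: by the definition \eqref{setup:eqn:inclusion} the first component of $\imath^E(k)\psi^E$ equals $\psi^E$ and the second component of $\imath^H(k)\psi^H$ equals $\psi^H$, while by \eqref{periodic:eqn:projection_pr_E_H} the maps $\mathrm{pr}^{E,H}(k)$ just extract these components. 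That these compositions even make sense — i.e. that $\imath^E(k)$ takes values in $\mathcal{J}_+(k)$ and that $\mathrm{pr}^E(k)$ sends $\mathcal{J}_+(k)$ into the electric part $\ker((\nabla - \ii k)\cdot\eps)$ of $\mathcal{J}_+(k)$, and likewise for $H$ — is the fiber-decomposed content of Proposition~\ref{setup:prop:identification_spaces}, which applies verbatim to $\Maux(k)$, $M_{EE}^2(k)$ and $M_{HH}^2(k)$ on the torus since all estimates there used only Assumption~\ref{setup:assumption:material_weights}, now guaranteed fiberwise by Assumption~\ref{periodic:assumption:periodic_material_weights} together with Lemma~\ref{periodic:lem:essential_properties_wave_operators}.

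The real content is $\imath^E(k)\,\mathrm{pr}^E(k) = \id_{\mathcal{J}_+(k)}$, i.e. that for every $\Psi = (\psi^E,\psi^H) \in \mathcal{J}_+(k)$ the reconstruction $\imath^E(k)\psi^E$ returns $\Psi$ — equivalently, that the magnetic slot of $\imath^E(k)\psi^E$ already equals $\psi^H$. Here I would transcribe, fiberwise, the key computation in the proof of Proposition~\ref{setup:prop:identification_spaces}(2): the operator $\mathrm{sgn}\bigl(\Maux(k)\bigr)\big\vert_{\ker((\nabla - \ii k)\cdot W)}$ has $\mathcal{J}_+(k)$ as its eigenspace to the eigenvalue $+1$, and it admits the two representations \eqref{setup:eqn:sgn_M_EE_2_equation} and \eqref{setup:eqn:sgn_M_EE_2_equation_alternate} with $\nabla^\times$ replaced by $(\nabla - \ii k)^\times$ (the operators $\bigl(M_{EE}^2(k)\bigr)^{-\nicefrac{1}{2}}$, $\bigl(M_{HH}^2(k)\bigr)^{-\nicefrac{1}{2}}$ and $(\nabla - \ii k)^\times\bigl(M_{EE}^2(k)\bigr)^{-\nicefrac{1}{2}}$ are bounded on the transversal subspace since, by Lemma~\ref{periodic:lem:essential_properties_wave_operators}, the transversal spectra of $M_{EE}^2(k)$ and $M_{HH}^2(k)$ are discrete, bounded away from $0$ and accumulate only at $+\infty$). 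Reading off the lower component of the eigenvalue relation $\mathrm{sgn}\bigl(\Maux(k)\bigr)\Psi = \Psi$ from the fiber version of \eqref{setup:eqn:sgn_M_EE_2_equation} yields $\psi^H = -\ii\mu^{-1}(\nabla - \ii k)^\times\bigl(M_{EE}^2(k)\bigr)^{-\nicefrac{1}{2}}\psi^E$, which is precisely the lower component of $\imath^E(k)\psi^E$; hence $\imath^E(k)\psi^E = \Psi$. The magnetic identity $\imath^H(k)\,\mathrm{pr}^H(k) = \id_{\mathcal{J}_+(k)}$ follows symmetrically by reading off the upper component instead. Finally, \eqref{periodic:eqn:imath_reconstructs_electromagnetic_Bloch_functions} drops out: for $\omega_n(k) > 0$ the Bloch function $\varphi_n(k) = (\varphi_n^E(k),\varphi_n^H(k))$ lies in $\mathcal{J}_+(k)$, so $\varphi_n(k) = \imath^E(k)\,\mathrm{pr}^E(k)\,\varphi_n(k) = \imath^E(k)\,\varphi_n^E(k)$ and likewise $\varphi_n(k) = \imath^H(k)\,\varphi_n^H(k)$.

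The argument involves no new ideas, so there is no genuine obstacle; the only place demanding a little care is the bookkeeping that the functional-calculus manipulations behind \eqref{setup:eqn:sgn_M_EE_2_equation}–\eqref{setup:eqn:sgn_M_EE_2_equation_alternate} survive the passage to the fibers — in particular that $\mathrm{sgn}(\Maux)$ decomposes as $\int_{\BZ}^{\oplus}\dd k\,\mathrm{sgn}\bigl(\Maux(k)\bigr)$ — and the regularization for $\Psi \notin \domain\bigl(\Maux(k)\bigr)$, which is handled exactly as in the remark following Proposition~\ref{setup:prop:identification_spaces}, by cutting off with $1_{(0,b)}\bigl(M_{EE}^2(k)\bigr)$ and letting $b \to \infty$.
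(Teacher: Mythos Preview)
Your proof is correct. The easy direction $\mathrm{pr}^{E,H}(k)\,\imath^{E,H}(k) = \id$ is handled identically to the paper, but for the hard direction $\imath^E(k)\,\mathrm{pr}^E(k) = \id_{\mathcal{J}_+(k)}$ you and the paper take different routes. The paper exploits the periodic setting: by Lemma~\ref{periodic:lem:essential_properties_wave_operators} the positive-frequency Bloch eigenfunctions $\{\varphi_n(k)\}_{n \in \N}$ form a complete basis of $\mathcal{J}_+(k)$, so by linearity it suffices to verify the identity on those --- and that is precisely the computation~\eqref{periodic:eqn:reconstruction_eigenvalue_equation_imath} already carried out in Proposition~\ref{periodic:prop:reconstruction_eigenfunctions}(3). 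You instead fiber-decompose the $\mathrm{sgn}\bigl(\Maux\bigr)$ argument from the proof of Proposition~\ref{setup:prop:identification_spaces}(2) and read the identity off the eigenvalue relation $\mathrm{sgn}\bigl(\Maux(k)\bigr)\Psi = +\Psi$ directly, for arbitrary $\Psi \in \mathcal{J}_+(k)$. Your route is more self-contained --- it never invokes a Bloch basis expansion or the discrete spectral structure specific to the periodic case, and would in fact work verbatim in the non-periodic setting of Corollary~\ref{setup:cor:inclusion_bounded_injections} --- whereas the paper's route is shorter because it simply recycles the eigenfunction computation it has just finished.
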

\begin{proof}
	Proving $\mathrm{pr}^{E}(k) \; \imath^{E}(k) = \id_{\ker ((\nabla - \ii k) \cdot \eps)}$ is immediate as $\imath^E(k)$ leaves the electric component, which $\mathrm{pr}^E(k)$ singles out, untouched. 
		
	The other equality, $\imath^E(k) \; \mathrm{pr}^E(k) = \id_{\mathcal{J}_+(k)}$, requires a bit more work: as the electromagnetic Bloch eigenfunctions for positive frequency bands form a complete basis set of $\mathcal{J}_+(k)$ and the two maps are linear, it suffices to show the statement for electromagnetic Bloch eigenfunctions. But $\varphi_n^H(k) = - \ii \mu^{-1} \, (\nabla - \ii k)^{\times} \, \varphi_n^E(k)$ is precisely what we have shown with the computation~\eqref{periodic:eqn:reconstruction_eigenvalue_equation_imath}. 
	
	The proof for the magnetic field case is identical. 
\end{proof}
\begin{remark}
	It may seem as if the electric or magnetic field component suffices to study the dynamical problem in a more straightforward fashion. After all, the above statement shows that we may reconstruct the magnetic component of any 
	\begin{align*}
		\psi^E(t,k) = \sum_{n \in \N} \e^{- \ii t \omega_n(k)} \, \alpha_n(k) \, \varphi_n^E(k) 
	\end{align*}
	and obtain the electromagnetic field 
	\begin{align*}
		\psi(t,k) = \imath^E(k) \, \psi^E(t,k) 
		= \sum_{n \in \N} \e^{- \ii t \omega_n(k)} \, \alpha_n(k) \, \varphi_n(k) 
		. 
	\end{align*}
	However, the \emph{coefficients} 
	\begin{align*}
		\alpha_n(k) = \bscpro{\varphi_n(k)}{\psi(0,k)}_W = \bscpro{\varphi_n^E(k)}{\psi^E(0,k)}_{\eps} + \bscpro{\varphi_n^H(k)}{\psi^H(0,k)}_{\mu} 
	\end{align*}
	need to be computed in the first-order formalism that requires the electric \emph{and} magnetic field. The linear combination with the coefficients $\alpha_n^E(k) = \snorm{\varphi_n^E(k)}_{\eps}^{-2} \, \bscpro{\varphi_n^E(k)}{\psi^E(0,k)}_{\eps}$ computed only from the electric field is different and does not solve the second-order equation~\eqref{setup:eqn:wave_equation_electric} in Zak representation. 
\end{remark}
%

\subsection{Electromagnetic, electric and magnetic Chern numbers necessarily coincide} 
\label{periodic:Chern_numbers}
Now that we have proven the equivalence of the dynamical problem (Theorem~\ref{setup:thm:equivalence}) and the frequency band \emph{spectra} (Proposition~\ref{periodic:prop:reconstruction_eigenfunctions}), it seems obvious that also the frequency band \emph{topologies} must coincide. That there is still something to prove might not be obvious, so let us start with that first.

\subsubsection{Setting the stage: why the problem is not solved yet} 
\label{periodic:Chern_numbers:proof_by_hand}
To strip down the problem to the essentials, we turn our attention to a single, non-degenerate frequency band $\omega_n(k)$ that does not intersect with any other band. Suppose its electromagnetic Bloch function $\varphi_n(k) = \bigl ( \varphi_n^E(k) , \varphi_n^H(k) \bigr )$ is normalized to 
\begin{align}
	1 = \bnorm{\varphi_n(k)}_W^2 = \bnorm{\varphi_n^E(k)}_{\eps}^2 + \bnorm{\varphi_n^H(k)}_{\mu}^2 
	, 
	\label{periodic:eqn:normalization_electromagnetic_Bloch_function}
\end{align}
and its phase is (at least locally) chosen such that $k \mapsto \varphi_n(k)$ is analytic. Note that we do \emph{not} know whether the electric component $\bnorm{\varphi_n^E(k)}_{\eps}^2 = \mathrm{const.}$ and magnetic component $\bnorm{\varphi_n^H(k)}_{\mu}^2 = \mathrm{const.}$ are constant. And we may \emph{not} normalize the electric and magnetic components \emph{separately} to \eg $\nicefrac{1}{2}$ as then we no longer know whether the resulting function is in fact an eigenfunction of \eqref{periodic:eqn:eigenvalue_equation_auxiliary_Maxwell_operator}. The $3 \times 3$ matrix 
\begin{align}
	\mathrm{Ch} := \frac{1}{2 \pi} \int_{\BZ} \dd k \, \Omega(k) 
	\label{periodic:eqn:electromagnetic_Chern_numbers}
\end{align}
that contains the three electromagnetic Chern numbers as its offdiagonal elements is then defined in terms of the \emph{electromagnetic Berry curvature} $\Omega(k) = \bigl ( \Omega_{jl}(k) \bigr )_{1 \leq j,l \leq 3}$, 
\begin{align}
	\Omega_{jl}(k) := \partial_{k_j} \mathcal{A}_l(k) - \partial_{k_l} \mathcal{A}_j(k) 
	= \Omega_{jl}^E(k) + \Omega_{jl}^H(k) 
	, 
	\label{periodic:eqn:electromagnetic_Berry_curvature}
\end{align}
which can be further subdivided into an electric and a magnetic contribution. These are due to the electric and magnetic component of the electromagnetic Berry connection
\begin{align*}
	\mathcal{A}(k) :& \negmedspace = \ii \, \bscpro{\varphi_n(k)}{\nabla_k \varphi_n(k)}_W 
	= \ii \, \bscpro{\varphi_n^E(k)}{\nabla_k \varphi_n^E(k)}_{\eps} + \ii \, \bscpro{\varphi_n^H(k)}{\nabla_k \varphi_n^H(k)}_{\mu} 
	\\
	&=: \mathcal{A}^E(k) + \mathcal{A}^H(k) 
	, 
\end{align*}
although we emphasize that $\mathcal{A}^E$ and $\mathcal{A}^H$ by themselves are \emph{not} connections. That is because we do not know whether the electric and magnetic contributions in the electromagnetic normalization condition~\eqref{periodic:eqn:normalization_electromagnetic_Bloch_function} are constant functions of $k$. 

Instead, we need to define the \emph{electric} Berry connection 
\begin{align*}
	\widetilde{A}(k) :& \negmedspace = \ii \, \bscpro{\widetilde{\varphi}_n^E(k)}{\nabla_k \widetilde{\varphi}_n^E(k)}_{\eps}
	= \norm{\varphi_n^E(k)}_{\eps}^{-2} \, \mathcal{A}^E(k) - \ii \, \nabla_k \bnorm{\varphi_n^E(k)}_{\eps}
\end{align*}
on the basis of a suitably normalized electric Bloch function 
\begin{align*}
	\widetilde{\varphi}_n^E(k) := \norm{\varphi_n^E(k)}_{\eps}^{-1} \, \varphi_n^E(k) 
	. 
\end{align*}
This renormalization is well-defined, because $\norm{\varphi_n^E(k)}_{\eps} \neq 0$ never vanishes thanks to equation~\eqref{periodic:eqn:imath_reconstructs_electromagnetic_Bloch_functions}: because we able to reconstruct the electromagnetic Bloch eigenfunctions solely from the electric part, $\varphi_n^E(k) = 0$ would imply $0 = \imath^E(k) \varphi_n^E(k) = \varphi_n(k)$. But this evidently runs afoul with the electromagnetic normalization condition~\eqref{periodic:eqn:normalization_electromagnetic_Bloch_function}.\footnote{Technically, we only know that $\varphi_n^E(k) \neq 0$ for almost all $k$ at this point. But because $k \mapsto \imath^E(k)$ can be seen to be analytic away from band crossings, this is in fact true for all $k$.} 

Based on the electric Berry connection we may define the electric Berry curvature $\widetilde{\Omega}^E = \bigl ( \widetilde{\Omega}^E_{jl}(k) \bigr )_{1 \leq j , l \leq 3}$, that we can express in terms of the electric and magnetic components of the electromagnetic Berry curvature~\eqref{periodic:eqn:electromagnetic_Berry_curvature}, 
\begin{align*}
	\widetilde{\Omega}_{jl}^E(k) :& \negmedspace = \partial_{k_j} \widetilde{\mathcal{A}}_l^E(k) - \partial_{k_l} \widetilde{\mathcal{A}}_j^E(k) 
	\\
	&= \bnorm{\varphi_n^E(k)}_{\eps}^{-2} \, \Bigl ( \Omega_{jl}^E(k) - \partial_{k_j} \bigl ( \ln \bnorm{\varphi_n^E(k)}_{\eps} \bigr ) \; \mathcal{A}_l^E(k) + \partial_{k_l} \bigl ( \ln \bnorm{\varphi_n^E(k)}_{\eps} \bigr ) \; \mathcal{A}_j^E(k) \Bigr ) 
	, 
\end{align*}
and the matrix of electric Chern numbers, 
\begin{align}
	\mathrm{Ch}^E := \frac{1}{2 \pi} \int_{\BZ} \dd k \, \widetilde{\Omega}^E(k) 
	. 
	\label{periodic:eqn:electric_Chern_numbers}
\end{align}
The magnetic Berry curvature $\widetilde{\Omega}^H$ and magnetic Chern numbers $\mathrm{Ch}^H$ are defined in the same fashion. 

Comparing equations~\eqref{periodic:eqn:electromagnetic_Chern_numbers} and \eqref{periodic:eqn:electric_Chern_numbers}, we can see no obvious relations between $\mathrm{Ch}$, $\mathrm{Ch}^E$ and $\mathrm{Ch}^H$. In fact, even if we impose simplifying assumptions such as $\snorm{\varphi_n^E(k)}_{\eps} = \mathrm{const.}$, looking at the simplified equations it is still not obvious that 
\begin{align}
	\mathrm{Ch} = \mathrm{Ch}^E = \mathrm{Ch}^H 
	\label{periodic:eqn:Chern_numbers_agree}
\end{align}
all agree. 

To summarize, while it is true that we can reconstruct the electric or magnetic component of Bloch functions with the help of the map $\imath^{E,H}(k)$, the electric and magnetic Chern numbers are \emph{computed} solely from the electric or magnetic fields without reconstructing the other, missing components first. So there is something left to prove. However, a direct, hands-on proof of \eqref{periodic:eqn:Chern_numbers_agree} — even in the simplest situation of a single, non-degenerate band — seems unfeasible. 

\subsubsection{The frequency relevant bands} 
\label{periodic:Chern_numbers:relevant_bands}
Usually more than one band contributes to bulk-edge correspondences, and we will refer to those bands as the \emph{relevant bands}. Just like in solid state physics, the so-called \emph{gap condition} is crucial so as to ensure that the relevant bands decouple from the other bands: 
\begin{assumption}[Relevant Bands and Gap Condition]\label{periodic:assumption:gap_condition}
	Suppose $\specrel(k) := \bigcup_{j \in \mathcal{I}} \bigl \{ \omega_j(k) \bigr \}$, referred to as the \emph{relevant bands}, is a collection of $n$ frequency bands characterized by an index set $\mathcal{I} := \bigl \{ j_1 , \ldots , j_n \bigr \}$. We say that they satisfy the \emph{Gap Condition} if and only if the relevant bands do not cross or merge with other bands. Put mathematically, $0 \not\in \specrel(k)$ and we have 
	\begin{align*}
		\mathrm{dist} \Bigl ( \mbox{$\bigcup_{j \in \mathcal{I}}$} \bigl \{ \omega_j(k) \bigr \} \, , \, \mbox{$\bigcup_{l \not\in \mathcal{I}}$} \bigl \{ \omega_l(k) \bigr \}  \Bigr ) > 0 
		. 
	\end{align*}
\end{assumption}
\begin{remark}
	The condition $0 \not\in \specrel(k)$ is necessary to exclude ground state bands and the longitudinal waves, which have approximately linear dispersion around $k = 0$ and $\omega = 0$. The two positive frequency ground state bands are the only ones that touch $\omega = 0$, and they do so only at the center of the Brillouin zone \cite[Theorem~1.4~(iii)]{DeNittis_Lein:adiabatic_periodic_Maxwell_PsiDO:2013}. At that point they intersect with all of the longitudinal gradient fields, and it is for that reason that the ground state Bloch functions are not even continuous at $k = 0$. Hence, they need to be excluded from the construction below. 
	
	However, to obtain a \emph{physically meaningful} photonic bulk-edge correspondence, we will need to include the ground state bands in our arguments. We shall not attempt to do so here and postpone this to a future work \cite{DeNittis_Lein:photonic_bulk_edge_correspondence:2018}. 
\end{remark}
Suppose we are given relevant bands $\specrel(k) := \bigcup_{j \in \mathcal{I}} \bigl \{ \omega_j(k) \bigr \}$ that satisfy the Gap Condition; the corresponding Bloch functions give rise to the relevant electromagnetic and electric/magnetic subspaces, 
\begin{subequations}\label{periodic:eqn:relevant_subspaces}
	\begin{align}
		\Hil_{\mathrm{rel}}(k) := \mathrm{span} \bigl \{ \varphi_j(k) \bigr \}_{j \in \mathcal{I}}
		, 
		\label{periodic:eqn:relevant_subspaces:electromagnetic}
		\\
		\Hil_{\mathrm{rel}}^{E,H}(k) := \mathrm{span} \bigl \{ \varphi_j^{E,H}(k) \bigr \}_{j \in \mathcal{I}}
		. 
		\label{periodic:eqn:relevant_subspaces:electric}
	\end{align}
\end{subequations}
We can use standard arguments to show that these subspaces depend on $k$ in an analytic fashion: by writing the corresponding orthogonal projections 
\begin{subequations}\label{periodic:eqn:relevant_projections}
	\begin{align}
		P_{\mathrm{rel}}(k) :& \negmedspace = \sum_{j \in \mathcal{I}} \sket{\varphi_j(k)}_W \sbra{\varphi_j(k)}_W
		\label{periodic:eqn:relevant_projections:electromagnetic}
		\\
		P_{\mathrm{rel}}^{E,H}(k) :& \negmedspace = \sum_{j \in \mathcal{I}} \bket{\varphi_j^{E,H}(k)}_{\eps,\mu} \bbra{\varphi_j^{E,H}(k)}_{\eps,\mu}
		\label{periodic:eqn:relevant_projections:electric_or_magnetic}
	\end{align}
\end{subequations}
as a Cauchy integral and exploiting the gap condition, we can transfer the analyticity of the resolvents to the projections and their ranges. 

The bra-ket notation here emphasizes what scalar product to use, \eg we define the rank-$1$ operator $\bket{\psi^E(k)}_{\eps} \bbra{\varphi(k)}_W : L^2_W(\T^3,\C^6) \longrightarrow L^2_{\eps}(\T^3,\C ^3)$ as 
\begin{align*}
	\bket{\psi^E(k)}_{\eps} \bbra{\varphi(k)}_W \phi(k) := \bscpro{\varphi(k)}{\phi(k)}_W \, \psi^E(k) \in L^2_{\eps}(\T^3,\C^3)
	. 
\end{align*}
The matrices of Chern numbers now have straight-forward generalizations to the multiband case that are best expressed in terms of the relevant projections, 
\begin{subequations}
	\begin{align}
		\mathrm{Ch}_{jl} :& \negmedspace = - \frac{\ii}{2 \pi} \int_{\BZ} \dd k \; \mathrm{Tr}_{\Hil(k)} \Bigl ( P_{\mathrm{rel}}(k) \, \bigl [ \partial_{k_j} P_{\mathrm{rel}}(k) \, , \, \partial_{k_l} P_{\mathrm{rel}}(k) \bigr ]  \Bigr ) 
		, 
		\\
		\mathrm{Ch}^{E,H}_{jl} :& \negmedspace = - \frac{\ii}{2 \pi} \int_{\BZ} \dd k \; \mathrm{Tr}_{L^2_{\eps,\mu}(\T^3,\C^3)} \Bigl ( P^{E,H}_{\mathrm{rel}}(k) \, \bigl [ \partial_{k_j} P^{E,H}_{\mathrm{rel}}(k) \, , \, \partial_{k_l} P^{E,H}_{\mathrm{rel}}(k) \bigr ] \Bigr ) 
		, 
	\end{align}
\end{subequations}
where $\mathrm{Tr}_{\Hil}$ denotes the trace on the Hilbert space $\Hil = \Hil(k) , L^2_{\eps}(\T^3,\C^3)$ and $j , l = 1 , 2 , 3$. 

\subsubsection{Construction of the electromagnetic, electric and magnetic Bloch vector bundles} 
\label{periodic:Chern_numbers:construction_vector_bundles}
One way to view — and, indeed, define — Chern numbers is to view them as topological invariants characterizing complex vector bundles of fixed rank up to isomorphism \cite{Peterson:Chern_classes:1959,Cadek_Vanzura:classification_vector_bundles:1993}. Given a family of relevant bands, in what follows we will associate three vector bundles to them, an electromagnetic, an electric and a magnetic vector bundle. Considering these vector bundles “up to isomorphism” has a neat physical interpretation: topological phases and the topological invariants labeling them do not change under continuous, gap-preserving deformations of the physical system under study. And indeed, the classification of vector bundles up to isomorphism can be obtained from a homotopy definition and the fact that any vector bundle of a given rank can be seen as the pullback of a universal vector bundle \cite[Section~1.2, pp.~27]{Hatcher:vector_bundles_K_theory:2009}. 

The actual definition of the relevant vector bundles is somewhat complicated by us employing the Zak transform where functions and operators are \emph{quasi}-periodic under $\Gamma^*$ translations. One option to remedy this is to switch to the ordinary Bloch-Floquet transform where functions are $\Gamma^*$-periodic in $k$ and $\Gamma$-quasi periodic in $x$ (see \eg \cite[Definition~4.1]{DeNittis_Lein:exponentially_loc_Wannier:2011}). 

However, we will take another route and start with a vector bundle 
\begin{align}
	\bigsqcup_{k \in \R^3} \Hil_{\mathrm{rel}}(k) \overset{\pi}{\longrightarrow} \R^3 
	\label{periodic:eqn:vector_bundle_R3}
\end{align}
over all of $\R^3$, where $\bigsqcup$ denotes the disjoint union and $\pi : \psi(k) \mapsto k$ is the projection onto the base point. Proving that this is indeed a vector bundle is straightforward (see \eg \cite[Lemma~4.5]{DeNittis_Lein:exponentially_loc_Wannier:2011} for the technical arguments that apply verbatim here), and rests on the fact that the projections $k \mapsto P_{\mathrm{rel}}(k)$ are analytic and therefore in particular continuous.\footnote{Thanks to the Oka principle \cite[Remark~4.4]{DeNittis_Lein:exponentially_loc_Wannier:2011} we need not distinguish between analytic and topological vector bundles in the present context.} 

The base space $\R^3$ of momenta comes naturally comes furnished with a $\Gamma^*$ action, and the equivariance~\eqref{periodic:eqn:equivariant_operator} of the projection $k \mapsto P_{\mathrm{rel}}(k)$ implies that the multiplication operator 
\begin{align*}
	\e^{+ \ii \gamma^* \cdot \hat{x}} : \Hil_{\mathrm{rel}}(k) \longrightarrow \Hil_{\mathrm{rel}}(k - \gamma^*)
\end{align*}
relates the fibers at $k$ and $k - \gamma^*$. Moreover, the group action is evidently \emph{free}\footnote{Apart from the identity element $0 \in \Gamma^*$, translations by $\gamma^* \in \Gamma^* \setminus \{ 0 \}$ have no stationary points. }, and therefore \eqref{periodic:eqn:vector_bundle_R3} in fact defines a $\Gamma^*$-equivariant vector bundle. 

And these are naturally isomorphic to what we will call the \emph{electromagnetic Bloch vector bundle} 
\begin{align}
	\mathcal{E}_{\mathrm{rel}} &: \Bigl ( \bigsqcup_{k \in \R^3} \Hil_{\mathrm{rel}}(k) \overset{\pi}{\longrightarrow} \R^3 \Bigr ) \slash \Gamma^* 
\end{align}
over the Brillouin torus $\BZ \simeq \R^3 / \Gamma^* \simeq \T^3$ \cite[Proposition~1.6.1]{Atiyah:K_theory:1994}. Intuitively, this is a fancy way of saying that quasiperiodic objects are completely determined by their behavior over one period and the quasiperiodicity condition. 

Clearly, these ideas also apply to $P_{\mathrm{rel}}^{E,H}$, which gives rise to the electric/magnetic Bloch vector bundles, 
\begin{align*}
	\mathcal{E}_{\mathrm{rel}}^{E,H} &: \Bigl ( \bigsqcup_{k \in \R^3} \Hil_{\mathrm{rel}}^{E,H}(k) \overset{\pi^{E,H}}{\longrightarrow} \R^3 \Bigr ) \slash \Gamma^* 
	. 
\end{align*}
%

\subsubsection{Electromagnetic, electric and magnetic Bloch vector bundles are isomorphic} 
\label{periodic:Chern_numbers:three_vector_bundles_isomorphic}
To streamline the presentation we will focus on the connection between electromagnetic and electric Chern numbers. Because the roles of electric and magnetic fields are symmetric, any and all arguments also apply to the magnetic case. 

The central ingredient here are the linear maps 
\begin{align}
	\imath_{\mathrm{rel}}^E(k) := \imath^E(k) \, \big \vert_{\Hil^E_{\mathrm{rel}}(k)} &: \Hil_{\mathrm{rel}}^E(k) \longrightarrow \Hil_{\mathrm{rel}}(k) 
	\\
	\mathrm{pr}_{\mathrm{rel}}^E(k) := \mathrm{pr}^E(k) \, \big \vert_{\Hil_{\mathrm{rel}}(k)} &: \Hil_{\mathrm{rel}}(k) \longrightarrow \Hil_{\mathrm{rel}}^E(k)
\end{align}
that have been restricted to the relevant subspaces, which are \emph{finite}-dimensional Hilbert spaces. Let us collect some important properties of these restricted maps: 
\begin{lemma}\label{periodic:lem:analyticity_properties}
	Suppose the material weights satisfy Assumption~\ref{periodic:assumption:periodic_material_weights} and we are given a family $\specrel(k) = \bigcup_{j \in \mathcal{I}} \bigl \{ \omega_j(k) \bigr \}$ of relevant bands that satisfy the Gap Condition~\ref{periodic:assumption:gap_condition}. Then the following holds: 
	\begin{enumerate}[(1)]
		\item The orthogonal projections~\eqref{periodic:eqn:relevant_projections} onto the relevant subspaces — and therefore the relevant subspaces~\eqref{periodic:eqn:relevant_subspaces} themselves — depend analytically on $k$ and are equivariant. 
		\item The map $k \mapsto \imath_{\mathrm{rel}}^{E,H}(k) : \Hil_{\mathrm{rel}}^{E,H}(k) \longrightarrow \Hil_{\mathrm{rel}}(k)$ is globally analytic and equivariant in $k$. 
		\item The map $k \mapsto \mathrm{pr}_{\mathrm{rel}}^{E,H}(k) : \Hil_{\mathrm{rel}}(k) \longrightarrow \Hil_{\mathrm{rel}}^{E,H}(k)$ is globally analytic and equivariant in $k$. 
		\item The maps $\imath_{\mathrm{rel}}^{E,H}(k)$ and $\mathrm{pr}_{\mathrm{rel}}^{E,H}(k)$ are inverses to one another. 
	\end{enumerate}
\end{lemma}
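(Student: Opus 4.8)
The plan is to deduce all four claims from the analyticity and equivariance of the fiber operators $\Maux(k)$, $M_{EE}^2(k)$ and $M_{HH}^2(k)$ (Lemma~\ref{periodic:lem:essential_properties_wave_operators}) together with the Gap Condition~\ref{periodic:assumption:gap_condition}, which guarantees that the relevant eigenvalues $\bigl \{ \omega_j(k) \bigr \}_{j \in \mathcal{I}}$ — and hence also $\bigl \{ \omega_j(k)^2 \bigr \}_{j \in \mathcal{I}}$ — stay a positive distance away both from the rest of the spectrum and from $0$. Claim~(4) will come essentially for free: by Proposition~\ref{periodic:prop:reconstruction_eigenfunctions} and equation~\eqref{periodic:eqn:imath_reconstructs_electromagnetic_Bloch_functions} the map $\imath^E(k)$ sends the spanning set $\bigl \{ \varphi_j^E(k) \bigr \}_{j \in \mathcal{I}}$ of $\Hil_{\mathrm{rel}}^E(k)$ onto the spanning set $\bigl \{ \varphi_j(k) \bigr \}_{j \in \mathcal{I}}$ of $\Hil_{\mathrm{rel}}(k)$ and $\mathrm{pr}^E(k)$ does the reverse; since $\Hil_{\mathrm{rel}}^E(k) \subseteq \ker \bigl ( (\nabla - \ii k) \cdot \eps \bigr )$ and $\Hil_{\mathrm{rel}}(k) \subseteq \mathcal{J}_+(k)$ (the relevant bands being strictly positive), the restricted maps are well defined and Corollary~\ref{periodic:cor:imath_pr_are_inverses} then immediately gives $\mathrm{pr}_{\mathrm{rel}}^{E,H}(k) \, \imath_{\mathrm{rel}}^{E,H}(k) = \id$ on $\Hil_{\mathrm{rel}}^{E,H}(k)$ and $\imath_{\mathrm{rel}}^{E,H}(k) \, \mathrm{pr}_{\mathrm{rel}}^{E,H}(k) = \id$ on $\Hil_{\mathrm{rel}}(k)$.

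For claim~(1) I would use the standard Riesz-projection formula: choosing, locally in $k$, a contour $\Lambda(k) \subset \C$ encircling exactly $\bigl \{ \omega_j(k) \bigr \}_{j \in \mathcal{I}}$, one writes $P_{\mathrm{rel}}(k) = \tfrac{\ii}{2 \pi} \oint_{\Lambda(k)} \bigl ( \Maux(k) - z \bigr )^{-1} \, \dd z$; since $k \mapsto \Maux(k)$ is analytic and the Gap Condition lets us take $\Lambda(k)$ locally independent of $k$, analyticity of $P_{\mathrm{rel}}$ follows, and conjugating the integrand with $\e^{\ii \gamma^* \cdot \hat{x}}$ via the equivariance~\eqref{periodic:eqn:equivariant_operator} of $\Maux(k)$ yields equivariance of $P_{\mathrm{rel}}$. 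The same argument run with $M_{EE}^2(k)$ and $M_{HH}^2(k)$, which are analytic on all of $\BZ$, proves (1) for $P_{\mathrm{rel}}^{E,H}$. Claim~(3) is then a triviality: $\mathrm{pr}^E(k) = \mathrm{pr}^E$ does not depend on $k$, so $\mathrm{pr}_{\mathrm{rel}}^E(k) = \mathrm{pr}^E \, P_{\mathrm{rel}}(k)$ inherits analyticity and equivariance directly from $P_{\mathrm{rel}}$ (and likewise for the magnetic case).

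The real work is in claim~(2). Here I would view $\imath_{\mathrm{rel}}^E(k)$ as the operator $\imath^E(k) \, P_{\mathrm{rel}}^E(k)$ between the ambient Hilbert spaces: its electric component is $P_{\mathrm{rel}}^E(k)$, analytic by (1), while its magnetic component is $- \ii \, \mu^{-1} \, (\nabla - \ii k)^{\times} \, \bigl ( M_{EE}^2(k) \bigr )^{- \nicefrac{1}{2}} \, P_{\mathrm{rel}}^E(k)$. The obstruction is that $\bigl ( M_{EE}^2(k) \bigr )^{- \nicefrac{1}{2}}$ is defined only off the essential spectral point $0$ and that $(\nabla - \ii k)^{\times}$ is unbounded, so neither factor is by itself a tame analytic bounded-operator-valued function. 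My proposed resolution is to factor this operator into two manifestly analytic bounded pieces,
\begin{align*}
	(\nabla - \ii k)^{\times} \, \bigl ( M_{EE}^2(k) \bigr )^{- \nicefrac{1}{2}} \, P_{\mathrm{rel}}^E(k) = \Bigl [ (\nabla - \ii k)^{\times} \, \bigl ( M_{EE}^2(k) + 1 \bigr )^{-1} \Bigr ] \, \Bigl [ \bigl ( M_{EE}^2(k) \bigr )^{\nicefrac{1}{2}} + \bigl ( M_{EE}^2(k) \bigr )^{- \nicefrac{1}{2}} \Bigr ] \, P_{\mathrm{rel}}^E(k) .
\end{align*}
The second bracket is bounded and analytic by the Riesz--Dunford holomorphic functional calculus, because the Gap Condition confines $\bigl \{ \omega_j(k)^2 \bigr \}_{j \in \mathcal{I}}$ to a compact subset of $(0,\infty)$ on which $\zeta \mapsto \zeta^{\nicefrac{1}{2}} + \zeta^{- \nicefrac{1}{2}}$ is holomorphic; the first bracket is bounded by the elliptic estimate $\bnorm{(\nabla - \ii k)^{\times} u}^2 \lesssim \bscpro{u}{M_{EE}^2(k) \, u}_{\eps}$ (which holds on all of $L^2_{\eps}(\T^3,\C^3)$ since $(\nabla - \ii k)^{\times}$ kills the gradient fields), and it is analytic since $\bigl ( M_{EE}^2(k) + 1 \bigr )^{-1}$ is the resolvent of an analytic family at the fixed point $-1 \in \C \setminus \sigma \bigl ( M_{EE}^2(k) \bigr )$. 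Multiplication by the $\Gamma$-periodic bounded weight $\mu^{-1}$ preserves analyticity, so $\imath_{\mathrm{rel}}^E$ is analytic; equivariance follows again by conjugating each building block with $\e^{\ii \gamma^* \cdot \hat{x}}$, using that $(\nabla - \ii(k - \gamma^*))^{\times} = \e^{\ii \gamma^* \cdot \hat{x}} \, (\nabla - \ii k)^{\times} \, \e^{- \ii \gamma^* \cdot \hat{x}}$ and the equivariance of $M_{EE}^2(k)$ and $P_{\mathrm{rel}}^E(k)$. The magnetic statements in (2) and (3) are obtained by exchanging the roles of $\eps$ and $\mu$. I expect the only point genuinely requiring care to be showing that the first bracket is \emph{analytic} — not merely strongly continuous — as a bounded-operator-valued map, i.e. transporting the resolvent analyticity across the unbounded factor $(\nabla - \ii k)^{\times}$; this I would settle by the usual combination of local norm-boundedness with weak analyticity on the dense set of vectors in the curl's domain. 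Everything else is bookkeeping.
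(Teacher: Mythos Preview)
Your proposal is correct; items (1), (3) and (4) match the paper's proof essentially verbatim (Riesz--Cauchy projection for (1), $k$-independence of $\mathrm{pr}^{E,H}$ for (3), and Corollary~\ref{periodic:cor:imath_pr_are_inverses} for (4)).

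The only genuine difference is in (2). The paper does not introduce the resolvent $\bigl ( M_{EE}^2(k) + 1 \bigr )^{-1}$ at all; instead it works directly on the finite-dimensional subspace $\Hil_{\mathrm{rel}}^E(k)$, observes that $M_{EE}^2(k) \big \vert_{\Hil_{\mathrm{rel}}^E(k)}$ is bounded above and below away from $0$ uniformly in $k$, and then obtains $\bigl ( M_{EE}^2(k) \bigr )^{-\nicefrac{1}{2}} \big \vert_{\Hil_{\mathrm{rel}}^E(k)}$ as a locally convergent power series via the Taylor expansion of $(1+x)^{-\nicefrac{1}{2}}$. The curl factor $-\ii \, \mu^{-1} (\nabla - \ii k)^{\times}$ is then simply declared analytic because it is a linear polynomial in $k$ acting on an analytically varying finite-dimensional space. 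Your factorization $\bigl [ (\nabla - \ii k)^{\times} (M_{EE}^2(k)+1)^{-1} \bigr ] \bigl [ (M_{EE}^2(k))^{\nicefrac{1}{2}} + (M_{EE}^2(k))^{-\nicefrac{1}{2}} \bigr ] P_{\mathrm{rel}}^E(k)$ is algebraically correct and achieves the same end, but trades simplicity for rigor: it gives an honest bounded-operator-valued argument on the full ambient space and thus confronts head-on the issue of pushing analyticity through the unbounded curl (the point you yourself flag), whereas the paper sidesteps this by staying inside finite-dimensional fibers. Either route works; the paper's is shorter, yours is a bit more careful about exactly the place where the paper is terse.
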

\begin{proof}
	\begin{enumerate}[(1)]
		\item A proof was outlined in Section~\ref{periodic:Chern_numbers:relevant_bands}. 
		\item We give the proof only for $\imath^E_{\mathrm{rel}}(k)$ and $\mathrm{pr}^E(k)$, the one for the magnetic maps is identical. First, let us deal with analyticity: from Corollary~\ref{setup:cor:inclusion_bounded_injections} we know that the norm of 
		\begin{align*}
			\imath_{\mathrm{rel}}^E(k) = \left (
			\begin{matrix}
				\id_{\Hil_{\mathrm{rel}}^E(k)} \\
				- \ii \mu^{-1} \, (\nabla - \ii k)^{\times} \, \bigl ( M_{EE}^2(k) \bigr )^{- \nicefrac{1}{2}} \\
			\end{matrix}
			\right )
		\end{align*}
		is $\leq 1$ for almost any $k \in \BZ$; in fact, repeating the relevant steps in the proof of Proposition~\ref{setup:prop:identification_spaces} after replacing $\Maux$ with $\Maux(k)$, we see that this is indeed true for \emph{all} $k$ and not just a subset of full measure.
		
		Moreover, $\imath_{\mathrm{rel}}^E(k)$ acts trivially on the electric component and the electric subspace depends analytically on $k$, we need to place our attention on the magnetic component. 
		
		But this is the product of two analytic operators: the first factor, $\bigl ( M_{EE}^2(k) \bigr )^{- \nicefrac{1}{2}} \, \big \vert_{\Hil_{\mathrm{rel}}^E(k)}$ maps $\Hil_{\mathrm{rel}}^E(k)$ to itself. And since the operator $M_{EE}^2(k) \, \big \vert_{\Hil_{\mathrm{rel}}^E(k)} \geq \bigl ( \inf_{k \in \BZ} \specrel(k) \bigr ) \; \id_{\Hil_{\mathrm{rel}}^E(k)}$ is bounded away from $0$, we may use the Taylor expansion of $(1 + x)^{- \nicefrac{1}{2}}$ to find a local power series of $\bigl ( M_{EE}^2(k) \bigr )^{- \nicefrac{1}{2}} \big \vert_{\Hil_{\mathrm{rel}}^E(k)}$ in $k - k_0$ near any fixed $k_0$ in the Brillouin zone $\BZ$. As the maximal frequency $\sup_{k \in \BZ} \specrel(k) < \infty$ is bounded, we know that the radius of convergence is always strictly positive and uniformly bounded away from $0$ independently of the point $k_0$ around which we expand. The other operator, $- \ii \, \mu^{-1} \, (\nabla - \ii k)^{\times}$, is linear and defined on a space that changes analytically, and hence, analytic. 
		
		Putting this together, we have shown that $k \mapsto \imath_{\mathrm{rel}}^E(k)$ is analytic on the entire Brillouin zone. Equivariance follows from writing this map as $\imath_{\mathrm{rel}}^E(k) = \sum_{j \in \mathcal{I}} \sket{\varphi_n(k)}_W \bra{\varphi_n^E(k)}_{\eps}$ and using the equivariance of the Bloch functions themselves. 
		\item The proof is simple: the Hilbert spaces the map is defined on depend analytically on $k$ by (1), and the actual prescription is independent of $k$. Hence, the map $\mathrm{pr}_{\mathrm{rel}}^E$ is analytic. Moreover, the equivariance of $\mathrm{pr}_{\mathrm{rel}}^E(k) = \sum_{j \in \mathcal{I}} \sket{\varphi_n^E(k)}_{\eps} \sbra{\varphi_n(k)}_W$ can be read off by expressing it in the corresponding eigenbases. 
		\item First of all, the Gap Condition guarantees $\omega_n(k) > 0$ holds for all $k$ and $n \in \mathcal{I}$. Hence, as a consequence of Corollary~\ref{periodic:cor:imath_pr_are_inverses} the Hilbert spaces $\Hil_{\mathrm{rel}}(k)$ and $\Hil_{\mathrm{rel}}^E(k)$ both have dimension $\abs{\mathcal{I}}$, and also the restrictions are inverses to one another. 
	\end{enumerate}
\end{proof}
These properties of the maps $k \mapsto \imath_{\mathrm{rel}}^{E,H}(k)$ and its inverse $k \mapsto \mathrm{pr}_{\mathrm{rel}}^{E,H}(k)$ can now be restated in vector bundle theoretic language, using \cite[Theorem~2.5, p.~27]{Husemoller:fiber_bundles:1966}, as follows: 
\begin{proposition}\label{periodic:prop:equivalence_vector_bundles}
	Suppose the material weights satisfy Assumption~\ref{periodic:assumption:periodic_material_weights} and the relevant bands fulfill the Gap Condition~\ref{periodic:assumption:gap_condition}. Then $k \mapsto \imath_{\mathrm{rel}}^{E,H}(k)$ and $k \mapsto \mathrm{pr}_{\mathrm{rel}}^{E,H}(k)$ define vector bundle isomorphisms that are inverse to each other, and therefore $\mathcal{E}_{\mathrm{rel}} \simeq \mathcal{E}^{E,H}_{\mathrm{rel}}$ are isomorphic. 
\end{proposition}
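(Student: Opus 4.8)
The plan is to observe that Lemma~\ref{periodic:lem:analyticity_properties} and Corollary~\ref{periodic:cor:imath_pr_are_inverses} already contain all of the analytic and algebraic content, so that the proof amounts to translating those facts into the language of vector bundles and then descending along the $\Gamma^*$-quotient. First I would note that, by construction, $\imath_{\mathrm{rel}}^E(k)$ sends the fiber $\Hil_{\mathrm{rel}}^E(k)$ of $\mathcal{E}_{\mathrm{rel}}^E$ at $k$ linearly into the fiber $\Hil_{\mathrm{rel}}(k)$ of $\mathcal{E}_{\mathrm{rel}}$ over the \emph{same} base point $k$, so it is a fiber-preserving, fiberwise-linear map; Lemma~\ref{periodic:lem:analyticity_properties}~(2) then says $k \mapsto \imath_{\mathrm{rel}}^E(k)$ is globally analytic, hence continuous, so that $\imath_{\mathrm{rel}}^E$ is a morphism of (analytic, equivalently topological by the Oka principle) vector bundles over $\R^3$. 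By Lemma~\ref{periodic:lem:analyticity_properties}~(4) this morphism is an isomorphism on every fiber, with fiberwise inverse $\mathrm{pr}_{\mathrm{rel}}^E(k)$, which is itself analytic by Lemma~\ref{periodic:lem:analyticity_properties}~(3); invoking \cite[Theorem~2.5, p.~27]{Husemoller:fiber_bundles:1966} — a bundle morphism that is a fiberwise isomorphism is a bundle isomorphism — I would conclude that $\imath_{\mathrm{rel}}^E$ and $\mathrm{pr}_{\mathrm{rel}}^E$ are mutually inverse vector bundle isomorphisms between the $\R^3$-bundles $\bigsqcup_{k} \Hil_{\mathrm{rel}}^E(k)$ and $\bigsqcup_{k} \Hil_{\mathrm{rel}}(k)$.

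The second step is to descend to the Brillouin torus. Here the equivariance clauses of Lemma~\ref{periodic:lem:analyticity_properties}~(2)--(3) are exactly what is needed: $\imath_{\mathrm{rel}}^E(k)$ intertwines the $\Gamma^*$-actions, i.e.\ the square with horizontal arrows $\imath_{\mathrm{rel}}^E(k)$, $\imath_{\mathrm{rel}}^E(k-\gamma^*)$ and vertical arrows $\e^{+\ii\gamma^*\cdot\hat x}$ commutes, so $\imath_{\mathrm{rel}}^E$ is a morphism of $\Gamma^*$-equivariant vector bundles over $\R^3$. Since the $\Gamma^*$-action on $\R^3$ is free and properly discontinuous, quotienting is functorial (\cf \cite[Proposition~1.6.1]{Atiyah:K_theory:1994}), so $\imath_{\mathrm{rel}}^E$ induces a vector bundle morphism $\mathcal{E}_{\mathrm{rel}}^E \to \mathcal{E}_{\mathrm{rel}}$ over $\BZ \simeq \R^3/\Gamma^*$ whose inverse is induced by $\mathrm{pr}_{\mathrm{rel}}^E$; hence $\mathcal{E}_{\mathrm{rel}}^E \simeq \mathcal{E}_{\mathrm{rel}}$. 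The magnetic statement is obtained verbatim with $\eps$, $M_{EE}^2$ replaced by $\mu$, $M_{HH}^2$, giving $\mathcal{E}_{\mathrm{rel}}^H \simeq \mathcal{E}_{\mathrm{rel}}$, and composing the two yields the direct isomorphism $\mathcal{E}_{\mathrm{rel}}^H \simeq \mathcal{E}_{\mathrm{rel}}^E$ realized by $\mathrm{pr}_{\mathrm{rel}}^E(k) \circ \imath_{\mathrm{rel}}^H(k)$. Theorem~\ref{intro:thm:Chern_numbers_agree} then follows at once, since Chern numbers are invariants of the isomorphism class of a vector bundle.

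I do not expect a genuine obstacle, since the heavy lifting sits in Lemma~\ref{periodic:lem:analyticity_properties}. The one point that deserves a careful line or two is checking that the maps are morphisms in the \emph{equivariant} category — that the intertwining relation holds with the correct cocycle $\e^{+\ii\gamma^*\cdot\hat x}$ on both the electric and the electromagnetic side, so that the induced maps on the quotients are well defined and independent of the choice of representative $k$ in its $\Gamma^*$-orbit. A secondary, purely bookkeeping matter is that every bundle in sight is built from an \emph{analytic} family of projections, so by the Oka principle the analytic and topological isomorphism classes coincide and it is harmless to run the argument in whichever category is most convenient.
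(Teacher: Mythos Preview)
Your proposal is correct and follows essentially the same approach as the paper: the paper's proof is really just the one-sentence observation that Lemma~\ref{periodic:lem:analyticity_properties} supplies the continuity, equivariance and fiberwise-inverse properties, and then \cite[Theorem~2.5, p.~27]{Husemoller:fiber_bundles:1966} converts this into a vector bundle isomorphism. You have simply spelled out the descent along the $\Gamma^*$-quotient more explicitly than the paper does, which is fine.
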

And because analytic vector bundles over the three-dimensional torus are uniquely characterized by three Chern numbers, we immediately get a proof for Theorem~\ref{intro:thm:Chern_numbers_agree}, that can be restated as 
\begin{corollary}
	Therefore, all three Chern numbers of the electromagnetic Bloch vector bundle, the electric Bloch vector bundle and the magnetic Bloch vector bundle agree. 
\end{corollary}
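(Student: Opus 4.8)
The plan is to deduce this Corollary from Proposition~\ref{periodic:prop:equivalence_vector_bundles} by combining the naturality of Chern classes with the classification of complex vector bundles over $\T^3$ up to isomorphism; essentially all of the analytic work has already been done in Lemma~\ref{periodic:lem:analyticity_properties} and Proposition~\ref{periodic:prop:equivalence_vector_bundles}, so what is left is homotopy-theoretic bookkeeping. First I would record that Proposition~\ref{periodic:prop:equivalence_vector_bundles} produces vector bundle isomorphisms $\imath_{\mathrm{rel}}^{E} : \mathcal{E}_{\mathrm{rel}}^{E} \to \mathcal{E}_{\mathrm{rel}}$ and $\imath_{\mathrm{rel}}^{H} : \mathcal{E}_{\mathrm{rel}}^{H} \to \mathcal{E}_{\mathrm{rel}}$ over the Brillouin torus $\BZ \simeq \T^3$; composing, $\mathrm{pr}_{\mathrm{rel}}^{H} \circ \imath_{\mathrm{rel}}^{E}$ (the fibrewise analogue of the unitary $U_{EH}$ of Section~\ref{setup:equivalence:spaces}) is an isomorphism $\mathcal{E}_{\mathrm{rel}}^{E} \to \mathcal{E}_{\mathrm{rel}}^{H}$. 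Hence the electromagnetic, electric and magnetic Bloch bundles represent one and the same isomorphism class of rank-$\abs{\mathcal{I}}$ complex vector bundles over $\T^3$.

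Second, I would invoke functoriality of the Chern classes: a bundle isomorphism induces the identity on $H^{\bullet}(\T^3;\Z)$, so $c_j(\mathcal{E}_{\mathrm{rel}}) = c_j(\mathcal{E}_{\mathrm{rel}}^{E}) = c_j(\mathcal{E}_{\mathrm{rel}}^{H})$ for all $j$. On the three-torus $H^{2j}(\T^3;\Z)$ vanishes for $j \geq 2$, so the only surviving invariant is $c_1 \in H^2(\T^3;\Z) \cong \Z^3$, and its three integer components — obtained by pairing $c_1$ with the three coordinate $2$-subtori of $\T^3$ — are precisely the entries of the matrices $\mathrm{Ch}$, $\mathrm{Ch}^{E}$ and $\mathrm{Ch}^{H}$. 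Since the first Chern classes coincide, the three triples of Chern numbers coincide, which is exactly Theorem~\ref{intro:thm:Chern_numbers_agree}.

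The one point deserving a word of care — and, in the sense of the question, the \emph{only} remaining potential obstacle — is to make sure the Berry-curvature integrals that \emph{define} $\mathrm{Ch}$, $\mathrm{Ch}^{E}$ and $\mathrm{Ch}^{H}$ in Sections~\ref{periodic:Chern_numbers:proof_by_hand}--\ref{periodic:Chern_numbers:relevant_bands} actually compute the topological first Chern class. This is the standard Chern--Weil identification: $-\tfrac{\ii}{2\pi}\,\mathrm{Tr}\bigl(P_{\mathrm{rel}}\,[\partial_{k_j}P_{\mathrm{rel}},\partial_{k_l}P_{\mathrm{rel}}]\bigr)$, and likewise its electric and magnetic analogues built from the renormalised Bloch frames, is a de Rham representative of $c_1$ of the corresponding Bloch bundle whose integral over $\BZ$ is the associated integer; and the descent from the $\Gamma^*$-equivariant bundle~\eqref{periodic:eqn:vector_bundle_R3} over $\R^3$ to the bundle over $\R^3/\Gamma^* \simeq \T^3$ is carried out exactly as in the construction of $\mathcal{E}_{\mathrm{rel}}$, so it introduces nothing new. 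If one wishes to be scrupulous, the passage between the analytic and the topological classification is licensed by the Oka principle already invoked in Section~\ref{periodic:Chern_numbers:construction_vector_bundles}.
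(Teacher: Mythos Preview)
Your proposal is correct and follows the same approach as the paper: the paper states the Corollary as an immediate consequence of Proposition~\ref{periodic:prop:equivalence_vector_bundles} together with the remark that vector bundles over $\T^3$ are characterized up to isomorphism by their three (first) Chern numbers, and you have simply spelled out this one-line observation in greater detail (functoriality of Chern classes, the cohomology of $\T^3$, and the Chern--Weil identification of the Berry-curvature integrals with $c_1$). The paper itself gives no further argument beyond the sentence preceding the Corollary, so your elaboration is a faithful expansion of what the authors leave implicit.
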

%
%
\begin{appendix}
	\section{Properties of $M_{EE}^2(k)$ and $M_{HH}^2(k)$} 
	\label{appendix:proof_properties_wave_operators}
	\begin{proof}[Lemma~\ref{periodic:lem:essential_properties_wave_operators}]
		Just like before, we will only spell out some of the proofs for the electric field operator $M_{EE}^2$, everything applies to $M_{HH}^2$ after exchanging the roles of $\eps$ and $\mu$. 
		\begin{enumerate}[(1)]
			\item This follows immediately from the selfadjointness of $M_{EE}^2$ and $M_{HH}^2$. 
			\item Because of the $\Gamma$-periodicity of $\eps$ and $\mu$, the fiber operators necessarily satisfy the equivariance condition~\eqref{periodic:eqn:equivariant_operator}. 
		
			To show analyticity, we note that $M_{EE}^2(k)$ is a quadratic polynomial in $k$, consisting of $M_{EE}^2(0)$ and lower-order operators which are infinitesimally small compared to it. Hence, we deduce from the Kato-Rellich theorem \cite[Theorem~X.12]{Reed_Simon:M_cap_Phi_2:1975} that the domain $\domain \bigl ( M_{EE}^2(k) \bigr ) = \domain \bigl ( M_{EE}^2(0) \bigr )$ is independent of $k$ as well. Combining these two factoids yields analyticity of $k \mapsto M_{EE}^2(k)$ for all $k \in \R^3$. 
			%
			\item While we could show this directly, starting from the definitions of $M_{EE}^2$ and $M_{HH}^2$, we will give a proof that exploits the connection to $\Maux$ and what has already been proven about the auxiliary Maxwell operator in the literature. 
			
			To make the notation more compact, we abbreviate the divergence-free electromagnetic, electric and magnetic subspaces with
			\begin{align*}
				\mathcal{J}(k) :& \negmedspace = \ker \, \bigl ( \Div(k) \, W \bigr )
				\subset L^2_W(\T^3,\C^6)
				, 
				\\
				\mathcal{J}^E(k) :& \negmedspace = \ker \, \bigl ( (\nabla - \ii k) \cdot \eps \bigr )
				\subset L^2_{\eps}(\T^3,\C^3)
				, 
				\\
				\mathcal{J}^H(k) :& \negmedspace = \ker \, \bigl ( (\nabla - \ii k) \cdot \mu \bigr )
				\subset L^2_{\mu}(\T^3,\C^3)
				, 
			\end{align*}
			where $\Div(k) \bigl ( \psi^E , \psi^H \bigr ) := \bigl ( (\nabla - \ii k) \, \cdot \, \psi^E \, , \, (\nabla - \ii k) \, \cdot \, \psi^H \bigr )$ is the electromagnetic divergence in the fiber labeled with Bloch momentum $k$. 
			
			According to \cite[Theorem~3.4]{DeNittis_Lein:adiabatic_periodic_Maxwell_PsiDO:2013} the resolvent $\bigl ( \Maux(k) \, \vert_{\mathcal{J}(k)} - z \bigr )^{-1}$ is compact when $z \in \C \setminus \R$ as the spectrum of the selfadjoint operator $\Maux(k)$ is necessarily a subset of the reals. Then $z^2$ cannot be an element of $\sigma \bigl ( \Maux(k)^2 \bigr )$ and we therefore deduce that also 
			\begin{align*}
				\bigl ( \Maux(k)^2 \, \vert_{\mathcal{J}(k)} - z^2 \bigr )^{-1} &= \bigl ( \Maux(k) \, \vert_{\mathcal{J}(k)} - z \bigr )^{-1} \; \bigl ( \Maux(k) \, \vert_{\mathcal{J}(k)} + z \bigr )^{-1}
				\\
				&= \bigl ( M_{EE}^2(k) \, \vert_{\mathcal{J}^E(k)} - z^2 \bigr )^{-1} \oplus \bigl ( M_{HH}^2(k) \, \vert_{\mathcal{J}^H(k)} - z^2 \bigr )^{-1}
			\end{align*}
			is compact. Here, the direct sum refers to the decomposition of 
			\begin{align*}
				L^2_W(\T^3,\C^6) = L^2_{\eps}(\T^3,\C^3) \oplus L^2_{\mu}(\T^3,\C^3) 
			\end{align*}
			into electric and magnetic subspaces. This is evidently the case if and only if both, $\bigl ( M_{EE}^2(k) \, \vert_{\mathcal{J}^E(k)} - z^2 \bigr )^{-1}$ and $\bigl ( M_{HH}^2(k) \, \vert_{\mathcal{J}^H(k)} - z^2 \bigr )^{-1}$ are compact. 
		
			Hence, the non-negative operator $M_{EE}^2(k) \, \vert_{\mathcal{J}^E(k)}$ restricted to the transversal subspace and its magnetic counterpart $M_{HH}^2(k) \, \vert_{\mathcal{J}^H(k)}$ have purely discrete spectra, consisting solely of eigenvalues of finite multiplicity accumulating at $\infty$. 
			\item Proposition~\ref{setup:prop:identification_spaces}~(1) translates fiber-wise: $(\nabla - \ii k) \times (\nabla - \ii k) \varphi = 0$ holds for all $\varphi \in \Cont^{\infty}_{\mathrm{per}}(\T^3)$, and therefore gradient fields (which form an infinite-dimensional subspace) contribute to the essential spectrum at $0$. In fact, revisiting the arguments in the proof Proposition~\ref{setup:prop:identification_spaces}~(1), we deduce that the kernel of $M_{EE}^2(k)$ consists \emph{only} of gradient fields. 
		
			To show that the essential spectra consist only of $0$, we note that due to the Helmholtz-type decomposition 
			\begin{align*}
				L^2_{\eps}(\T^3,\C^6) = \ran \, (- \ii \nabla + k) \oplus \bigl ( \ran \, (- \ii \nabla + k) \bigr )^{\perp_{\eps}}
				= \ran \, (- \ii \nabla + k) \oplus \mathcal{J}^E(k) 
			\end{align*}
			for the electric field and the fact that the spectrum of the wave operator $M_{EE}^2(k) \, \vert_{\mathcal{J}^E(k)}$ restricted to the transversal subspace $\mathcal{J}^E(k)$ is purely discrete (point~(3)). That shows the claim. 
			%
		\end{enumerate}
	\end{proof}
	%
\end{appendix}
%

\printbibliography

\end{document}